\documentclass[a4paper,fleqn]{cas-sc}

\usepackage[numbers]{natbib}

\def\tsc#1{\csdef{#1}{\textsc{\lowercase{#1}}\xspace}}
\tsc{WGM}
\tsc{QE}
\tsc{EP}
\tsc{PMS}
\tsc{BEC}
\tsc{DE}

\usepackage{amsthm}
\usepackage[ruled,vlined]{algorithm2e}
\usepackage{subfigure}
\usepackage{bm}

\newtheorem{theorem}{Theorem}[section]
\newtheorem{proposition}[theorem]{Proposition}
\newtheorem{lemma}[theorem]{Lemma}

\newtheorem{definition}[theorem]{Definition}

\newtheorem{remark}[theorem]{Remark}

\begin{document}
\let\WriteBookmarks\relax
\def\floatpagepagefraction{1}
\def\textpagefraction{.001}
\shorttitle{Revisiting Degree-Corrected Spectral Clustering: a Condition-Free Spectral Analysis and Extension}
\shortauthors{Wei Li et~al.}

\title [mode = title]{Revisiting Degree-Corrected Spectral Clustering: a Condition-Free Spectral Analysis and Extension}

\tnotetext[1]{This research was supported by the Interdisciplinary Frontier Research Project of PCL (2025QYB015), the National Key RD Program of China (2023YFA1010202), the Central Guidance on Local Science and Technology Development Fund of Fujian Province (2023L3003), the National Natural Science Foundation of China (11901094).}

\author[1]{Wei Li}
\author[2]{Xiaojian Li}
\author[3]{Meng Qin}
\cormark[1]
\ead{mengqin\_az@foxmail.com}
\author[4]{Chaorui Zhang}
\author[4]{Weixi Zhang}
\author[1]{Yiwen Zhong}
\author[2]{Jianfeng Hou}

\affiliation[1]{organization={School of Computer \& Information Science, Fujian Agriculture \& Forestry University},
city={Fuzhou},
state={Fujian},
country={China}}

\affiliation[2]{organization={School of Mathematics \& Statistics, Fuzhou University},
city={Fuzhou},
state={Fujian},
country={China}}

\affiliation[3]{organization={Department of Strategic \& Advanced Interdisciplinary Research, Pengcheng Laboratory (PCL)},
city={Shenzhen},
state={Guangdong}, 
country={China}}

\affiliation[4]{organization={Theory Lab, 2012 Labs, Huawei},
state={Hong Kong SAR}, 
country={China}}

\cortext[cor1]{Corresponding author}

\begin{abstract}
Spectral clustering is a representative graph clustering technique with strong interpretability and theoretical guarantees. Degree-corrected spectral clustering (DCSC) has emerged as the state-of-the-art for this technique. While prior studies have provided impressive theoretical insights for DCSC, their analyses typically depend on specific probabilistic frameworks (e.g., stochastic block models) and conditions. In this study, we explore an alternative condition-free analysis for the clustering quality of DCSC from a pure spectral view, without any random graph models. It gives bounds for the number of mis-clustered nodes w.r.t. the optimal partition of conductance minimization while involving quantities that indicate impacts of (\romannumeral1) degree heterogeneity and (\romannumeral2) weakness of clustering structures to the clustering quality. Inspired by graph neural networks (GNNs) and their over-smoothing effect, we propose ASCENT (Adaptive Spectral ClustEring with Node-wise correcTion), a simple yet effective extension of DCSC. Different from most DCSC methods with a constant degree correction, ASCENT follows a node-wise correction scheme. It can assign different corrections for nodes via a GNN mean aggregator. We demonstrate that (\romannumeral1) ASCENT reduces to conventional DCSC methods when encountering over-smoothing; (\romannumeral2) some early stages before over-smoothing can potentially result in better clustering quality.
\end{abstract}

\begin{keywords}
graph clustering \sep community detection \sep degree-corrected spectral clustering \sep spectral graph theory
\end{keywords}

\maketitle

\section{Introduction}\label{Sec:Intro}
Graph clustering (a.k.a. disjoint community detection) is a classic inference task that partitions nodes of a graph into densely connected groups (i.e., clusters or communities). Since the extracted clusters have been validated to correspond to some substructures of real-world systems (e.g., functional groups in protein interactions \cite{Berahmand}), many network applications are formulated as graph clustering, including protein complex detection \cite{qin2010spectral}, cellular network decomposition \cite{dai2017optimal}, and Internet traffic profiling \cite{qin2019towards}.

Spectral clustering (SC) is one of the representative techniques for this task.
As summarized in Table~\ref{Tab:Alg-Sum}, a typical SC algorithm includes the (\uppercase\expandafter{\romannumeral1}) eigen-decomposition (ED) on graph Laplacian, (\uppercase\expandafter{\romannumeral2}) arrangement of spectral embeddings, (\uppercase\expandafter{\romannumeral3}) normalization of embeddings, and (\uppercase\expandafter{\romannumeral4}) $K$Means clustering.
In Table~\ref{Tab:Alg-Sum}, ${\bf{A}}$ and ${\bf{D}}$ are the adjacency matrix and degree diagonal matrix of a graph; $K$ is a pre-set number of clusters; $\lambda_r$ denotes the $r$-th largest eigenvalue of graph Laplacian ${\bf{L}}$ (e.g., ${\bf{L}} := {\bf{D}}^{-1/2} {\bf{A}} {\bf{D}}^{-1/2}$ and ${\bf{L}} := {\bf{A}}$ for \textit{NJW} \cite{NJW} and \textit{SCORE} \cite{jin2015fast}) with ${\bf{u}}_r \in \mathbb{R}^{N}$ as the corresponding eigenvector.
Different SC algorithms usually differ in terms of the four steps. For instance, \textit{NJW}, \textit{SCORE}, and \textit{RSC} \cite{RSC} only consider eigenvectors $({\bf{u}}_1, \cdots, {\bf{u}}_K)$ w.r.t. the leading $K$ eigenvalues. Step (\uppercase\expandafter{\romannumeral2}) of \textit{SCORE+} \cite{jin2021improvements} and \textit{ISC} \cite{ISC} involves $({\bf{u}}_1, \cdots, {\bf{u}}_K, {\bf{u}}_{K+1})$, which are further reweighted by corresponding $(K+1)$ eigenvalues $(\lambda_1, \cdots, \lambda_K, \lambda_{K+1})$.
\textit{NJW}, \textit{RSC}, and \textit{ISC} adopt the row-wise $l_2$-normalization in step (\uppercase\expandafter{\romannumeral3}). \textit{SCORE} and \textit{SCORE+} use (reweighted) ${\bf{u}}_1$ to conduct column-wise normalization.

Degree-corrected spectral clustering (DCSC), which is also defined as regularized spectral clustering in some literature \cite{RSC,zhang2018understanding},
has emerged as a state-of-the-art class of SC methods, due to its effectiveness in handling the high degree heterogeneity. Typical DCSC approaches usually incorporate an additional degree correction term $\tau$ in their graph Laplacian for ED (e.g., \textit{RSC}, \textit{SCORE+}, and \textit{ISC} with different settings of $\tau$ in Table~\ref{Tab:Alg-Sum}).

\begin{table}[]\scriptsize
\caption{Summary of representative SC algorithms, where ${\bf{D}}_\tau := {\bf{D}} + \tau {\bf{I}}_N$; $\tau$ is the degree correction term in DCSC, with $\tau = \bar d$, $\delta d_{\max}$, and $\delta (d_{\min} + d_{\max})/2$ for \textit{RSC}, \textit{SCORE+}, and \textit{ISC} (e.g., $\delta = 0.1$); $\bar d$, $d_{\min}$, and $d_{\max}$ are the average, minimum, and maximum degrees.}\label{Tab:Alg-Sum}
\centering
\begin{tabular}{l|l|l|l|l}
\hline
 & Step (\uppercase\expandafter{\romannumeral1}) & Step (\uppercase\expandafter{\romannumeral2}) & Step (\uppercase\expandafter{\romannumeral3}) & Step (\uppercase\expandafter{\romannumeral4}) \\ \hline
\textbf{\textit{NJW}} & ED on ${\bf{D}}^{-1/2}{\bf{A}}{\bf{D}}^{-1/2}$ & ${\bf{F}} := [{\bf{u}}_1, \cdots, {\bf{u}}_K]$ & $i \in [1, N]$, ${{\bf{F}}_{i,:}} \leftarrow {{\bf{F}}_{i,:}}/|{{\bf{F}}_{i,:}}{|_2}$ & \multirow{5}{*}{\begin{tabular}[c]{@{}l@{}}$K$Means on \\${\bf{F}}$'s rows\end{tabular}} \\ \cline{1-4}
\textbf{\textit{SCORE}} & ED on ${\bf{A}}$ & ${\bf{F}} := [{\bf{u}}_2, \cdots, {\bf{u}}_K ]$ & $r \in [1, K-1]$, ${{\bf{F}}_{:,r}} \leftarrow {{\bf{F}}_{:,r}}/{{\bf{u}}_1}$ &  \\ \cline{1-4}
\textbf{\textit{RSC}} & \multirow{3}{*}{ED on ${\bf{D}}_{\tau}^{-1/2}{\bf{A}}{\bf{D}}_{\tau}^{-1/2}$} & ${\bf{F}} := [{\bf{u}}_1, \cdots, {\bf{u}}_K]$ & $i \in [1, N]$, ${{\bf{F}}_{i,:}} \leftarrow {{\bf{F}}_{i,:}}/|{{\bf{F}}_{i,:}}{|_2}$ &  \\ \cline{1-1} \cline{3-4}
\textbf{\textit{SCORE+}} &  & ${\bf{F}}: = [{\lambda _2}{{\bf{u}}_2}, \cdots ,{\lambda _{K + 1}}{{\bf{u}}_{K + 1}}]$ & $r \in [1, K]$, ${{\bf{F}}_{:,r}} \leftarrow {{\bf{F}}_{:,r}} / (\lambda_1 {{\bf{u}}_1})$ &  \\ \cline{1-1} \cline{3-4}
\textbf{\textit{ISC}} &  & ${\bf{F}}: = [{\lambda _1}{{\bf{u}}_1}, \cdots ,{\lambda _{K + 1}}{{\bf{u}}_{K + 1}}]$ & $i \in [1, N]$, ${{\bf{F}}_{i,:}} \leftarrow {{\bf{F}}_{i,:}}/|{{\bf{F}}_{i,:}}{|_2}$ &  \\ \hline
\end{tabular}
\end{table}

\begin{table}[]\scriptsize
\caption{Summary of representative theoretical analysis on clustering quality of DCSC and the comparison to ours.}\label{Tab:Ana-Sum}
\centering
\begin{tabular}{l|l|l|l}
\hline
 & \textbf{\begin{tabular}[c]{@{}l@{}}Random \\ Graph  Models\end{tabular}} & \textbf{\begin{tabular}[c]{@{}l@{}}Condition\\-Free\end{tabular}} & \textbf{Quality Bounds} \\ \hline
Coja-Oghlan (2010) \cite{coja2010graph} & Any Models & No & Any random graph's recovery error \\ \hline
Chaudhuri et al. (2012) \cite{Chaudhuri} & EPP Model & No & EPP's optimal seperation \\ \hline
Qin and Rohe (2013) \cite{RSC} & DCSBM & No & \multicolumn{1}{l}{\multirow{3}{*}{Mis-clustered rate w.r.t. SBM's ground-truth}} \\ \cline{1-3}
Amini et al. (2013) \cite{amini2013pseudo} & DCSBM & No & \multicolumn{1}{l}{} \\ \cline{1-3}
Cucuringu et al. (2021) \cite{cucuringu2021regularized} & Signed SBM & No & \multicolumn{1}{l}{} \\ \hline
Joseph and Yu (2016) \cite{joseph2016impact} & DCSBM & No & \multicolumn{1}{l}{\multirow{3}{*}{Error rate w.r.t. SBM's ground-truth}} \\ \cline{1-3}
Qing and Wang (2020) \cite{ISC} & DCSBM & No & \multicolumn{1}{l}{} \\ \cline{1-3}
Jin et al. (2021) \cite{jin2021improvements} & DCSBM & No & \multicolumn{1}{l}{} \\ \hline
Dall’Amico et al. (2019) \cite{dall2019revisiting} & DCSBM & No & Overlap w.r.t. SBM's bisection ground-truth \\ \hline
Cohen-Addad et al. (2022) \cite{cohen2022community} & DHSBM & No & Recovery probability w.r.t. SBM's   ground-truth \\ \hline
\textbf{Ours} & \textbf{w/o any models} & \textbf{Yes} & \#mis-clustered nodes w.r.t. optimal graph-cut \\ \hline
\end{tabular}
\end{table}

\textbf{Related Analysis for the Clustering Quality of DCSC}.
In the past few decades, a series of SC methods have been proposed. Ding et al.~\cite{ding2024survey} provided an overview of related research.
Table~\ref{Tab:Ana-Sum} summarizes some representative theoretical results about the clustering quality of DCSC.

For instance, Chaudhuri et al.~\cite{Chaudhuri} proposed a DCSC method for graphs drawn from an extended planted partition (EPP) model \cite{condon2001algorithms} and examined its quality guarantees w.r.t. the optimal separation of EPP.
Qin and Rohe~\cite{RSC} analyzed the quality bound of \textit{RSC} in terms of mis-clustered rate using the degree-corrected stochastic blockmodel (DCSBM) \cite{karrer2011stochastic} and provided guidance on the choice of $\tau$.
Also based on DCSBM, Qing et al.~\cite{ISC} and Jin et al~\cite{jin2021improvements} respectively proposed \textit{ISC} and \textit{SCORE+}, which have theoretical guarantees about the Hamming error rate w.r.t. ground-truth.
Cohen-Addad et al.~\cite{cohen2022community} introduced the degree-heterogeneous stochastic blockmodel (DHSBM) and proved that there exists a linear-time spectral algorithm with bounded recovery probability for DHSBM.

In summary, most related theoretical studies rely on assumptions of random graph models (e.g., DCSBM). They usually fit the adjacency matrix or graph Laplacian using a certain random graph model (e.g., $\mathcal{A}: = {\bf{\Theta ZB}}{{\bf{Z}}^T}{\bf{\Theta }}$ \cite{RSC,ISC} with $\{ {\bf{\Theta}}, {\bf{Z}}, {\bf{B}} \}$ defined in DCSBM) and give bounds related to such a model (e.g., mis-clustered rate or error w.r.t. the ground-truth of DCSBM in Table~\ref{Tab:Ana-Sum}).

Moreover, most of them presuppose some additional conditions or assumptions (e.g., $[K\ln (4N/\epsilon )/(\bar d_{\min}  + \tau )]^{1/2} \le {\lambda _K}/(8\sqrt 3 )$ and ${\bar d_{\min }} + \tau  > 3\ln (4N/\varepsilon )$ for \textit{RSC} in \cite{RSC}; ${[\ln (4N/\varepsilon )/({{\bar d}_{\min }} + \tau )]^{1/2}} \le {\lambda _K}/8\sqrt 3$ and ${{\bar d}_{\min }} + \tau  > 3\ln (4N/\varepsilon )$ for \textit{ISC} in \cite{ISC}), which may fail in some extreme cases (e.g., for a large number of nodes $N$) and thus result in trivial theoretical results.

\textbf{Present Analysis \& Extension}.
SC is a typical approximated algorithm for the NP-hard combinatorial optimization of graph-cut minimization \cite{von2007tutorial}. From this perspective, some prior work \cite{peng,Mizutani,macgregor2022tighter} analyzed vanilla SC (e.g., \textit{NJW}) using the spectral graph theory.
Motivated by these studies, we consider \textbf{an alternative condition-free analysis for the clustering quality of DCSC from a pure spectral view instead of using any random graph models}.

Different from existing analyses with bounds related to random graph models (e.g., DCSBM in Table~\ref{Tab:Ana-Sum}), we provide bounds for the number of mis-clustered nodes w.r.t. the optimal solution to average conductance minimization.
In contrast to early spectral-based studies on vanilla SC \cite{peng,Mizutani}, our analysis also involves quantities about (\romannumeral1) degree heterogeneity and (\romannumeral2) weakness of clustering structures, which can reveal impacts of these two aspects to the clustering quality of DCSC.

Inspired by graph neural networks (GNNs), we propose ASCENT (\underline{A}daptive \underline{S}pectral \underline{C}lust\underline{E}ring with \underline{N}ode-wise correc\underline{T}ion), a simple yet effective extension of DCSC.
Instead of using a constant correction $\tau$ for all nodes (e.g., \textit{RSC}, \textit{SCORE+}, and \textit{ISC} in Table~\ref{Tab:Alg-Sum}), ASCENT adopts a node-wise correction scheme, where nodes $\{ v_i \}$ can be assigned with different corrections $\{ \tau_i \}$. This scheme iteratively updates $\{ \tau_i \}$ via a GNN mean aggregator, where nodes $\{ v_i \}$ with more common neighbors (e.g., in the same cluster) are more likely to have closer $\{ \tau_i \}$. Consistent with the over-smoothing effect of GNNs, $\{ \tau_i \}$ finally converge to a constant. In this case, ASCENT reduces to conventional DCSC methods. Our experiments demonstrate that some early stages of this updating procedure (i.e., before over-smoothing) can potentially result in better clustering quality.

We also extend our analysis to a unified framework that provides bounds for a series of SC methods. Based on this framework, we try to answer the following three questions.
\begin{itemize}
    \item \textbf{Q1}: Can DCSC achieve a tighter bound than vanilla SC?
    \item \textbf{Q2}: Can \textit{ISC} achieve a tighter bound than \textit{RSC}?
    \item \textbf{Q3}: Can ASCENT achieve a tighter bound than \textit{ISC}?
\end{itemize}
Answering \textbf{Q1}-\textbf{Q3} may help interpret why a DCSC method outperforms others. Whereas, most related studies tend to focus on a single algorithm and may not be able to answer these questions explored in this paper.

The remainder of this paper is organized as follows.
Section~\ref{Sec:Rel} reviews related work. Formal problem statements and preliminaries of this study are given in Section~\ref{Sec:Prob}. Section~\ref{Sec:Ana} presents our condition-free spectral analysis for the clustering quality of DCSC, while Section~\ref{Sec:Meth} elaborates on the extended ASCENT algorithm. Empirical experiments are further described in Section~\ref{Sec:Exp}. Finally, Section~\ref{Sec:Con} concludes this paper and discusses possible future research directions to tackle limitations of this study.

\section{Related Work}\label{Sec:Rel}

\begin{table}[]\scriptsize
\caption{Summary of other theoretical studies in addition to clustering quality of DCSC and the comparison to ours.}\label{Tab:Ana-Sum-Other}
\centering
\begin{tabular}{l|l|l|p{6.5cm}}
\hline
 & \textbf{\begin{tabular}[c]{@{}l@{}}Random\\ Graph  Models\end{tabular}} & \textbf{\begin{tabular}[c]{@{}l@{}}Condition-Free\end{tabular}} & \textbf{Main (Theoretical) Results} \\ \hline
Lee et al. (2014) \cite{lee2014multiway} & w/o any models & Yes & New \textbf{spectrum} \& \textbf{topological properties} of graph Laplacian \\ \hline
Krzakala et al. (2013) \cite{krzakala2013spectral} & DCSBM & {No theoretical analysis} & \textbf{Spectrum properties} \& \textbf{detectability} of non-backtracking  matrix in DCSBM \\ \hline
Gulikers et al. (2017) \cite{gulikers2017non} & DCSBM & No & \textbf{Spectrum properties} \& \textbf{detectability} of non-backtracking matrix in DCSBM \\ \hline
Dall’Amico et al. (2020) \cite{dall2020optimal} & DCSBM & No & \textbf{Relation} b/w Bethe-Hessian \& graph Laplacian, and \textbf{detectability} in DCSBM \\ \hline
Dall’Amico et al. (2021) \cite{dall2021unified} & DCSBM & No & \textbf{Spectrum properties} \& \textbf{detectability} of Bethe-Hessian matrix in DCSBM \\ \hline
Dall’Amico et al. (2021) \cite{dall2021nishimori} & ER Graph & No & \textbf{Relation} b/w Nishimori temperature \& Bethe free energy on ER graphs \\ \hline
\textbf{Ours} & \textbf{w/o any models} & \textbf{Yes} & \textbf{Clustering quality} of a set of DCSC algorithms \\ \hline
\end{tabular}
\end{table}

\subsection{Theoretical Analyses of (Degree-Corrected) Spectral Clustering}

In addition to the theoretical analyses on clustering quality shown in Table~\ref{Tab:Ana-Sum}, there are studies focusing on other aspects about (DC)SC as summarized in Table~\ref{Tab:Ana-Sum-Other}.
For instance, Lee et al. \cite{lee2014multiway} provided a theoretical justification of using leading $K$ eigenvectors $[{\bf{u}}_1, \cdots, {\bf{u}}_K]$ for SC and investigated associated properties about graph topology.
Based on the detectability analyses on DCSBM \cite{decelle2011asymptotic,massoulie2014community}, several studies try to reveal inherent spectrum properties (e.g., distribution of leading eigenvalues) of the non-backtracking matrix \cite{krzakala2013spectral,gulikers2017non} and Bethe-Hessian matrix \cite{dall2020optimal,dall2021unified,dall2021nishimori}, which were further used to design new (DC)SC algorithms.

While we acknowledge prior studies summarized in Table 3, which explore spectral properties of specific matrices or detectability under certain random graph models, we note that their focus differs from the main objective of this work (i.e., the analysis of quality bounds for DCSC). Consequently, a direct quantitative comparison between these prior results and our theoretical bounds may not be readily feasible.
Furthermore, most of them still rely on specific conditions and random graph models (e.g., DCSBM).
In contrast, our analysis offers a distinct perspective by providing a condition-free bound derived purely from spectral theory, without recourse to any underlying random graph model.

\subsection{Applications of Deep Graph Clustering}

In recent years, several deep graph clustering (DGC) methods have been proposed as reviewed by Liu et al.~\cite{yue2022survey} and Su et al.~\cite{su2022comprehensive}.
\textit{GraphEncoder} \cite{tian2014learning} and \textit{DNR} \cite{yang2016modularity} are early studies that learn low-dimensional community-preserving representations (or embeddings) by reconstructing topology-related features (e.g., normalized adjacency matrices and modularity matrices) via a deep auto-encoder.
\textit{THESAURUS} \cite{deng2025thesaurus} enhances the ability of learned embeddings to capture implicit graph clustering structures by using (\romannumeral1) context-aware semantic prototypes, (\romannumeral2) a cross-view assignment prediction objective, and (\romannumeral3) Gromov-Wasserstein Optimal Transport.

For clustering on large-scale graphs, Devvrit et al.~\cite{devvrit2022s3gc} proposed \textit{S3GC}, which learns scalable community-preserving embeddings via GNN-based contrastive learning. \textit{MAGI} \cite{liu2024revisiting} extends \textit{S3GC} to incorporate the modularity maximization objective \cite{newman2006modularity} from the view of contrastive learning.
A downstream clustering algorithm (e.g., $K$Means) is usually applied to the learned embeddings to derive a feasible clustering result for these approaches.
\textit{DGCluster} \cite{bhowmick2024dgcluster} also uses the modularity maximization objective to optimize GNN, which outputs community-preserving embeddings, but derives final clustering results using BIRCH \cite{zhang1996birch}.

\textit{ClusterNet} \cite{wilder2019end}, \textit{MinCutPool} \cite{bianchi2020spectral}, and \textit{DMoN} \cite{tsitsulin2023graph} adopt a deep end-to-end structure, which contains a GNN and an output module (e.g., a multi-layer perceptron for deriving clustering results), to fit classic graph clustering objectives (e.g., graph-cut minimization and modularity maximization).
\textit{SDCN} \cite{bo2020structural} combines the deep auto-encoder with GNN and uses a dual self-supervised mechanism to unify these two deep architectures.

Most of the aforementioned methods, especially those based on GNNs, were originally designed for attributed graphs and may not fully consider the complicated correlations between graph topology and attributes (see Section~\ref{Sec:Prob} for further discussions).
Our empirical experiments (see Section~\ref{Sec:Exp}) demonstrated that when attributes are unavailable, these state-of-the-art DGC methods may not outperform SC approaches.
Different from SC, most related studies about DGC may also lack interpretability and theoretical guarantees.

\section{Problem Statements \& Preliminaries}\label{Sec:Prob}
In general, an undirected and unweighted simple graph can be represented as a 2-tuple $G := (V,E)$, where $V := \{ v_1, \cdots, v_N\}$ and $E := \{ (v_i, v_j)| v_i, v_j \in V\}$ are sets of nodes and edges. One can use an adjacency matrix ${\bf{A}} \in \{ 0, 1\} ^ {N \times N}$ to describe the topology of $G$, where ${\bf{A}}_{ij} = {\bf{A}}_{ji} = 1$ if $(v_i, v_j) \in E$ and ${\bf{A}}_{ij} = {\bf{A}}_{ji} = 0$ otherwise. Let ${\bf{D}} := {\mathop{\rm diag}\nolimits} (d_1, d_2, \cdots, d_N)$ be the degree diagonal matrix of $G$, with $d_i := \sum\nolimits_j {{{\bf{A}}_{ij}}}$ as the degree of node $v_i$.

Given a graph $G$ and a number of clusters $K$, \textbf{graph clustering} (a.k.a. \textbf{disjoint community detection}) aims to partition $V$ into $K$ disjoint subsets $(C_1, \cdots, C_K)$, defined as clusters or communities, such that (\romannumeral1) within each cluster the edge connections between nodes are dense but (\romannumeral2) between clusters the connections are relatively loose.

We follow the classic problem statement of SC, where graph topology is the only available information source. Different from most DGC methods \cite{bo2020structural,bianchi2020spectral,tsitsulin2023graph,bhowmick2024dgcluster}, our analysis does not consider graph attributes, due to the complicated correlations between topology and attributes validated by prior studies \cite{qin2018adaptive,wang2020gcn,qin2021dual}. Concretely, the simple integration of attributes may bring inconsistent features or noise that lead to quality decline compared with only considering topology, although attributes may sometimes provide complementary information for better clustering quality.

For a subset $C\subseteq V$, let $E(C, V \backslash C) := \{(v_i, v_j)\in E: v_i \in C, v_j \in V \backslash C \}$ be the set of edges across $C$ and $V \backslash C$. Let $\mu(C) := \sum_{v_i \in C} d_i$ be the \textbf{volume} of $C$.
The \textbf{conductance} of $C$ is defined as $\phi (C): = |E(C,V\backslash C)|/\mu (C)$.

\begin{definition}[Average Conductance Minimization, \cite{Mizutani}]\label{Def:Cond-Min}
Let $U$ be the collection of all possible $K$-way partitions of $V$ in $G$. The \textbf{average conductance minimization} objective\footnote{It is equivalent to the \textbf{normalized cut minimization} objective that $\min[\phi (C_1) + \cdots + \phi (C_K)]$.} is defined as
\begin{equation}\label{Eq:Cond}
    {\bar \phi _K}(G): = {\min}_{({C_1}, \cdots ,{C_K}) \in U}~ [\phi ({C_1}) +  \cdots  + \phi ({C_K})]/K.
\end{equation}
It finds a partition $(C_1, \cdots, C_K)$ that achieves the \textbf{minimal average conductance} $\bar{\phi}_K(G)$. We further define that a partition $(C_1, \cdots, C_K)$ is $\bar{\phi}_K(G)$-\textbf{optimal} if its average conductance achieves $\bar{\phi}_K(G)$.
\end{definition}

For the ED on graph Laplacian (i.e., step (\uppercase\expandafter{\romannumeral1}) of Table~\ref{Tab:Alg-Sum}), let $\lambda_r$ and ${\bf{u}}_r \in \mathbb{R}^{N}$ denote the $r$-th largest eigenvalue and corresponding eigenvector.
Considering the normalized graph Laplacian ${\bf{D}}^{-1/2} {\bf{A}} {\bf{D}}^{-1/2}$, we have $1 = \lambda_1 \ge \cdots \ge \lambda_N \geq -1$\footnote{Some literature \cite{von2007tutorial,qin2023towards,gao2023raftgp} defines the normalized graph Laplacian as ${\bf{I}}_N - {\bf{D}}^{-1/2}{\bf{A}}{\bf{D}}^{-1/2}$, which equivalently has the eigenvalues of $0 = 1 - {\lambda _1} \le  \cdots  \le 1 - {\lambda _N} \le 2$.} and ${\bf{u}}_r^T{{\bf{u}}_t} = 0$ ($\forall r \ne t$). Moreover, we have $1 > \lambda_1 \ge \lambda_2 \ge \cdots \ge \lambda_N$ for the regularized graph Laplacian ${\bf{D}}_{\tau}^{-1/2} {\bf{A}} {\bf{D}}_{\tau}^{-1/2}$. In step (\uppercase\expandafter{\romannumeral2}) of Table~\ref{Tab:Alg-Sum}, we arrange the (reweighted) eigenvectors as a matrix ${\bf{F}} \in \mathbb{R}^{N \times K}$ (or $\mathbb{R}^{N \times (K+1)}$) via the column-wise concatenation. We define the $i$-th row ${\bf{F}}_{i,:}$ of ${\bf{F}}$ as the \textbf{spectral embedding} of node $v_i$. Most SC algorithms apply normalization to ${\bf{F}}$ (i.e., step (\uppercase\expandafter{\romannumeral3}) in Table~\ref{Tab:Alg-Sum}). We denote the \textbf{normalized spectral embeddings} as ${\bf{\tilde F}}$.

\begin{definition}[Clustering Cost, \cite{peng}]\label{Def:KMeans}
Given vectors $({\bf{w}}_1, \cdots, {\bf{w}}_K)$,  the \textbf{distance} between a partition $(C_1, \cdots, C_K)$ and $({\bf{w}}_1, \cdots, {\bf{w}}_K)$ is defined as
\begin{equation}\label{Eq:Dist}
    g({C_1} \cdots ,{C_K};{{\bf{w}}_1}, \cdots ,{{\bf{w}}_K}): = \sum\nolimits_{r = 1}^K {\sum\nolimits_{{v_i} \in {C_r}} {{d_i}||{{{\bf{\tilde F}}}_{i,:}} - {{\bf{w}}_r}||_2^2} },
\end{equation}
where we map each node $v_i$ to $d_i$ identical points in the embedding space.
The \textbf{clustering cost} of a partition $(C_1, \cdots, C_K)$ is then defined as
\begin{equation}\label{Eq:COST}
    {\mathop{\rm COST}\nolimits} ({C_1, \cdots, C_K}): = {\min }_{({{\bf{c}}_1, \cdots, {\bf{c}}_K})}~g({C_1, \cdots, C_K} ;{{\bf{c}}_1, \cdots, {\bf{c}}_K}),
\end{equation}
which finds a set of centers $({\bf{c}}_1, \cdots, {\bf{c}}_K)$ with the minimum distance to $(C_1, \cdots, C_K)$. Based on ${\mathop{\rm COST}\nolimits} ({C_1}, \cdots ,{C_K})$, we define the \textbf{optimal clustering cost} as
\begin{equation}\label{Eq:OPT}
    {\mathop{\rm OPT}\nolimits} : = {\min }_{({C_1}, \cdots ,{C_K}) \in U}~{\mathop{\rm COST}\nolimits} ({C_1, \cdots, C_K}).
\end{equation}
\end{definition}
As claimed in \cite{peng}, this definition allows us to bound the overlap between (\romannumeral1) feasible clustering results and (\romannumeral2) optimal ones, which is used in our analysis. By assuming that for each node $v_i \in V$, all the $d_i$ copies of ${\bf{\tilde F}}_{i,:}$ are contained in one of $\{ C_1, \cdots, C_K \}$, (\ref{Eq:Dist}) reduces to the standard distance in $K$Means.

\section{Proposed Analysis on Clustering Quality of DCSC: A Spectral View}\label{Sec:Ana}
Inspired by prior spectral-based studies on vanilla SC \cite{peng,Mizutani,macgregor2022tighter}, we propose a condition-free analysis for the clustering quality of DCSC from a pure spectral view instead of using random graph models.

We adopt \textit{ISC} (see Table~\ref{Tab:Alg-Sum}) as an example for analysis because it has a more generic format involving the reweighted $(K+1)$ leading eigenvectors $[ {\lambda_1} {{\bf{u}}_1}, \cdots, {\lambda_{K+1}} {{\bf{u}}_{K+1}} ]$. Whereas, other DCSC methods have simpler formats (e.g., only $[{\bf{u}}_1, \cdots, {\bf{u}}_K]$ for \textit{RSC}).
We then reduce this analysis to other algorithms (e.g., \textit{NJW}, \textit{SCORE+}, and \textit{RSC}), which form a unified framework with theoretical bounds for a series of (DC)SC approaches.

\subsection{Basic Theoretical Analysis on \textit{ISC}}

In contrast to prior work \cite{peng,Mizutani,macgregor2022tighter} on vanilla SC, our analysis also aims to reveal impacts of (\romannumeral1) degree heterogeneity and (\romannumeral2) weakness of clustering structures to DCSC. We first introduce a quantity measuring both aspects:
\begin{equation}\label{Eq:Psi}
    \Psi_{\rm{ISC}} : = m_K^{ - 1}[1 - {\tilde d} (1 - {{\bar \phi} _K}(G))] = {(1 - {\lambda _{K + 2}})^{ - 1}}[1 - {d_{\min }}(1 - {{\bar \phi }_K}(G))/({d_{\max }} + \tau )],
\end{equation}
with $m_K := 1 - \lambda_{K+2}$ and ${\tilde d}: = {d_{\min }}/({d_{\max }} + \tau )$.
In (\ref{Eq:Psi}), ${\tilde d}$ measures the degree heterogeneity, where a small ${\tilde d}$ (i.e., a large difference between $d_{\min}$ and $d_{\max}$) indicates high degree heterogeneity. Since ${\bar \phi _K}(G) \le 1$, \textit{higher degree heterogeneity (i.e., a smaller ${\tilde d}$) will lead to a larger $\Psi_{\rm{ISC}}$}.

As validated in \cite{jin2021improvements}, when clustering structures of a graph (with $K$ clusters) are weak, ${{\tilde m}_K}: = 1 - {\lambda _{K + 1}}/{\lambda _K}$ is small, consistent with a small $|\lambda_K - \lambda_{K+1}|$ by the \textbf{eigen-gap property} \cite{von2007tutorial}. Since $1 > {\lambda _K} \ge {\lambda _{K + 1}} \ge {\lambda _{K + 2}}$, ${m_K} = 1 - {\lambda _{K + 2}} \ge 1 - {\lambda _{K + 1}} \ge 1 - {\lambda _{K + 1}}/{\lambda _K} = {\tilde m_K}$ and thus
\begin{equation*}
    {\Psi _{{\rm{ISC}}}} = {m_K^{-1}}[1 - {\tilde d}(1 - {\bar \phi _K}(G))] \le {\tilde m}_K^{-1}[1 - {\tilde d}(1 - {\bar \phi _K}(G))].
\end{equation*}
\textit{Weaker clustering structures (i.e., a smaller ${\tilde m}_K$) indicate a larger upper bound of $\Psi_{\rm{ISC}}$}.

\begin{theorem}\label{Th:Struc}
    Let $({\hat S}_1, \cdots, {\hat S}_K)$ be a $\bar{\phi}_K(G)$-\textbf{optimal} partition, with the partition membership encoded by ${\bf{G}} \in \mathbb{R}^{N \times K}$. ${\bf{G}}_{ir} = [d_i / \mu({\hat S}_r)]^{1/2}$ if $v_i \in {\hat S}_r$ and ${\bf{G}}_{ir} = 0$ otherwise.
    ${\bf{ F}} := [\lambda_1 {\bf{u}}_1, \cdots, \lambda_{K+1} {\bf{u}}_{K+1}] \in \mathbb{R}^{N \times (K+1)}$ is the \textbf{spectral embedding} of ISC (i.e., step (\uppercase\expandafter{\romannumeral2}) of Table~\ref{Tab:Alg-Sum}). There exists an orthogonal matrix ${\bf{O}}:= [{\bf{o}}_1, \cdots, {\bf{o}}_K] \in \mathbb{R}^{(K+1) \times K}$ s.t.
    \begin{equation}\label{eq0}
    ||{\bf{FO}} - {\bf{G}}|{|_F} \le \left\{ {\begin{array}{*{20}{l}}
    {(1 + {\lambda _1})\sqrt {K{\Psi _{{\rm{ISC}}}}},~K{\Psi _{{\rm{ISC}}}} \le 1}\\
    {(1 + {\lambda _1})K{\Psi _{{\rm{ISC}}}},~K{\Psi _{{\rm{ISC}}}} > 1}
    \end{array}} \right..
    \end{equation}
\end{theorem}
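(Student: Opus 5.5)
The plan follows the structure theorem of Peng et al.\ and Mizutani, adapted to the regularized Laplacian ${\bf{L}}_\tau := {\bf{D}}_\tau^{-1/2}{\bf{A}}{\bf{D}}_\tau^{-1/2}$. We work column by column with the indicator vectors ${\bf{g}}_r := {\bf{G}}_{:,r}$, which are orthonormal because $\sum_{v_i\in\hat S_r} d_i/\mu(\hat S_r)=1$. The argument has three steps: a Rayleigh-quotient bound, an energy-outside-the-top-subspace bound, and a rescaling by the eigenvalues.

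\textbf{Step 1 (Rayleigh quotient).} We lower bound $\sum_r {\bf{g}}_r^T{\bf{L}}_\tau{\bf{g}}_r$. Directly,
\[
{\bf{g}}_r^T{\bf{L}}_\tau{\bf{g}}_r=\sum_{i,j\in\hat S_r}{\bf{A}}_{ij}\frac{\sqrt{d_id_j}}{\mu(\hat S_r)\sqrt{(d_i+\tau)(d_j+\tau)}}.
\]
Since $\sqrt{d_id_j/((d_i+\tau)(d_j+\tau))}\ge d_{\min}/(d_{\max}+\tau)=\tilde d$, this is at least $\tilde d\,(\mu(\hat S_r)-|E(\hat S_r,V\backslash\hat S_r)|)/\mu(\hat S_r)=\tilde d(1-\phi(\hat S_r))$. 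Summing over $r$ gives $\sum_r {\bf{g}}_r^T{\bf{L}}_\tau{\bf{g}}_r\ge K\tilde d(1-\bar\phi_K(G))$.

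\textbf{Step 2 (energy outside the top $K+1$ eigenvectors).} Expand ${\bf{g}}_r=\sum_t \alpha_{rt}{\bf{u}}_t$ with $\sum_t\alpha_{rt}^2=1$. Let $\beta_r:=\sum_{t\ge K+2}\alpha_{rt}^2$ be the mass outside the span of ${\bf{u}}_1,\dots,{\bf{u}}_{K+1}$. Because $\lambda_t<1$ for $t\le K+1$ and $\lambda_t\le\lambda_{K+2}$ for $t\ge K+2$, we get ${\bf{g}}_r^T{\bf{L}}_\tau{\bf{g}}_r\le (1-\beta_r)+\lambda_{K+2}\beta_r=1-m_K\beta_r$. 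Combining with Step~1 yields
\[
\sum_r\beta_r\le m_K^{-1}K[1-\tilde d(1-\bar\phi_K(G))]=K\Psi_{\rm ISC}.
\]
So the projection $\hat{\bf{g}}_r$ of ${\bf{g}}_r$ onto the top-$(K+1)$ eigenspace satisfies $\sum_r\|{\bf{g}}_r-\hat{\bf{g}}_r\|^2\le K\Psi_{\rm ISC}$.

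\textbf{Step 3 (constructing ${\bf{O}}$ and rescaling).} Write ${\bf{U}}:=[{\bf{u}}_1,\dots,{\bf{u}}_{K+1}]$ and $\Lambda:={\rm diag}(\lambda_1,\dots,\lambda_{K+1})$, so ${\bf{F}}={\bf{U}}\Lambda$. Take ${\bf{O}}$ to be the polar (orthogonal) factor of ${\bf{U}}^T{\bf{G}}\in\mathbb{R}^{(K+1)\times K}$; equivalently, ${\bf{O}}$ solves the orthogonal Procrustes problem $\min_{{\bf{O}}^T{\bf{O}}={\bf{I}}_K}\|{\bf{UO}}-{\bf{G}}\|_F$. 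The singular values of ${\bf{U}}^T{\bf{G}}$ lie in $[0,1]$ and have squared sum $K-\sum_r\beta_r$. The standard Procrustes estimate therefore gives $\|{\bf{UO}}-{\bf{G}}\|_F^2\le 2\sum_r\beta_r$. A refined bound, via $\sum_r(1-\sigma_r)\le\sum_r(1-\sigma_r^2)$, gives the $\sqrt{K\Psi_{\rm ISC}}$ regime up to constants.

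We then write ${\bf{FO}}-{\bf{G}}={\bf{U}}(\Lambda-{\bf{I}}){\bf{O}}+({\bf{UO}}-{\bf{G}})$. To reach the factor $(1+\lambda_1)$, we intend to bound the first term by relating $\|(\Lambda-{\bf{I}}){\bf{U}}^T{\bf{G}}\|_F$ to $\sum_r(1-{\bf{g}}_r^T{\bf{L}}_\tau{\bf{g}}_r)$. That sum is itself at most $K[1-\tilde d(1-\bar\phi_K)]\le K\Psi_{\rm ISC}$, since $m_K\le 2$ needs care and otherwise the bound can be absorbed. The factor $(1+\lambda_1)$ is then collected from $\|\Lambda\|\le\lambda_1$ plus the identity contribution.

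Finally, we split the cases. When $K\Psi_{\rm ISC}\le1$ we keep the square-root bound. When $K\Psi_{\rm ISC}>1$, $\sqrt{x}\le x$ gives the linear bound; alternatively, the trivial bound $\|{\bf{FO}}-{\bf{G}}\|_F\le\|{\bf{F}}\|_F+\|{\bf{G}}\|_F\le(1+\lambda_1)\sqrt{K}$ can be used.

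The main obstacle is Step~3. Negative or small $\lambda_t$ for $t\le K+1$ make ${\bf{F}}$ far from an isometry, so we must show that the eigenvalue reweighting costs only a factor $(1+\lambda_1)$ rather than something depending on $\min_t|\lambda_t|$. The first thing we will try is the triangle inequality $\|{\bf{FO}}-{\bf{G}}\|_F\le\|\Lambda\|_2\|{\bf{UO}}-{\bf{G}}\|_F+\|(\Lambda-{\bf{I}}){\bf{U}}^T{\bf{G}}\|$-type splitting described above. If that fails, we will try to bound $\|{\bf{F}}{\bf{O}}-\lambda_1{\bf{G}}\|_F+(1-\lambda_1)\|{\bf{G}}\|$ and accept constants of the stated form.
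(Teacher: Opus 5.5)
Your Steps 1 and 2 are correct and coincide with the first half of the paper's proof: first the bound $\mathbf g_r^T(\mathbf I_N-\mathbf L_\tau)\mathbf g_r\le 1-\tilde d(1-\phi(\hat S_r))$, then the tail estimate $\sum_r\beta_r\le K\Psi_{\rm ISC}$. Your coefficients $\mathbf u_t^T\mathbf g_r$ are cleaner than the paper's $h_{ir}\lambda_i$, which tacitly needs $\lambda_i\neq0$.

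\textbf{The gap is in Step 3.} It is a plan rather than an argument, and its ingredients do not produce the constant $(1+\lambda_1)$.
First, the claim $K[1-\tilde d(1-\bar\phi_K(G))]\le K\Psi_{\rm ISC}$ is equivalent to $m_K\le1$, i.e.\ $\lambda_{K+2}\ge0$, which is not guaranteed. What Steps 1--2 actually give is $\sum_r(1-\mathbf g_r^T\mathbf L_\tau\mathbf g_r)\le m_KK\Psi_{\rm ISC}$.
Second, in the split $\mathbf{FO}-\mathbf G=\mathbf U(\Lambda-\mathbf I)\mathbf O+(\mathbf{UO}-\mathbf G)$ nothing is multiplied by $\|\Lambda\|_2=\lambda_1$. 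The first term is governed by the factors $1-\lambda_t\le m_K$. The part $\mathbf G_\perp$ of $\mathbf G$ outside ${\rm range}(\mathbf U)$ is not scaled by $\Lambda$ at all, so your term $\|\Lambda\|_2\|\mathbf{UO}-\mathbf G\|_F$ is not valid; it fails as $\lambda_1\to0$ whenever $\mathbf G_\perp\neq0$.
Bounding the pieces separately and adding them gives $C\sqrt{K\Psi_{\rm ISC}}$ with $C$ roughly $\sqrt2+m_K$. That is the right order, but not the statement, and the statement has no slack for such losses. For $G=K_n$ ($n\ge3$), $K=1$ and $\tau\to\infty$, one has $\|\mathbf{FO}-\mathbf G\|_F\ge1-\lambda_1\to1$ while $(1+\lambda_1)\sqrt{K\Psi_{\rm ISC}}\to1$.

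\textbf{How the paper handles Step 3, and a caveat.} The paper avoids any $(\Lambda-\mathbf I)$ term by expanding the projection $\hat{\mathbf F}=\mathbf U\mathbf U^T\mathbf G$ in the reweighted basis: $\hat{\mathbf F}=\mathbf{FH}$ with $\mathbf H=\Lambda^{-1}\mathbf U^T\mathbf G$. It takes $\mathbf O$ to be the polar factor of $\mathbf H$, not of $\mathbf P:=\mathbf U^T\mathbf G$, and splits $\mathbf{FO}-\mathbf G=\mathbf F(\mathbf O-\mathbf H)+(\hat{\mathbf F}-\mathbf G)$. The second piece is your Step 2. The first is bounded, with $\Sigma$ the singular values of $\mathbf H$ and $\mathbf Z:=\hat{\mathbf F}-\mathbf G$, by $\lambda_1\|\mathbf I-\Sigma\|_F\le\lambda_1\|\mathbf I-\Sigma^2\|_F\le\lambda_1\|\mathbf Z^T\mathbf Z\|_F\le\lambda_1K\Psi_{\rm ISC}$. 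This bound is linear in $K\Psi_{\rm ISC}$, which is where the case split comes from.

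Do not import that step as it stands. $\mathbf Z^T\mathbf Z=\mathbf I_K-\mathbf P^T\mathbf P$ controls the singular values of $\mathbf P=\Lambda\mathbf H$, not of $\mathbf H$. Passing to $\mathbf H$ needs $\sigma_r(\mathbf H)\le1$, which fails exactly in the regime you flagged. In the $K_n$ example, $\mathbf H=(1/\lambda_1,0)^T$ and $\|\Lambda(\mathbf O-\mathbf H)\|_F=1-\lambda_1>\lambda_1K\Psi_{\rm ISC}$.

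\textbf{Closing your own route.} Your choice of $\mathbf O$ does work once Step 2 keeps the $\lambda_t$ instead of bounding them by $1$. Write $\mathbf P=\mathbf VS\mathbf W^T$, $\mathbf O=\mathbf V\mathbf W^T$ and $\epsilon:=\sum_r\beta_r$. Steps 1--2, together with orthogonality of ${\rm range}(\mathbf U)$ and $\mathbf G_\perp$, give
\[
\sum_{t\le K+1}(1-\lambda_t)\|\mathbf P_{t,:}\|_2^2+m_K\epsilon\le m_KK\Psi_{\rm ISC},\qquad \|\mathbf{FO}-\mathbf G\|_F^2=\sum_{r=1}^{K}\sum_{t=1}^{K+1}\mathbf V_{tr}^2(\lambda_t-s_r)^2+\epsilon .
\]
Now use $\|\mathbf P_{t,:}\|_2^2=\sum_r s_r^2\mathbf V_{tr}^2$, $\epsilon=\sum_r(1-s_r^2)$ and $\sum_t\mathbf V_{tr}^2=1$. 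Also use $m_K\le1+\lambda_1$ and $|\lambda_t|\le\lambda_1$, both from Perron--Frobenius since $\mathbf L_\tau$ is entrywise nonnegative. Then the goal $\|\mathbf{FO}-\mathbf G\|_F^2\le(1+\lambda_1)^2K\Psi_{\rm ISC}$ reduces, term by term, to
\[
(\lambda-s)^2\le(2\lambda_1+\lambda_1^2)(1-s^2)+(1+\lambda_1)s^2(1-\lambda),\qquad \lambda\in[-\lambda_1,\lambda_1],\ s\in[0,1].
\]
The difference of the two sides is convex in $\lambda$, so it suffices to check $\lambda=\pm\lambda_1$. There the inequality becomes $(1+\lambda_1)s^2\le1+s$ and $s\le1$, both true on $[0,1]$. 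This yields $\|\mathbf{FO}-\mathbf G\|_F\le(1+\lambda_1)\sqrt{K\Psi_{\rm ISC}}$ with no case split, which implies the theorem.
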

\textbf{Theorem~\ref{Th:Struc}} can be proved by reformulating ${\bf{G}}$ via the linear combination of orthogonal eigenvectors $\{ {\bf{u}}_i \}$. Please refer to Appendix~\ref{App:Th-Struc} for the proof.
In the rest of this paper, we consider the case of $K\Psi_{\rm{ISC}} \le 1$ for simplicity. One can easily extend our results to that of $K\Psi_{\rm{ISC}} > 1$ by simply replacing $K\Psi_{\rm{ISC}}$ with $K^2\Psi_{\rm{ISC}}^2$.
The first term in (\ref{eq0}) can be further rewritten as
\begin{equation}\label{Eq:Struc}
    || {\bf{F}} {\bf{O}} - {\bf{G}} ||_F^2 = || {\bf{F}} - {\bf{G}} {\bf{O}}^T ||_F^2= || {\bf{F}}^T - {\bf{O}} {\bf{ G}}^T ||_F^2 =\sum\nolimits_{r} {\sum\nolimits_{{v_i} \in {{\hat S}_r}} {||{{\bf{F}}_{i,:}} - \sqrt {{d_i}/\mu ({{\hat S}_r})} {{\bf{o}}_r}||_2^2} }.
\end{equation}

By using a strategy similar to the proof of Lemma 2 in \cite{Mizutani}, we can derive the following \textbf{Lemma~\ref{le1}} with an upper bound for \textbf{clustering cost}. Appendix~\ref{App:Lemma1} gives the proof, which connects (\ref{Eq:Struc}) with (\ref{Eq:Dist}).
\begin{lemma}\label{le1}
    Let $({\hat S}_1, \cdots, {\hat S}_K)$ be a $\bar{\phi}_K(G)$-\textbf{optimal} partition and ${\bf{\tilde F}}$ be the \textbf{normalized spectral embedding} of \textit{ISC}. $\{ {\bf{o}}_r\}$ are with the same definitions as those in \textbf{Theorem~\ref{Th:Struc}}. The following equation and inequality hold:
    \begin{itemize}
        \item (\romannumeral1) $\left\| {{{\bf{o}}_r} - {{\bf{o}}_t}} \right\|_2^2=2 {\rm{~}}$, $\forall r,t \in \{ 1, 2, \cdots, K\}$ and $r \ne t$;
        \item (\romannumeral2) $g({{\hat S}_1, \cdots, {\hat S}_K};{{\bf{o}}_1, \cdots, {\bf{o}}_K}) \le 4{(1 + {\lambda _1})^2}{\mu _{\max }}K\Psi_{\rm{ISC}}$, with ${\mu _{\max }}: = \mathop {\max }_{{\hat S_r}} \{ \mu ({\hat S_r}) \}$.
    \end{itemize}
\end{lemma}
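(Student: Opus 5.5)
Part (i) follows from the orthogonality of ${\bf{O}}$ in Theorem~\ref{Th:Struc}. Since ${\bf{O}}\in\mathbb{R}^{(K+1)\times K}$ has orthonormal columns, we have ${\bf{O}}^T{\bf{O}}={\bf{I}}_K$. Hence $\|{\bf{o}}_r\|_2=1$ and ${\bf{o}}_r^T{\bf{o}}_t=0$ for $r\neq t$. Expanding gives $\|{\bf{o}}_r-{\bf{o}}_t\|_2^2=\|{\bf{o}}_r\|_2^2+\|{\bf{o}}_t\|_2^2-2{\bf{o}}_r^T{\bf{o}}_t=2$.

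For part (ii), I will follow the strategy of Lemma 2 in \cite{Mizutani}. The goal is to compare the normalized rows ${\bf{\tilde F}}_{i,:}={\bf{F}}_{i,:}/\|{\bf{F}}_{i,:}\|_2$ with the unit vectors ${\bf{o}}_r$. I will then convert the weighted distance $g$ of (\ref{Eq:Dist}) into the unweighted row sum (\ref{Eq:Struc}).

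\textbf{Step 1: a per-node bound.} Fix $v_i\in\hat S_r$ and write ${\bf{x}}={\bf{F}}_{i,:}$ and $a=\sqrt{d_i/\mu(\hat S_r)}>0$. I will use the elementary fact that for any nonzero ${\bf{x}}$, any unit vector ${\bf{o}}$ and any $a>0$,
\[
\|{\bf{x}}/\|{\bf{x}}\|_2-{\bf{o}}\|_2\le \tfrac{2}{a}\|{\bf{x}}-a{\bf{o}}\|_2 .
\]
This follows from the triangle inequality:
\[
\|{\bf{x}}/\|{\bf{x}}\|-{\bf{o}}\|\le \|{\bf{x}}/\|{\bf{x}}\|-{\bf{x}}/a\|+\|{\bf{x}}/a-{\bf{o}}\|,
\]
where the first term equals $|a-\|{\bf{x}}\||/a\le \|{\bf{x}}-a{\bf{o}}\|/a$. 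If ${\bf{F}}_{i,:}={\bf{0}}$, the normalization is undefined. In that case I adopt the convention that ${\bf{\tilde F}}_{i,:}={\bf{0}}$, so the left side is $\|{\bf{o}}_r\|_2=1\le \|{\bf{0}}-a{\bf{o}}_r\|_2/a$, and the same bound holds trivially. Squaring and multiplying by $d_i$ gives
\[
d_i\|{\bf{\tilde F}}_{i,:}-{\bf{o}}_r\|_2^2\le 4\mu(\hat S_r)\,\|{\bf{F}}_{i,:}-\sqrt{d_i/\mu(\hat S_r)}\,{\bf{o}}_r\|_2^2 .
\]

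\textbf{Step 2: summation.} I sum over $v_i\in\hat S_r$ and over $r$, and bound each $\mu(\hat S_r)$ by $\mu_{\max}$. By (\ref{Eq:Struc}), this yields
\[
g(\hat S_1,\cdots,\hat S_K;{\bf{o}}_1,\cdots,{\bf{o}}_K)\le 4\mu_{\max}\|{\bf{FO}}-{\bf{G}}\|_F^2 .
\]
Theorem~\ref{Th:Struc}, in the regime $K\Psi_{\rm ISC}\le 1$, gives $\|{\bf{FO}}-{\bf{G}}\|_F^2\le(1+\lambda_1)^2K\Psi_{\rm ISC}$, which completes the claim.

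The main obstacle is Step 1. The weight $d_i$ must cancel exactly against the factor $a^{-2}=\mu(\hat S_r)/d_i$, and the constant $4$ must come out correctly; this requires choosing the triangle-inequality split as above rather than the cruder bound through $\|{\bf{x}}\|$. Everything else is bookkeeping.
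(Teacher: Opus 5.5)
Part (i) and your Step~1 are fine. Step~1 is exactly the paper's move: inequality (\ref{Eq:Aux-3}) applied to ${\bf F}_{i,:}\sqrt{\mu(\hat S_r)/d_i}$ and ${\bf o}_r$, which you prove inline. Your zero-row convention is a harmless addition.

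The gap is in the step you call bookkeeping. Your double sum equals $\|{\bf F}-{\bf G}{\bf O}^T\|_F^2$, and you (like the paper) replace it by $\|{\bf F}{\bf O}-{\bf G}\|_F^2$ using the first equality in (\ref{Eq:Struc}). That equality is false. For ${\bf O}\in\mathbb{R}^{(K+1)\times K}$ you have ${\bf O}^T{\bf O}={\bf I}_K$, but ${\bf O}{\bf O}^T\neq{\bf I}_{K+1}$. Let ${\bf o}_\perp$ be the unit vector orthogonal to the columns of ${\bf O}$. Then
\[
\|{\bf F}-{\bf G}{\bf O}^T\|_F^2=\|{\bf F}{\bf O}-{\bf G}\|_F^2+\|{\bf F}{\bf o}_\perp\|_2^2,\qquad \|{\bf F}{\bf o}_\perp\|_2^2=\|{\bf \Lambda}{\bf o}_\perp\|_2^2\ge\min_{j\le K+1}\lambda_j^2 .
\]
So the comparison runs the wrong way for you. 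The extra term, coming from the $(K+1)$-th column $\lambda_{K+1}{\bf u}_{K+1}$, is not controlled by $\Psi_{\rm{ISC}}$ at all.

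This cannot be patched, because (ii) itself fails. Take $G$ to be $K$ disjoint triangles and let $\tau\to0$.
\begin{itemize}
\item The scalars: $\bar\phi_K(G)=0$, $\tilde d=2/(2+\tau)$ and $\lambda_{K+2}=-1/(2+\tau)$. Hence $\Psi_{\rm{ISC}}=\tau/(3+\tau)\to0$, and the right-hand side of (ii) is $O(K\tau)$.
\item The embedding: ${\bf L}_\tau={\bf A}/(2+\tau)$. For a fixed choice of eigenvectors, ${\bf F}$ changes with $\tau$ only by a global scalar, so ${\bf\tilde F}$ does not depend on $\tau$.
\item The spread: ${\bf u}_{K+1}$ is orthogonal to every triangle indicator. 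It is therefore nonzero with zero sum on some triangle, while the first $K$ coordinates of ${\bf F}$ are constant on that triangle. The normalized rows there are not all equal.
\end{itemize}
Consequently $g\ge\mathrm{COST}(\hat S_1,\cdots,\hat S_K)\ge c>0$ with $c$ independent of $\tau$, contradicting (ii) for small $\tau$, even within the regime $K\Psi_{\rm{ISC}}\le1$.

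What your Steps~1--2 legitimately give, granting Theorem~\ref{Th:Struc}, is
\[
g\le4\mu_{\max}\left[(1+\lambda_1)^2K\Psi_{\rm{ISC}}+\|{\bf\Lambda}{\bf o}_\perp\|_2^2\right].
\]
To get a bound that vanishes with $\Psi_{\rm{ISC}}$, you would have to either drop the $(K+1)$-th eigenvector or supply a separate argument controlling $\lambda_{K+1}{\bf u}_{K+1}$ within clusters. Dropping it makes ${\bf O}$ square, so (\ref{Eq:Struc}) becomes valid, as in the \textit{NJW}/\textit{RSC} cases.
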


Obviously, we have ${\mathop{\rm OPT}\nolimits} \leq {\mathop{\rm COST}\nolimits} (C_1, \cdots, C_K) \leq g({\hat S}_1, \cdots, {\hat S}_K; {\bf{o}}_1, \cdots, {\bf{o}}_K)$.
Let $\alpha$ be the approximation ratio of $K$Means, i.e., $\mbox{COST}(C_1, \cdots, C_K) \le \alpha {\mathop{\rm OPT}\nolimits}$.
One can derive the following \textbf{Theorem~\ref{th2-1}} based on \textbf{Lemma~\ref{le1}}.
\begin{theorem}\label{th2-1}
    Let $(C_1, \cdots, C_K )$ be a feasible clustering result given by \textit{ISC}. When the $K$Means clustering algorithm has an approximation ratio of $\alpha$, we have
    \begin{equation}
        {\mathop{\rm COST}\nolimits} (C_1, \cdots, C_K) \le   4 (1 + \lambda_1)^2 \alpha \mu_{\max}K\Psi_{\rm{ISC}}.
    \end{equation}
\end{theorem}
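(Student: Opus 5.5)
The proof chains the definition of $\mathrm{OPT}$, the $\alpha$-approximation guarantee of $K$Means, and the bound of \textbf{Lemma~\ref{le1}}(\romannumeral2).

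\textbf{Step 1: $\mathrm{OPT}$ is at most the distance to the optimal partition with the centers $\{{\bf o}_r\}$.} Since $(\hat S_1,\cdots,\hat S_K)$ is some $K$-way partition in $U$, the definition (\ref{Eq:OPT}) gives
\begin{equation*}
\mathrm{OPT}\le \mathrm{COST}(\hat S_1,\cdots,\hat S_K).
\end{equation*}
Because $\mathrm{COST}$ in (\ref{Eq:COST}) minimizes over centers, the specific choice $({\bf o}_1,\cdots,{\bf o}_K)$ from \textbf{Theorem~\ref{Th:Struc}} yields
\begin{equation*}
\mathrm{COST}(\hat S_1,\cdots,\hat S_K)\le g(\hat S_1,\cdots,\hat S_K;{\bf o}_1,\cdots,{\bf o}_K).
\end{equation*}

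\textbf{Step 2: apply the approximation ratio.} The $K$Means step is run on the rows of ${\bf{\tilde F}}$, with each node $v_i$ weighted by $d_i$ copies, which is exactly the cost of Definition~\ref{Def:KMeans}. Its output $(C_1,\cdots,C_K)$ therefore satisfies $\mathrm{COST}(C_1,\cdots,C_K)\le\alpha\,\mathrm{OPT}$. Chaining this with Step 1 gives
\begin{equation*}
\mathrm{COST}(C_1,\cdots,C_K)\le \alpha\, g(\hat S_1,\cdots,\hat S_K;{\bf o}_1,\cdots,{\bf o}_K).
\end{equation*}

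\textbf{Step 3: invoke the lemma.} By \textbf{Lemma~\ref{le1}}(\romannumeral2), $g(\hat S_1,\cdots,\hat S_K;{\bf o}_1,\cdots,{\bf o}_K)\le 4(1+\lambda_1)^2\mu_{\max}K\Psi_{\rm ISC}$. Multiplying by $\alpha$ gives the claimed bound.

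Note that the chain does not use $\mathrm{COST}(C_1,\cdots,C_K)\le g(\cdots)$ directly; that inequality would hold only for $\alpha=1$. The correct route is to pass through $\mathrm{OPT}$.

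\textbf{Main obstacle.} There is no real technical difficulty, since all the spectral work is already in \textbf{Lemma~\ref{le1}}. The only point to check is consistency of the objective. The $\alpha$-approximation must be measured with respect to the degree-weighted cost $g$ computed on the normalized embeddings ${\bf{\tilde F}}$, the same quantity bounded in \textbf{Lemma~\ref{le1}}(\romannumeral2). Under the paper's convention that all $d_i$ copies of a node fall in the same cluster, weighted $K$Means on the rows of ${\bf{\tilde F}}$ optimizes exactly this objective, so the approximation guarantee transfers without change.

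We also work in the regime $K\Psi_{\rm ISC}\le1$ that the paper adopts. In the other regime, the same argument goes through with $K\Psi_{\rm ISC}$ replaced by $K^2\Psi_{\rm ISC}^2$.
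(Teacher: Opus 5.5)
Your proof is correct and follows the paper's own route: $\mathrm{COST}(C_1,\cdots,C_K)\le\alpha\,\mathrm{OPT}\le\alpha\,\mathrm{COST}(\hat S_1,\cdots,\hat S_K)\le\alpha\, g(\hat S_1,\cdots,\hat S_K;{\bf o}_1,\cdots,{\bf o}_K)$, and then part (ii) of \textbf{Lemma~\ref{le1}}. Your ordering is also the correct way to write the chain. The paper's displayed inequality places $\mathrm{COST}(C_1,\cdots,C_K)$ between $\mathrm{OPT}$ and $g$, where it should be $\mathrm{COST}(\hat S_1,\cdots,\hat S_K)$, exactly as in your Step 1.
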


\begin{lemma}\label{le2}
    Let $\pi (r): = \arg {\min _{t \in [K]}}||{{\bf{o}}_t} - {{\bf{c}}_r}||_2$ be the index of an orthogonal base in ${\bf{O}}$ (derived by \textbf{Theorem~\ref{Th:Struc}}) that has the closest distance to the cluster center ${\bf{c}}_r$ w.r.t. $C_r$. ${H_{\pi ,r}}: = \bigcup\nolimits_{t:\pi (t) = r} {{C_t}}$ denotes the union of clusters $\{ C_r \}$ with the closest distance to the $r$-th orthogonal base in ${\bf{O}}$. Let $A \Delta B: = (A \backslash B) \cup (B\backslash A)$ be the \textbf{symmetric difference} between sets $A$ and $B$. When $K$Means has an approximation ratio of $\alpha$, we can obtain
    \begin{equation*}
        \sum\nolimits_{r = 1}^K {\mu ({H_{\pi ,r}}\Delta {{\hat S}_r})}  \le 32(1 + \alpha ){(1 + {\lambda _1})^2}{\mu _{\max }}K{\Psi _{{\rm{ISC}}}}.
    \end{equation*}
\end{lemma}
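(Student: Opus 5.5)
We follow the scheme of Peng et al.\ and Mizutani: compare each embedded point's $K$Means center with the orthogonal bases $\{{\bf{o}}_r\}$ of \textbf{Theorem~\ref{Th:Struc}}, and charge every misassigned unit of volume to the clustering cost. Write ${\bf{p}}^{(i)} := {\bf{\tilde F}}_{i,:}$ and ${\bf{c}}_t$ for the center of $C_t$. For $v_i \in \hat S_r$ the natural reference point is ${\bf{o}}_r$, up to the scaling by $\sqrt{d_i/\mu(\hat S_r)}$ in (\ref{Eq:Struc}). After row normalization this is handled exactly as in \textbf{Lemma~\ref{le1}}, which already controls $\sum_r\sum_{v_i\in\hat S_r} d_i\|{\bf{p}}^{(i)}-{\bf{o}}_r\|_2^2 \le 4(1+\lambda_1)^2\mu_{\max}K\Psi_{\rm ISC}$. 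Abbreviate this bound as $B$, so that \textbf{Theorem~\ref{th2-1}} gives $\mathrm{COST}(C_1,\dots,C_K)\le \alpha B$.

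\textbf{Step 1 (pointwise separation).} Take a node $v_i \in C_t \cap \hat S_r$ with $\pi(t)\neq r$, writing $s:=\pi(t)$. By \textbf{Lemma~\ref{le1}}(i), $\|{\bf{o}}_r-{\bf{o}}_s\|_2=\sqrt2$. By the definition of $\pi$, $\|{\bf{c}}_t-{\bf{o}}_s\|_2\le\|{\bf{c}}_t-{\bf{o}}_r\|_2$, and hence $\|{\bf{c}}_t-{\bf{o}}_r\|_2 \ge \tfrac{1}{2}\|{\bf{o}}_r-{\bf{o}}_s\|_2 = \sqrt2/2$. The triangle inequality then gives
\[
\tfrac{\sqrt2}{2}\le \|{\bf{p}}^{(i)}-{\bf{o}}_r\|_2+\|{\bf{p}}^{(i)}-{\bf{c}}_t\|_2 .
\]
Squaring and using $(a+b)^2\le 2a^2+2b^2$ yields $\tfrac14 \le \|{\bf{p}}^{(i)}-{\bf{o}}_r\|_2^2+\|{\bf{p}}^{(i)}-{\bf{c}}_t\|_2^2$.

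\textbf{Step 2 (summing with degree weights).} Multiply the Step~1 inequality by $d_i$ and sum over all such misassigned $v_i$. The right-hand side is at most
\[
\sum_r\sum_{v_i\in\hat S_r} d_i\|{\bf{p}}^{(i)}-{\bf{o}}_r\|_2^2+\sum_t\sum_{v_i\in C_t} d_i\|{\bf{p}}^{(i)}-{\bf{c}}_t\|_2^2 \le B + \alpha B = (1+\alpha)B .
\]
Therefore the total volume $M$ of misassigned nodes, i.e.\ nodes lying in $\hat S_r$ but in some $C_t$ with $\pi(t)\ne r$, satisfies $M \le 4(1+\alpha)B$.

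\textbf{Step 3 (symmetric difference).} A node of $\hat S_r$ is missing from $H_{\pi,r}$ exactly when it is misassigned. A node of $H_{\pi,r}$ that is not in $\hat S_r$ lies in some $\hat S_{r'}$ with $r'\ne r$, so it is misassigned relative to $r'$. Hence $\sum_r \mu(H_{\pi,r}\Delta \hat S_r) \le 2M \le 8(1+\alpha)B = 32(1+\alpha)(1+\lambda_1)^2\mu_{\max}K\Psi_{\rm ISC}$, which is the claimed constant.

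\textbf{Main obstacle.} Step~2 uses that the bound of \textbf{Lemma~\ref{le1}}(ii) is stated in terms of the unscaled ${\bf{o}}_r$ and the normalized ${\bf{\tilde F}}$. I take this directly from \textbf{Lemma~\ref{le1}} as given, rather than re-deriving it from (\ref{Eq:Struc}); the only other care needed is the constant bookkeeping in the squaring step.
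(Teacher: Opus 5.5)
Your proposal is correct and is essentially the paper's argument. Step 1 (triangle inequality plus $(a+b)^2\le 2a^2+2b^2$) is equivalent to the paper's use of $\|{\bf{a}}\|_2^2\ge\frac12\|{\bf{a}}-{\bf{b}}\|_2^2-\|{\bf{b}}\|_2^2$ together with $\|{\bf{o}}_t-{\bf{c}}_r\|_2^2\ge 1/2$. Step 2 is its lower-bound chain on $\mathrm{COST}$ via \textbf{Lemma~\ref{le1}} and \textbf{Theorem~\ref{th2-1}}. Step 3 is identity (\ref{Eq:Aux-4}): where you write $\sum_r\mu(H_{\pi,r}\Delta\hat S_r)\le 2M$, the paper records equality, since the $H_{\pi,r}$ are disjoint. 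Both give the same constant $32(1+\alpha)$.
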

The key idea to prove \textbf{Lemma~\ref{le2}} is to apply $\sum\nolimits_r {\sum\nolimits_{{v_i} \in {C_r}} {{a_{ir}}} }  \ge \sum\nolimits_r {\sum\nolimits_{t \ne \pi (r)} {\sum\nolimits_{{v_i} \in {C_r} \cap {{\hat S}_t}} {{a_{ir}}} } }$
(with $a_{ir}$ as a variable about $\{ i, r\}$) to the \textbf{clustering cost} (\ref{Eq:COST}) and then combine the result with \textbf{Lemma~\ref{le1}} and \textbf{Theorem~\ref{th2-1}}. Please refer to Appendix~\ref{App:Lemma2} for the full proof.

Based on \textbf{Lemma~\ref{le2}}, we derive our main theoretical result in the following \textbf{Theorem~\ref{Th:Main}}.
\begin{theorem}[\textbf{Main Theoretical Result}]\label{Th:Main}
    Given a graph $G$ and a pre-set number of clusters $K$, let $({\hat S}_1, \cdots, {\hat S}_K)$ be a $\bar{\phi}_K(G)$-\textbf{optimal} partition of \textbf{average conductance minimization} and $(C_1, \cdots, C_K )$ be the clustering result of \textit{ISC}, where the optimal correspondence of $C_r$ is $\hat S_r$. Assume that $K$Means has an approximation ratio of $\alpha$, which can be a constant \cite{choo2020k}. Let $\mathcal{M}$ be the set of mis-clustered nodes between $(C_1, \cdots, C_K )$ and $({\hat S}_1, \cdots, {\hat S}_K)$. We have
    \begin{equation*}
        |\mathcal{M}| \le 160(1 + \alpha ){(1 + {\lambda _1})^2}{\tilde \mu}K{\Psi _{{\rm{ISC}}}} = {c}{(1 + {\lambda _1})^2}{\tilde \mu}K{\Psi _{{\rm{ISC}}}},
    \end{equation*}
where $c := 160(1+\alpha)$ is a constant; $\tilde \mu := \mu_{\max} / d_{\min}$.
\end{theorem}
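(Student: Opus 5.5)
The plan is to derive the count of mis-clustered nodes from the volume bound of Lemma~\ref{le2}. This follows the standard route of Peng et al.\ and Mizutani: first argue that $\pi$ is a permutation, so that each $H_{\pi,r}$ equals a single cluster $C_{\sigma(r)}$. After relabeling, this gives the optimal correspondence $C_r \leftrightarrow \hat S_r$ assumed in the statement. Once that holds, the mis-clustered set satisfies $\mathcal{M} \subseteq \bigcup_r (C_r \Delta \hat S_r)$, and each node is counted at most twice in $\sum_r |C_r\Delta\hat S_r|$. Every node has degree at least $d_{\min}$, so
\begin{equation*}
|\mathcal{M}| \le \sum\nolimits_r |C_r \Delta \hat S_r| \le d_{\min}^{-1}\sum\nolimits_r \mu(C_r\Delta \hat S_r).
\end{equation*}
This division by $d_{\min}$ is exactly how $\tilde\mu = \mu_{\max}/d_{\min}$ enters the bound.

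The key step is the permutation argument. Suppose $\pi$ is not injective. Then some index $r^\ast$ is not in the image of $\pi$, so $H_{\pi,r^\ast}=\emptyset$ and $\mu(H_{\pi,r^\ast}\Delta \hat S_{r^\ast}) = \mu(\hat S_{r^\ast})$. For this to contradict Lemma~\ref{le2}, I would lower bound $\mu(\hat S_{r^\ast})$ by a quantity exceeding the right-hand side, using the separation $\|\mathbf{o}_r-\mathbf{o}_t\|_2^2=2$ from Lemma~\ref{le1}(i). If a direct contradiction is not available in the condition-free setting, I would instead charge the whole of $\hat S_{r^\ast}$ to the error. In that case $\pi$ merges two clusters $C_t, C_{t'}$ into $H_{\pi,r}$. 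One of them, say $C_{t'}$, is mis-matched, and the nodes in $C_{t'}$ that are correctly assigned relative to $H_{\pi,r}$ can be reassigned to the empty index $r^\ast$ in the final correspondence. The extra cost is at most one additional $\mu(\hat S_{r^\ast})$-type term, which is controlled by the same symmetric-difference sum.

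Doing this bookkeeping over all collisions of $\pi$ should multiply the constant of Lemma~\ref{le2} by a small factor. That factor accounts for the jump from $32$ to $160 = 5\cdot 32$: roughly, one copy for the symmetric differences under $\pi$ itself, plus further copies from reassigning collided clusters and from the double counting in $\Delta$. Combining these gives
\begin{equation*}
\sum\nolimits_r \mu(C_r\Delta\hat S_r)\le 5\cdot 32(1+\alpha)(1+\lambda_1)^2\mu_{\max}K\Psi_{\rm ISC},
\end{equation*}
and dividing by $d_{\min}$ yields the claim with $c=160(1+\alpha)$.

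I expect the main obstacle to be the step that turns the $H_{\pi,r}$-based bound of Lemma~\ref{le2} into a bound for the best one-to-one matching $C_r\leftrightarrow\hat S_r$ without imposing any gap condition. The constant accounting there must stay within the factor $5$. The remaining steps, the degree lower bound and the summation, are routine.
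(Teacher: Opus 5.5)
Your last step, $d_{\min}|\mathcal{M}|\le d_{\min}\sum_r|C_r\Delta\hat S_r|\le\sum_r\mu(C_r\Delta\hat S_r)$, is the paper's. Your fallback is also the paper's route: turn $\pi$ into a permutation by relabeling a collided cluster to an empty index, and charge against $\mu(\hat S_r)=\mu(H_{\pi,r}\Delta\hat S_r)$. In the paper, $160=32+64+64$ comes from Lemma~\ref{le2}, plus twice the $\pi$-error (its Case~2), plus twice $\sum_{r:H_{\pi,r}=\emptyset}\mu(\hat S_r)$ (its Cases~3--4).

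However, this repair step is the entire content of the theorem beyond Lemma~\ref{le2}, and you only assert it. The factor $5$ is read off the target constant rather than derived.

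Your first option cannot work condition-free. Contradicting Lemma~\ref{le2} needs $\mu(\hat S_r)>32(1+\alpha)(1+\lambda_1)^2\mu_{\max}K\Psi_{\rm{ISC}}$, which is a balance/gap assumption. The separation $\|\mathbf{o}_r-\mathbf{o}_t\|_2^2=2$ is already spent inside Lemma~\ref{le2} and says nothing about the volume of an individual $\hat S_r$.

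Your fallback claim, that each move costs ``at most one additional $\mu(\hat S_{r^\ast})$-type term'', is false. The final correspondence can only relabel whole clusters. So if $\pi(a)=\pi(b)=t$ and $H_{\pi,r}=\emptyset$, relabeling $C_a$ from $t$ to $r$ changes the error by exactly $2\mu(C_a\cap\hat S_t)-2\mu(C_a\cap\hat S_r)$. The nodes of $C_a$ inside $\hat S_t$ become errors; they are not ``reassigned''.

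Suppose $C_a$ and $C_b$ both lie essentially inside $\hat S_t$ and $\mu(\hat S_r)$ is tiny. Then the $\pi$-error is about $2\mu(\hat S_r)$, while every one-to-one matching has error at least roughly $2\min\{\mu(C_a),\mu(C_b)\}$. Nothing in Lemma~\ref{le1}, Theorem~\ref{th2-1} or Lemma~\ref{le2} excludes this, because splitting $\hat S_t$ between two centers both near $\mathbf{o}_t$ adds essentially no $K$Means cost. So bookkeeping alone cannot convert Lemma~\ref{le2} into a constant-factor bound for a permutation. You need an extra argument or condition that excludes this configuration.

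For calibration, the paper passes this point via the normalization ``without loss of generality $\mu(C_a\backslash\hat S_t)\ge\mu(C_a\backslash\hat S_r)$''. That is equivalent to $\mu(C_a\cap\hat S_r)\ge\mu(C_a\cap\hat S_t)$. By the formula above, it already makes every move non-increasing ($p_1+p_2\le 0$), so the paper's four-case analysis really rests on this assumption. But $r$ (the empty index) and $t$ (the collision target) play different roles, and in the configuration above neither $C_a$ nor $C_b$ satisfies it. So it is not actually without loss of generality.

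Your outline matches the paper's accounting in shape, but the step you yourself flagged as the main obstacle is a genuine gap in your proposal. The paper's written argument does not supply a valid way across it either.
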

One can prove \textbf{Theorem~\ref{Th:Main}} by considering all possible relations between $\{ H_{\pi, r}\}$ and $\{ C_r \}$ similar to the proof of Theorem 2 in \cite{macgregor2022tighter} and using $d_{\min} |\mathcal{M}| \le \sum\nolimits_r {\mu ({C_r}\Delta {{\hat S}_r})}$. Appendix~\ref{App:Th-Main} gives the proof.

\textbf{Theorem~\ref{Th:Main}} provides an upper bound for the number of mis-clustered nodes $|\mathcal{M}|$ w.r.t. the \textbf{optimal solution} $({\hat S}_1, \cdots, {\hat S}_K)$ to \textbf{average conductance minimization}.
This bound is directly proportional to $\Psi_{\rm{ISC}}$. \textit{A graph with (\romannumeral1) higher degree heterogeneity and (\romannumeral2) weaker clustering structures causes a larger $\Psi_{\rm{ISC}}$ and thus a higher upper bound}.

To measure the quality of $(C_1, \cdots, C_K)$ given by a graph clustering algorithm, the \textbf{clustering accuracy} is defined as ${\mathop{\rm AC}\nolimits} (C_1, \cdots, C_K; \hat S_1, \cdots, \hat S_K) := 1 - |\mathcal{M}|/N$. Therefore, \textit{one can derive a lower bound for \textbf{clustering accuracy} based on the upper bound in \textbf{Theorem~\ref{Th:Main}}}. This lower bound is inversely proportional to $\Psi_{\rm{ISC}}$, which quantitatively reveals impacts of (\romannumeral1) degree heterogeneity and (\romannumeral2) weakness of clustering structures to the quality of DCSC. Concretely, \textit{higher degree heterogeneity and weaker clustering structures make $G$ harder to be clustered}. Our experiments also demonstrate that the quality of almost all the methods is significantly affected by these two aspects.
However, some related analyses may not be able to explicitly interpret such impacts.

Note that \textbf{Theorem~\ref{Th:Main}} does not rely on additional conditions, while most related theoretical results must satisfy some special conditions. For instance, Qin and Rohe \cite{RSC} proved that $|\mathcal{M}| \le {c_1}K\ln (N/\epsilon )/[{m^2}(\bar d_{\min}  + \tau )\lambda _K^2]$ for \textit{RSC} with probability at least $(1 - \epsilon)$, if $[K\ln (4N/\epsilon )/(\bar d_{\min}  + \tau )]^{1/2} \le {\lambda _K}/(8\sqrt 3 )$ and $(\bar d_{\min}  + \tau ) > 2\ln (4N/\epsilon )$ for sufficiently large $N$, where $c_1$ is a constant; $m$ is value related to DCSBM; $\bar d_{\min}$ is the minimum expected degree. The two conditions may fail for graphs with large $N$s, leading to trivial results. Moreover, this upper bound may be loose for a large $N$ and become trivial. Whereas, our bound in \textbf{Theorem~\ref{Th:Main}} is not related to $N$.

\begin{table}[]\scriptsize
\centering
\caption{Further theoretical results on the clustering quality of DCSC.}\label{Tab:DCSC}
\begin{tabular}{l|l|l|l|l}
\hline
\multirow{4}{*}{$\Psi$} & \textit{\textbf{NJW}} & $\Psi_{\rm{NJW}} := (1 - \lambda _{K + 1}^{{\rm{NJW}}} ) ^{-1} {\bar \phi _K} (G)$ & \multirow{4}{*}{$|\mathcal{M}| \le$} & $640(1 + \alpha ){\tilde \mu}K{\Psi _{{\rm{NJW}}}}$ \\ 
 & \textit{\textbf{RSC}} & $\Psi_{\rm{RSC}} := ( 1 - \lambda _{K + 1}^{\rm RSC})^{-1} [ {1 - {\tilde d} (1 - {{\bar \phi }_K} (G))} ]$ &  & $640(1 + \alpha ){\tilde \mu}K{\Psi _{{\rm{RSC}}}}$ \\ 
 & \textit{\textbf{SCORE+}} & ${\Psi _{{\rm{SC + }}}} = {(1 - \lambda _{K + 2}^{{\rm{SC + }}})^{ - 1}} [ {1 - \tilde d(1 - {{\bar \phi }_K} (G))} ]$ &  & $160(1 + \alpha ){(1 + \lambda _2^{{\rm{SC+}}})^2}{\tilde \mu}K{\Psi _{{\rm{SC+}}}}$ \\  
 & \textit{\textbf{ISC}} & $\Psi_{\rm{ISC}} := ( 1 - \lambda _{K + 2}^{{\rm{ISC}}} ) ^{-1} [ {1 - {\tilde d} (1 - {{\bar \phi }_K} (G))} ]$ &  & $160(1 + \alpha ){(1 + \lambda _1^{{\rm{ISC}}})^2}{\tilde \mu}K{\Psi _{{\rm{ISC}}}}$ \\ \hline
\end{tabular}
\end{table}

\subsection{General Theoretical Results for (DC)SC}
We further reduce our analysis on \textit{ISC} (i.e., \textbf{Theorem~\ref{Th:Main}}) to other (DC)SC algorithms and summarize their results in Table~\ref{Tab:DCSC}, where we use subscripts (superscripts) of `NJW', `RSC', `SC+', and `ISC' to denote variables of \textit{NJW}, \textit{RSC}, \textit{SCORE+}, and \textit{ISC}.

By comparing bounds of \textit{NJW} and \textit{RSC}, we introduce \textbf{Remark~\ref{Q1}} to answer \textbf{Q1}: \textbf{Can DCSC (e.g., \textit{RSC}) achieve tighter bounds than vanilla SC (e.g., \textit{NJW})}?
\begin{remark}\label{Q1}
    When a graph $G$ has a high degree heterogeneity and is not so well-clustered, \textit{RSC} may have a tighter bound of $|\mathcal{M}|$ than that of \textit{NJW}.
\end{remark}
\begin{proof}
    To prove \textbf{Remark~\ref{Q1}}, one can compare the upper bounds of $|\mathcal{M}|$ w.r.t. \textit{NJW} and \textit{RSC} (see Table~\ref{Tab:DCSC}), which is equivalent to comparing values of ${\Psi _{\rm{NJW}}}$ and ${\Psi _{\rm{RSC}}}$. When \textit{RSC} has a tighter upper bound, we have ${\Psi _{\rm{NJW}}} \ge {\Psi _{\rm{RSC}}} \Rightarrow ({\Psi _{\rm{NJW}}} - {\Psi _{\rm{RSC}}}) \ge 0$. We further obtain the following derivations:
\begin{align*}
    & {\Psi _{{\rm{NJW}}}} - {\Psi _{{\rm{RSC}}}} = \frac{1}{{1 - \lambda _{K + 1}^{{\rm{NJW}}}}}{{\bar \phi }_K (G)} - \frac{1}{{1 - \lambda _{K + 1}^{{\rm{RSC}}}}}[1 - \tilde d(1 - {{\bar \phi }_K (G)})] \ge 0, \\
    & \Rightarrow \frac{1}{{1 - \lambda _{K + 1}^{{\rm{NJW}}}}}{{\bar \phi }_K}(G) \ge \frac{1}{{1 - \lambda _{K + 1}^{{\rm{RSC}}}}}[1 - \tilde d(1 - {{\bar \phi }_K}(G))] = \frac{{1 - \tilde d}}{{1 - \lambda _{K + 1}^{{\rm{RSC}}}}} + \frac{1}{{1 - \lambda _{K + 1}^{{\rm{RSC}}}}}\tilde d{{\bar \phi }_K}(G), \\
    & \Rightarrow [\frac{1}{{1 - \lambda _{K + 1}^{{\rm{NJW}}}}} - \frac{{\tilde d}}{{1 - \lambda _{K + 1}^{{\rm{RSC}}}}}]{{\bar \phi }_K} (G) \ge \frac{{1 - \tilde d}}{{1 - \lambda _{K + 1}^{{\rm{RSC}}}}}, \\
    & \Rightarrow [\frac{{1 - \lambda _{K + 1}^{{\rm{RSC}}}}}{{1 - \lambda _{K + 1}^{{\rm{NJW}}}}} - \tilde d]{{\bar \phi }_K} (G) \ge 1 - \tilde d, \\
    & \Rightarrow [\frac{{1 - \lambda _{K + 1}^{{\rm{RSC}}}}}{{1 - \lambda _{K + 1}^{{\rm{NJW}}}}} - \frac{{{d_{\min }}}}{{{d_{\max }} + \tau }}]{{\bar \phi }_K} (G) \ge \frac{{{d_{\max }} - {d_{\min }} + \tau }}{{{d_{\max }} + \tau }}, \\
    & \Rightarrow \frac{{1 - \lambda _{K + 1}^{{\rm{RSC}}}}}{{1 - \lambda _{K + 1}^{{\rm{NJW}}}}}\frac{{{d_{\max }} + \tau }}{{{d_{\max }} - {d_{\min }} + \tau }} - \frac{{{d_{\min }}}}{{{d_{\max }} - {d_{\min }} + \tau }} \ge \bar \phi _K^{ - 1} (G).
\end{align*}
Let $q: = (1 - \lambda _{K + 1}^{{\rm{RSC}}})/(1 - \lambda _{K + 1}^{{\rm{NJW}}})$. Usually, we have $\lambda_{K+1}^{\rm{RSC}} \le \lambda_{K+1}^{\rm{NJW}}$ and thus $q \ge 1$. Assume \textit{RSC} adopts its default setting of $\tau$ (i.e., $\tau = \bar d$). One can rewrite the inequality as
\begin{align*}
    \frac{{q ({d_{\max }} + \tau ) - {d_{\min }}}}{{{d_{\max }} - {d_{\min }} + \tau }} & = \frac{{q {d_{\max }} - {d_{\min }} + q \bar d}}{{{d_{\max }} - {d_{\min }} + \bar d}} \ge \bar \phi _K^{ - 1}(G).
\end{align*}
To ensure that the inequality holds, one may first ensure that the right part $\bar \phi_K^{-1} (G)$ is small enough (i.e., $\phi_K (G)$ is large). It implies that \textbf{the graph $G$ is not so well-clustered}, in contrast to the well-clustered condition \cite{NJW,Mizutani}. Moreover, one may also ensure that the left part is large enough. With the increase of degree heterogeneity, the numerator increases faster than denominator. Therefore, \textbf{higher degree heterogeneity results in a larger value of the left part}.

In summary, when a graph $G$ (\romannumeral1) has a high degree heterogeneity and (\romannumeral2) is not so well-clustered, \textit{RSC} may have a tighter upper bound of $|\mathcal{M}|$ than that of \textit{NJW}.
\end{proof}

It is consistent with conclusions of prior DCSBM-based analyses \cite{RSC,zhang2018understanding} that \textit{RSC} may outperform \textit{NJW} when the degree heterogeneity is high. In early studies on vanilla SC \cite{NJW}, a graph is defined to be \textbf{well-clustered}, if $\bar \phi_K(G) / (1 - \lambda_{K+1})$ is sufficiently small, consistent with that $\bar \phi_K (G)$ is small. This well-clustered assumption implies that the optimal solution $(\hat S_1, \cdots, \hat S_K)$ describes an explicit clustering structure of $G$.

We then introduce the following \textbf{Remark~\ref{Q2}} to answer \textbf{Q2}: \textbf{Can \textit{ISC} achieve a tighter bound than \textit{RSC}}?
\begin{remark}\label{Q2}
  Suppose that \textit{RSC} and \textit{ISC} (\romannumeral1) have almost the same $\alpha$ for $K$Means and (\romannumeral2) use the same correction $\tau$. Then, \textit{ISC} always has a tighter bound of $|\mathcal{M}|$ than that of \textit{RSC}.
\end{remark}
\begin{proof}
    Suppose that \textit{RSC} and \textit{ISC} have almost the same approximation ratio $\alpha$ of $K$Means. Moreover, suppose that \textit{RSC} and \textit{ISC} use the same correction term $\tau$. Then, we have
\begin{equation*}
    \lambda _{K + 1}^{\rm{RSC}} = \lambda _{K + 1}^{\rm{ISC}} \ge \lambda _{K + 2}^{\rm{ISC}} \Rightarrow {\Psi _{{\rm{RSC}}}} \ge {\Psi _{{\rm{ISC}}}}.
\end{equation*}
Note that $\lambda_1^{\rm{ISC}} < 1$. For the upper bound of $|\mathcal{M}|$ (see Table~\ref{Tab:DCSC}), we further have
\begin{equation*}
    160(1 + \alpha ){(1 + \lambda _1^{{\rm{ISC}}})^2}{\tilde \mu}K{\Psi _{{\rm{ISC}}}} < 640(1 + \alpha ){\tilde \mu}K{\Psi _{{\rm{RSC}}}},
\end{equation*}
which indicates that \textit{ISC} has a tighter upper bound of $|\mathcal{M}|$ than that of \textit{RSC}.
\end{proof}

\section{Extension of DCSC: ASCENT}\label{Sec:Meth}
Inspired by recent advances in GNNs, we introduce ASCENT, a simple yet effective extension of DCSC.
Different from most DCSC methods with a constant correction $\tau$ (e.g., \textit{RSC}, \textit{SCORE+}, and \textit{ISC} in Table~\ref{Tab:Alg-Sum}), ASCENT adopts a node-wise correction scheme. It can determine different corrections $\{ \tau_i \}$ for nodes $\{ v_i \}$ via an iterative aggregation mechanism that computes `local' average degrees w.r.t. graph topology.
Whereas, $\tau$ is usually set to be a `global' average degree for existing DCSC algorithms (e.g., $\tau = \bar d$ for \textit{RSC}).

\subsection{The Presented Algorithm}
Let ${\bm{\tau}}^{(l)} \in \mathbb{R}^N_+$ be the vector of node-wise corrections in the $l$-th iteration, with $\tau_i^{(l)}$ as the correction of node $v_i$. We let ${{\bm{\tau }}^{(0)}} = {\bf{d}}$ (with $\tau_i^{(0)} = d_i$) for initialization. Suppose that there are in total $L$ iterations. We obtain the node-wise corrections ${\bm{\tau}} \in \mathbb{R}^N_+$ via
\begin{equation}\label{Eq:tau}
    {{\bm{\tau }}^{(l)}} := {{\bf{\hat D}}^{ - 1}}{\bf{\hat A}}{{\bm{\tau }}^{(l - 1)}}~(1 \le l \le L),{\rm{~and~}}{\bm{\tau }}: = \theta {{\bm{\tau }}^{(L)}},
\end{equation}
where ${\bf{\hat A}} := {\bf{A}} + {\bf{I}}_N$ is the adjacency matrix with self-edges; ${\bf{\hat D}}$ is the degree diagonal matrix w.r.t. ${\bf{\hat A}}$; $\theta >0$ is a hyper-parameter.

In (\ref{Eq:tau}), we iteratively update ${\bm{\tau}}^{(l)}$ using the \textbf{GNN mean aggregation} \cite{hamilton2017inductive}.
Different from existing GNN-based methods \cite{bianchi2020spectral,tsitsulin2023graph,bhowmick2024dgcluster}, ASCENT does not rely on any attribute inputs or training procedures. Instead, it uses $\{ \bm{\tau}^{(l)} \in \mathbb{R}^{N}_+ \}$ as features for aggregation. In each iteration, it computes the average correction value w.r.t. the one-hop neighbors for each node.
We use $\tau_i = \theta \tau_i^{(L)}$ as the final correction of node $v_i$. Similar to the role of $\delta$ in \textit{RSC}, \textit{SCORE+}, and \textit{ISC} summarized in Table~\ref{Tab:Alg-Sum}, $\theta$ adjusts the scale of $\tau_i$.
Then, ASCENT adopts the same strategies of spectral embedding arrangement and normalization (i.e., steps (\uppercase\expandafter{\romannumeral2}) and (\uppercase\expandafter{\romannumeral3}) in Table~\ref{Tab:Alg-Sum}) as \textit{ISC}.
Algorithm~\ref{Alg:ASCENT} summarizes the overall procedure of ASCENT.

\begin{algorithm}[t]\footnotesize
\caption{\footnotesize The Proposed \textbf{ASCENT} Algorithm}
\label{Alg:ASCENT}
\KwIn{graph $G = (V, E)$, number of clusters $K$, hyper-parameters $\{ \theta, L \}$}
\KwOut{a feasible clustering result $(C_1, \cdots, C_K)$}
{
$\rhd$ \textbf{Construct regularized graph Laplacian} ${\bf{L}}_{\tau}$ \\
}
\For{{\bf{each}} node $v_i \in V$}
{
    $\tau_i^{(0)} \leftarrow d_i$ //Initialize node-wise corrections
}
\For{$l$ {\bf{from}} $1$ {\bf{to}} $L$}
{
    ${{\bm{\tau }}^{(l)}} \leftarrow {{\bf{\hat D}}^{ - 1}}{\bf{\hat A}} {{\bm{\tau }}^{(l-1)}}$ //Iteratively update node-wise corrections
}
{
${\bm{\tau}} \leftarrow \theta {\bm{\tau}}^{(L)}$ //Final node-wise corrections\\
${{\bf{L}}_\tau } \leftarrow {({\bf{D}} + {\mathop{\rm diag}\nolimits} ({\bm{\tau }}))^{ - 1/2}}{\bf{A}}{({\bf{D}} + {\mathop{\rm diag}\nolimits} ({\bm{\tau }}))^{ - 1/2}}$\\
$\rhd$ \textbf{Step (\uppercase\expandafter{\romannumeral1})}: \textbf{ED on graph Laplacian}\\
Find the leading $(K+1)$ eigenvalues $(\lambda_1, \cdots, \lambda_{K+1})$ and eigenvectors $({\bf{u}}_1, \cdots, {\bf{u}}_{K+1})$ of ${{\bf{L}}_\tau }$\\
$\rhd$ \textbf{Step (\uppercase\expandafter{\romannumeral2})}: \textbf{Arrangement of spectral embedding}\\
${\bf{F}} \leftarrow [{\lambda _1}{{\bf{u}}_1}, \cdots ,{\lambda _{K + 1}}{{\bf{u}}_{K + 1}}]$\\
$\rhd$ \textbf{Step (\uppercase\expandafter{\romannumeral3})}: \textbf{Normalization of arranged embedding}\\
}
\For{{\bf{each}} node $v_i \in V$}
{
    ${{\bf{\tilde F}}_{i,:}} \leftarrow {{\bf{F}}_{i,:}}/|{{\bf{F}}_{i,:}}{|_2}$
}
{
$\rhd$ \textbf{Step (\uppercase\expandafter{\romannumeral4})}: \textbf{$K$Means clustering}\\
apply $K$Means to rows of ${\bf{\tilde F}}$ to get the clustering result $(C_1, \cdots, C_K)$
}
\end{algorithm}

Fig.~\ref{Fig:Case} demonstrates our node-wise correction scheme on the \textbf{Zachary's karate club} graph \cite{zachary1977information} with $2$ clusters, where we visualize the normalized $\{ {\bm{\tau}}^{(l)} \}$ in different iterations; each color denotes a cluster. Although different nodes have various initial values (i.e., degrees) in ${\bm{\tau}}^{(0)}$, the mean aggregation in (\ref{Eq:tau}) \textit{helps nodes in the same cluster (i.e., with more common neighbors) to have close correction values}. For instance, when $30 \le l \le 70$, nodes in the first cluster tend to have larger corrections than those in the second cluster.
It is well-known that most GNNs suffer from over-smoothing \cite{rusch2023survey}, where node features converge to a constant as the number of layers increases. Similarly, the node-wise corrections of ASCENT also converge to a constant for a large number of iterations $l$ (e.g., ${\bm{\tau}}^{(100)}$), due to over-smoothing (i.e., $\mathop {\lim }_{l \to \infty } \tau _i^{(l)} = c, \forall {v_i} \in V$, with $c$ as a constant). In this case, \textit{ASCENT reduces to existing DCSC methods with a constant correction $\tau$}, corresponding to a `global' average of degrees. Our experiments indicate that \textit{ASCENT can potentially achieve better clustering quality in some early stages before over-smoothing}.
It corresponds to a special `local' average of degrees.

\begin{figure}
\centering
 \begin{minipage}{0.12\linewidth}
 \subfigure[Topology]{
  \includegraphics[width=\textwidth,trim=0 0 0 0,clip]{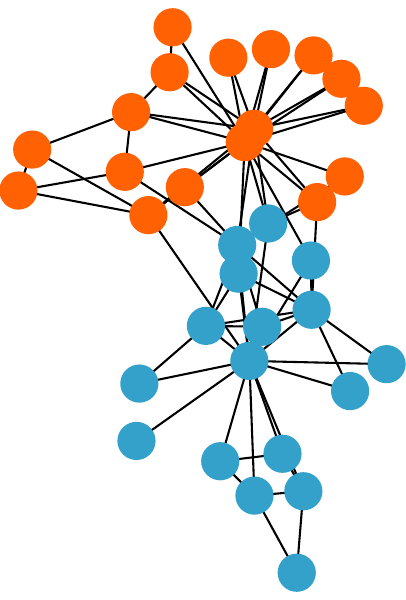}}
 \end{minipage}
 \begin{minipage}{0.13\linewidth}
 \subfigure[${\bm{\tau}}^{(0)}$]{
  \includegraphics[width=\textwidth,trim=42 15 25 19,clip]{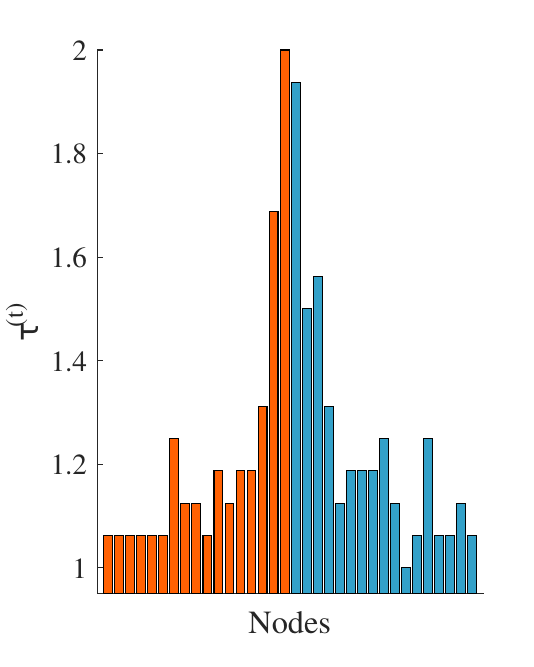}}
 \end{minipage}
 \begin{minipage}{0.13\linewidth}
 \subfigure[${\bm{\tau}}^{(10)}$]{
  \includegraphics[width=\textwidth,trim=42 15 25 19,clip]{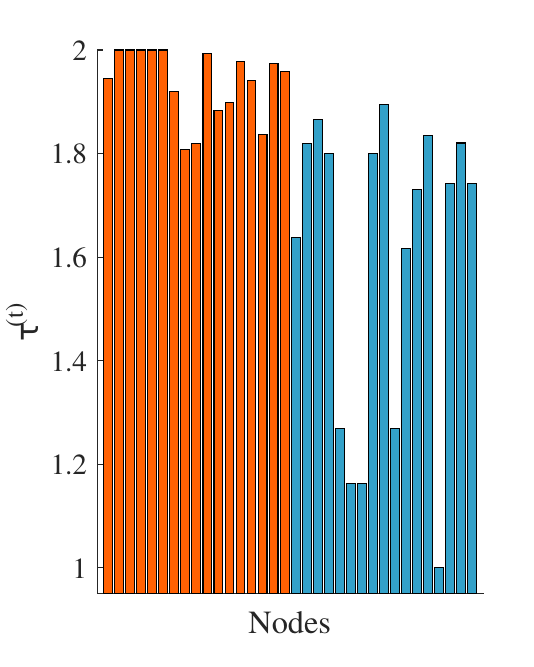}}
 \end{minipage}
 \begin{minipage}{0.13\linewidth}
 \subfigure[${\bm{\tau}}^{(30)}$]{
  \includegraphics[width=\textwidth,trim=42 15 25 19,clip]{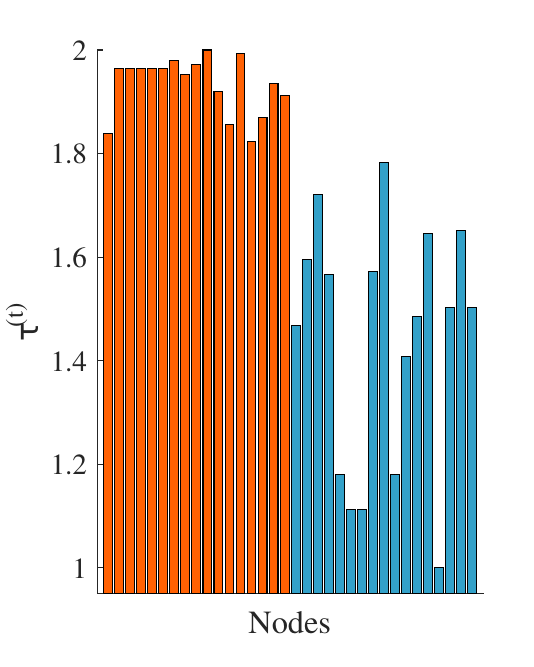}}
 \end{minipage}
 \begin{minipage}{0.13\linewidth}
 \subfigure[${\bm{\tau}}^{(50)}$]{
  \includegraphics[width=\textwidth,trim=42 15 25 19,clip]{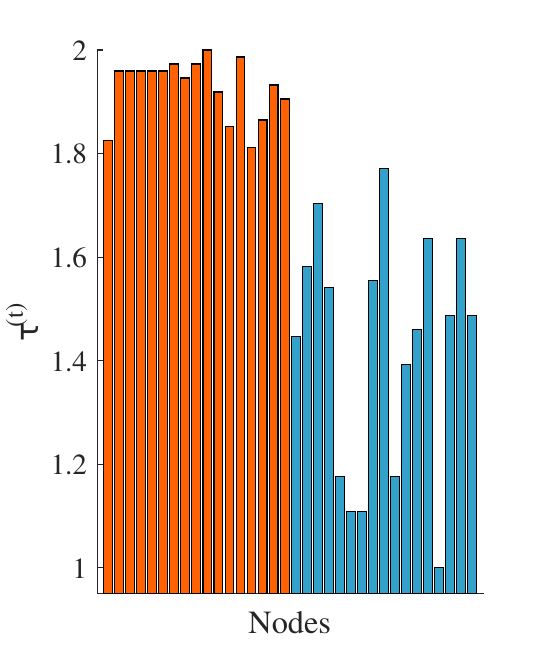}}
 \end{minipage}
 \begin{minipage}{0.13\linewidth}
 \subfigure[${\bm{\tau}}^{(70)}$]{
  \includegraphics[width=\textwidth,trim=42 15 25 19,clip]{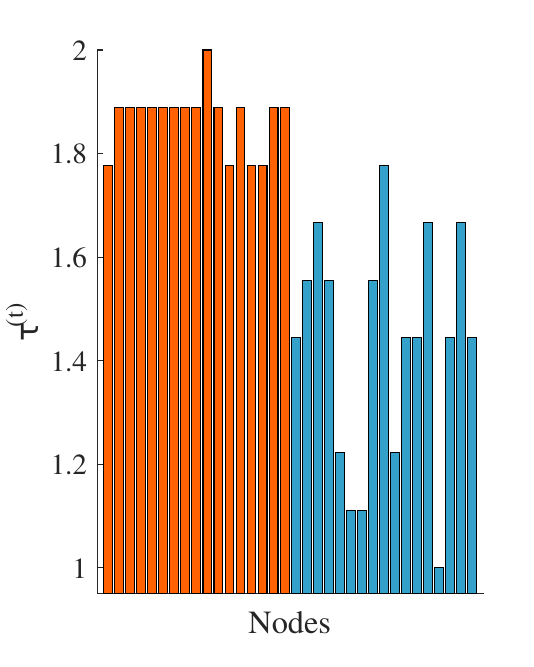}}
 \end{minipage}
 \begin{minipage}{0.13\linewidth}
 \subfigure[${\bm{\tau}}^{(100)}$]{
  \includegraphics[width=\textwidth,trim=42 15 25 19,clip]{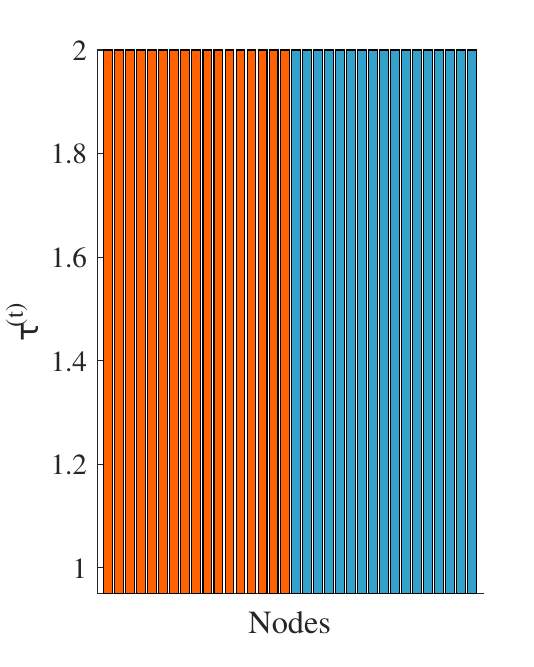}}
 \end{minipage}
\caption{$\{ \bm{\tau}^{(l)} \}$ on the \textbf{Zachary's karate club} graph, where each color denotes a cluster.}\label{Fig:Case}
\end{figure}

\subsection{Extended Theoretical Analysis}
We extend our analysis to the following \textbf{Proposition~\ref{Th:ASCENT}} for ASCENT (see Appendix~\ref{App:Th-ASCENT} for the proof), where we use `AST' to denote variables of ASCENT.
For each cluster $\hat S_r$, we introduce a cluster-wise correction ${\hat \tau _r}: = \max \{ {\tau _i}|{v_i} \in {\hat S_r}\}$. Since different nodes $\{ v_i \}$ may have different $\{ \tau_i \}$, different clusters $\{ \hat S_r \}$ may also have different $\{\hat \tau_r\}$, which can verify advantages of node-wise corrections $\{ \tau_i\}$ over conventional methods with a constant $\tau$.
\begin{proposition}\label{Th:ASCENT}
    Let ${\bar \varphi _K}(G): = \sum\nolimits_{r = 1}^K {{{\tilde d}_r}} \phi ({\hat S_r})/K$ be a \textbf{reweighted conductance} w.r.t. the optimal partition $({\hat S_1}, \cdots ,{\hat S_K})$, where ${\tilde d_r}: = {d_{\min }}/({d_{\max }} + {\hat \tau _r})$. By replacing $\Psi_{\rm{ISC}}$ and $\lambda_r^{\rm{ISC}}$ with $\Psi_{\rm{AST}} : = {(1 + {\lambda _{K + 2}^{\rm{AST}}})^{ - 1}}[1 - (\hat d - {\bar \varphi _K}(G))]$ and $\lambda_r^{\rm{AST}}$, where $\hat d: = \sum\nolimits_{r = 1}^K {{{\tilde d}_r}} /K$, \textbf{Theorem~\ref{Th:Main}} holds for \textbf{ASCENT}.
\end{proposition}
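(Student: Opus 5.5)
The plan is to re-run the chain Theorem~\ref{Th:Struc} $\to$ Lemma~\ref{le1} $\to$ Theorem~\ref{th2-1} $\to$ Lemma~\ref{le2} $\to$ Theorem~\ref{Th:Main} with ${\bf{D}}_\tau$ replaced by ${\bf{D}}+\mathrm{diag}(\bm{\tau})$. Only the first link uses the specific form of the correction. Once a Frobenius bound $\|{\bf{FO}}-{\bf{G}}\|_F\le(1+\lambda_1^{\rm AST})\sqrt{K\Psi_{\rm AST}}$ is available, everything downstream goes through verbatim. Those steps only use four facts: the orthonormality of $\{{\bf o}_r\}$, the identity (\ref{Eq:Struc}), the $K$Means ratio $\alpha$, and $d_{\min}|\mathcal M|\le\sum_r\mu(C_r\Delta\hat S_r)$. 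So the whole task reduces to re-proving Theorem~\ref{Th:Struc} for ${\bf L}_\tau=({\bf D}+\mathrm{diag}(\bm\tau))^{-1/2}{\bf A}({\bf D}+\mathrm{diag}(\bm\tau))^{-1/2}$.

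For that step I follow the ISC proof. I expand each column ${\bf g}_r={\bf G}_{:,r}$ in the orthonormal eigenbasis, ${\bf g}_r=\sum_j\beta_{rj}{\bf u}_j$. Because $\|{\bf g}_r\|_2=1$, the quadratic form gives
\[
1-{\bf g}_r^T{\bf L}_\tau{\bf g}_r=\sum_j\beta_{rj}^2(1-\lambda_j)\ge(1-\lambda_{K+2})\sum_{j\ge K+2}\beta_{rj}^2 .
\]
Hence the tail mass outside the top $K+1$ eigenvectors is at most $(1-{\bf g}_r^T{\bf L}_\tau{\bf g}_r)/(1-\lambda_{K+2})$. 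The new ingredient is a lower bound on ${\bf g}_r^T{\bf L}_\tau{\bf g}_r$ that uses node-wise corrections:
\[
{\bf g}_r^T{\bf L}_\tau{\bf g}_r=\frac{1}{\mu(\hat S_r)}\sum_{v_i,v_j\in\hat S_r}\frac{{\bf A}_{ij}\sqrt{d_id_j}}{\sqrt{(d_i+\tau_i)(d_j+\tau_j)}} .
\]
For $v_i\in\hat S_r$ we have $\tau_i\le\hat\tau_r$, so each factor $\sqrt{d_i/(d_i+\tau_i)}$ is at least $\sqrt{d_{\min}/(d_{\max}+\hat\tau_r)}$. (I should double-check the monotonicity; $d/(d+\tau)$ increases in $d$, so $d_{\min}$ is the right worst case.) This gives
\[
{\bf g}_r^T{\bf L}_\tau{\bf g}_r\ge\tilde d_r\,\frac{2|E(\hat S_r)|}{\mu(\hat S_r)}=\tilde d_r\bigl(1-\phi(\hat S_r)\bigr).
\]
Summing over $r$ and dividing by $K$ yields
\[
\frac1K\sum_r\bigl(1-{\bf g}_r^T{\bf L}_\tau{\bf g}_r\bigr)\le 1-\hat d+\bar\varphi_K(G).
\]
This is exactly the numerator of $\Psi_{\rm AST}$, and it is the one place where the cluster-wise $\hat\tau_r$ replaces the global $\tau$.

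The remaining part of Theorem~\ref{Th:Struc}'s proof is linear algebra that does not depend on the correction. I write ${\bf G}={\bf U}_{K+1}{\bf B}+{\bf R}$, with $\|{\bf R}\|_F^2\le K\Psi_{\rm AST}$ from the bound above. I then take ${\bf O}$ from the polar factor of ${\bf\Lambda}^{-1}{\bf B}$ (equivalently of ${\bf B}$), using ${\bf F}={\bf U}_{K+1}{\bf\Lambda}$. Finally I bound $\|{\bf FO}-{\bf G}\|_F$ by the triangle inequality, which contributes the factor $(1+\lambda_1)$ exactly as before. The case split $K\Psi\le1$ versus $K\Psi>1$ carries over unchanged.

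The main obstacle is the normalization in $\Psi_{\rm AST}$. The statement writes $(1+\lambda^{\rm AST}_{K+2})^{-1}$, while the spectral-gap argument naturally gives $(1-\lambda_{K+2})^{-1}$. I would first try to prove the bound with $1-\lambda_{K+2}$, which matches the ISC case and is presumably intended. I would then check whether the sign can be justified, for instance if $\lambda_{K+2}\le0$ or the gap argument is adapted. If it cannot, I would state the proposition with $m_K=1-\lambda_{K+2}^{\rm AST}$. A secondary point is that all $\tau_i>0$, since $\bm\tau^{(0)}={\bf d}$ is positive and mean aggregation preserves positivity. This keeps $\lambda_1^{\rm AST}<1$, so the constants in Theorem~\ref{Th:Main} are unaffected.
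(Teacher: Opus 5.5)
Your proposal follows the paper's proof essentially step for step. The only correction-dependent step is bounding the quadratic form of ${\bf G}_{:,r}$ via $\tau_i\le\hat\tau_r$ to get $1-\tilde d_r(1-\phi(\hat S_r))$, hence $\|{\bf{\hat F}}-{\bf G}\|_F^2\le K\Psi_{\rm AST}$; after that the paper simply reruns Theorem~\ref{Th:Struc} through Theorem~\ref{Th:Main} unchanged, with ${\bf O}$ the polar factor of ${\bf H}={\bf\Lambda}^{-1}{\bf B}$, which is not in general the polar factor of ${\bf B}$, so drop your ``equivalently'' parenthetical. You are also right about the normalization: the appendix derivation, like Remark~\ref{Q3}, actually produces $(1-\lambda^{\rm AST}_{K+2})^{-1}$, so the $(1+\lambda^{\rm AST}_{K+2})^{-1}$ in the statement should be read as a typo.
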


By comparing bounds of \textbf{Proposition~\ref{Th:ASCENT}} and \textbf{Theorem~\ref{Th:Main}}, we answer \textbf{Q3}: \textbf{Can ASCENT achieve a tighter bound than \textit{ISC}}?
The following \textbf{Remark~\ref{Q3}} demonstrates that \textit{it is possible for ASCENT to obtain a tighter bound}.
\begin{remark}\label{Q3}
    Let $\tau$ be the constant correction of \textit{ISC}. ASCENT has a tighter bound of $|\mathcal{M}|$ than \textit{ISC}, when $\lambda_{K+2}^{\rm{ISC}} \ge \lambda_{K+2}^{\rm{AST}}$ and $\sum\nolimits_{r = 1}^K {(\tau  - {{\hat \tau }_r})(1 - \phi ({{\hat S}_r}))/({d_{\max }} + {{\hat \tau }_r})}  \ge 0$, which are possible to satisfy.
\end{remark}
\begin{proof}
    When $\lambda_{K+2}^{\rm{ISC}} \ge \lambda_{K+2}^{\rm{AST}}$, we have
\begin{align*}
    {\Psi _{{\rm{ISC}}}} = {(1 - \lambda _{K + 2}^{{\rm{ISC}}})^{ - 1}}[1 - (\tilde d - \tilde d{{\bar \phi }_K}(G))] \ge {(1 - \lambda _{K + 2}^{{\rm{AST}}})^{ - 1}}[1 - (\tilde d - \tilde d{{\bar \phi }_K}(G))].
\end{align*}
One can further derive
\begin{align*}
    {\Psi _{{\rm{ISC}}}} - {\Psi _{{\rm{AST}}}} & \ge \frac{1}{{1 - \lambda _{K + 2}^{{\rm{AST}}}}}[(\hat d - \tilde d) + (\tilde d{{\bar \phi }_K}(G) - {{\bar \varphi }_K}(G))] \\
    & = \frac{1}{{1 - \lambda _{K + 2}^{{\rm{AST}}}}}\frac{1}{K}\sum\limits_{r = 1}^K {\left[ {(\frac{{{d_{\min }}}}{{{d_{\max }} + {{\hat \tau }_r}}} - \frac{{{d_{\min }}}}{{{d_{\max }} + \tau }})(1 - \phi ({{\hat S}_r}))} \right]} \\
    & = \frac{1}{{1 - \lambda _{K + 2}^{{\rm{AST}}}}}\frac{1}{K}\sum\limits_{r = 1}^K {\left[ {\frac{{{d_{\min }}(\tau  - {{\hat \tau }_r})}}{{({d_{\max }} + \tau )({d_{\max }} + {{\hat \tau }_r})}}(1 - \phi ({{\hat S}_r}))} \right]} \\
    & = \frac{1}{{1 - \lambda _{K + 2}^{{\rm{AST}}}}}\frac{{\tilde d}}{K}\sum\limits_{r = 1}^K {\left[ {\frac{{\tau  - {{\hat \tau }_r}}}{{{d_{\max }} + {{\hat \tau }_r}}}(1 - \phi ({{\hat S}_r}))} \right]}.
\end{align*}
To ensure ${\Psi _{{\rm{ISC}}}} \ge {\Psi _{{\rm{AST}}}} \Rightarrow ( {\Psi _{{\rm{ISC}}}} - {\Psi _{{\rm{AST}}}} ) \ge 0$, one needs to ensure
\begin{equation*}
    \sum\limits_{r = 1}^K {\frac{{\tau  - {{\hat \tau }_r}}}{{{d_{\max }} + {{\hat \tau }_r}}}} (1 - \phi ({\hat S_r})) \ge 0.
\end{equation*}
In particular, it is possible for ASCENT to satisfy the aforementioned two conditions.
\end{proof}

\subsection{Complexity Analysis}
Given a large-scale graph, we usually have $K, L \ll N < |E|$. Assume that the graph to be partitioned is sparse. For ASCENT, the time complexity of deriving node-wise corrections $\{ \tau_i \}$ is no more than $O(|E|L) = O(|E|)$ by fully utilizing the sparsity of a graph and sparse-dense matrix multiplication operation. ASCENT follows the same steps of (\romannumeral1) ED, (\romannumeral2) spectral embedding arrangement, (\romannumeral3) embedding normalization, and (\romannumeral4) $K$Means clustering with \textit{ISC}, which have complexities of (\romannumeral1) $O((N + |E|) K) = O(N + |E|)$ (e.g., via the efficient Lanczos algorithm \cite{lehoucq1998arpack} for ED), (\romannumeral2) $O(NK) = O(N)$, (\romannumeral3) $O(NK) = O(N)$, and (\romannumeral4) $O(NK^2t) = O(N)$ (with $t \ll N$ as the number of iterations in $K$Means), respectively.

In summary, the overall time complexity of ASCENT is about $O(N + |E|)$. It has the same complexity as most existing DCSC algorithms. Therefore, \textit{the additional step of deriving node-wise corrections $\{ \tau_i \}$ will not increase the complexity of ASCENT}.

\section{Experiments}\label{Sec:Exp}

\begin{table}[]\scriptsize
\caption{Statistic details of datasets.}\label{Tab:Data}
\centering
\begin{tabular}{l|l|l|l|lll}
\hline
\textbf{Datasets} & $N$ & $|E|$ & $K$ & min & max & \text{avg~$d$} \\ \hline
LFR-1 & 2,000 & 7,693-12,057 & 2-15 & \text{1-2} & \text{286-1,000} & \text{7-12} \\
LFR-2 & 2,000 & 8,489-32,202 & 4-14 & \text{2-6} & \text{375-1,000} & \text{8-32} \\ \hline
Caltech & 590 & 12,822 & 8 & 1 & 179 & 44 \\
Simmons & 1,137 & 24,257 & 4 & 1 & 293 & 43 \\
PolBlogs & 1,222 & 16,714 & 2 & 1 & 351 & 27 \\
BioGrid & 5,640 & 59,748 & 81 & 1 & 2570 & 21 \\ \hline
PPI & 3852 & 37841 & N/A & 1 & 593 & 12 \\
Wiki & 4,777 & 92,295 & N/A & 2 & 3,644 & 39 \\
BlogCatalog & 10,312 & 333,983 & N/A & 1 & 3,992 & 64 \\
\text{ogb-Protein} & 132,534 & 39,561,252 & N/A & 1 & 7,750 & 597 \\ 
RoadCA & 1,957,027 & 2,760,388 & N/A & 1 & 12 & 3 \\ 
LiveJournal & 3,997,962 & 34,681,189 & N/A & 1 & 14,815 & 17 \\ 
\text{ogb-Products} & 2,385,629 & 61,789,416 & N/A & 1 & 17,480 & 52 \\ 
Orkut & 3,072,441 & 117,185,083 & N/A & 1 & 33,313 & 76 \\ \hline
\end{tabular}
\end{table}

\begin{table}[]\scriptsize
\caption{Summary of methods to be evaluated.}\label{Tab:Meth}
\centering
\begin{tabular}{l|l}
\hline
\textbf{Baselines} & \textbf{Venues} \\ \hline
NJW & NeurIPS 2001~\cite{NJW}\\
SCORE & Annals of Statistics 2015~\cite{jin2015fast}\\
RSC & NeurIPS 2013~\cite{RSC}\\
SCORE+ (SC+) & Sankhya A 2021~\cite{jin2021improvements}\\
ISC & arXiv 2020~\cite{ISC}\\ 
IBHCD & NeurIPS 2019~\cite{dall2019revisiting}\\ 
CDBH & JMLR 2021~\cite{dall2021unified}\\ \hline
GraphEncoder (GE) & AAAI 2014~\cite{tian2014learning}\\
SDCN & WWW 2020~\cite{bo2020structural}\\
MinCutPool (MCP) & ICML 2020~\cite{bianchi2020spectral}\\
DMoN & JMLR 2023~\cite{tsitsulin2023graph}\\
S3GC & NeurIPS 2022~\cite{devvrit2022s3gc} \\ 
MAGI & KDD 2024~\cite{liu2024revisiting}\\\hline
\textbf{ASCENT} (\textbf{AST}) & \textbf{Ours}\\\hline
\end{tabular}
\end{table}

\subsection{Experiment Setup}
\textbf{Datasets}.
We used $2$ settings of synthetic benchmarks and $12$ real graphs for evaluation. Table~\ref{Tab:Data} summarizes their statistics, where $N$, $|E|$, and $K$ are the numbers of nodes, edges, and clusters (if available); $d$ denotes node degree.
In particular, these datasets cover various scales, with $N$ increasing from about $500$ to more than $3 \times 10^6$, which can effectively test the scalability of a graph clustering method.

LFR-net \cite{lancichinetti2008benchmark} is a synthetic benchmark that simulates properties of real-world graphs. It uses $({\bar d},{d_{\max }},{c_{\min }},{c_{\max }},\eta )$ to generate a graph, where ${\bar d}$ and ${d_{\max}}$ are the average and maximum degrees; ${c_{\min }}$ and ${c_{\max }}$ are the minimum and maximum cluster sizes; $\eta$ is the ratio between the external degree and total degree of a node $v_i$ w.r.t. the cluster $v_i$ belongs to.
To test the ability to handle (\romannumeral1) weak clustering structures and (\romannumeral2) high degree heterogeneity, we generated two sets of graphs, denoted as \textbf{LFR-1} and \textbf{LFR-2}.
For \textbf{LFR-1}, we fixed $(N, {\bar d}, d_{\max}, c_{\min}, c_{\max}) = (2000, 50, 1000, 50, 500)$ and adjusted $\eta \in \{ 0.1, 0.3, 0.5\}$. \textit{With the increase of $\eta$, clustering structures become weaker}. For \textbf{LFR-2}, we fixed $(N, d_{\max}, c_{\min}, c_{\max}, \eta) = (2000, 1000, 50, 500, 0.5)$ and set ${\bar d} \in \{ 10, 20, 30\}$, where \textit{lower $\bar d$ implies higher degree heterogeneity}.

\textbf{Caltech} \cite{red2011comparing} and \textbf{Simmons} \cite{red2011comparing} are two graphs regarding friendships of two online social networks.
\textbf{PolBlogs} \cite{adamic2005political} is a graph constructed based on the links between blogs with different political leanings.
\textbf{Wiki} \cite{grover2016node2vec} is a co-occurrence graph of words that appear in the first million bytes of the Wikipedia dump.
\textbf{BlogCatalog} \cite{grover2016node2vec} was extracted from social relationships provided by blogger authors.
\textbf{PPI} \cite{grover2016node2vec}, \textbf{BioGrid} \cite{stark2006biogrid}, and \textbf{ogbn-Protein} \cite{szklarczyk2019string} are three protein-protein interaction graphs. \textbf{LiveJournal} \cite{yang2012defining} and \textbf{Orkut} \cite{yang2012defining} were constructed based on the friendship relations in the online social networks of LiveJournal and Orkut. \textbf{RoadCA} \cite{leskovec2009community} describes intersections and endpoints of a road network in California. \textbf{ogbn-Products} \cite{chiang2019cluster} is a graph representing an Amazon product co-purchasing network.
During preprocessing, we followed \cite{qin2010spectral} to extract the clustering ground-truth of \textbf{BioGrid}, where the complex set CYC2008 \cite{pu2009up} with $231$ protein complexes was used as the reference set. For the rest datasets, we used their original formats.

Note that we could not use ground-truth of \textbf{PPI}, \textbf{Wiki}, \textbf{BlogCatalog}, \textbf{LiveJournal}, and \textbf{Orkut}, which describe overlapping community structures, as we consider disjoint graph partitioning.
As stated in Section~\ref{Sec:Prob}, we assume that graph attributes are unavailable. \textbf{ogbn-Protein} and \textbf{ogbn-Products} are attributed graphs for the evaluation of node classification. We did not use their ground-truth, because it is unclear that such ground-truth is dominated by topology or attributes.
\textbf{RoadCA} does not provide ground-truth.

\textbf{Baselines}.
As in Table~\ref{Tab:Meth}, we compared ASCENT over $13$ baselines with $2$ categories.
First, \textit{NJW} \cite{NJW}, \textit{SCORE} \cite{jin2015fast}, \textit{RSC} \cite{RSC}, \textit{SCORE+} \cite{jin2021improvements}, \textit{ISC} \cite{ISC}, \textit{IBHCD} \cite{dall2019revisiting}, and \textit{CDBH} \cite{dall2021unified} are SC methods.
Second, \textit{GraphEncoder} (\textit{GE}) \cite{tian2014learning}, \textit{SDCN} \cite{bo2020structural}, \textit{MinCutPool} (\textit{MCP}) \cite{bianchi2020spectral}, \textit{DMoN} \cite{tsitsulin2023graph}, \textit{S3GC} \cite{devvrit2022s3gc}, and \textit{MAGI} \cite{liu2024revisiting} are DGC approaches.

In particular, \textit{RSC}, \textit{SCORE+}, \textit{ISC}, \textit{IBHCD}, and \textit{CDBH} are DCSC baselines, where \textit{IBHCD} and \textit{CDBH} follow a new SC framework of iteratively searching each leading eigenvalue of the Bethe-Hessian matrix instead of that shown in Table~\ref{Tab:Alg-Sum}.
Note that we consider graph clustering without attributes, while most of the second type of baselines were originally designed for attributed graphs. We tried widely-used strategies of (\romannumeral1) SVD on the adjacency matrix and (\romannumeral2) one-hot encoding of degrees to derive feature inputs for them, with the best quality metrics reported.

\textbf{Evaluation Metrics}.
For datasets with feasible ground-truth, we used \textbf{normalized mutual information} (\textbf{NMI}) and \textbf{accuracy} (\textbf{AC}) as quality metrics.
We also used the \textbf{average conductance} defined in (\ref{Eq:Cond}) as an unsupervised metric for all the datasets.
For those without feasible ground-truth, we followed prior work \cite{dhillon2007weighted,chan2011convex} to set $K \in \{ 2, 8, 32\}$ and recorded the corresponding average conductance values. Usually, \textit{smaller conductance as well as larger NMI and AC indicate better clustering quality}.
We define that a method encounter the out-of-time exception if it fails to derive a feasible result within $10^4$ seconds.
Hyper-parameters of all the methods were tuned based on the unsupervised average conductance metric.

\begin{table}[]\scriptsize
\caption{Synthetic graph analysis on \textbf{LFR-1} and \textbf{LFR-2}, with the best and second-best metrics in \textbf{bold} and \underline{underlined}.}\label{Tab:Eva-Res-LFR}
\centering
\begin{tabular}{p{0.75cm}|p{0.35cm}p{0.35cm}p{0.55cm}|p{0.35cm}p{0.35cm}p{0.55cm}|p{0.35cm}p{0.35cm}p{0.55cm}|p{0.35cm}p{0.35cm}p{0.55cm}|p{0.35cm}p{0.35cm}p{0.55cm}|p{0.35cm}p{0.35cm}p{0.55cm}}
\hline
\multirow{2}{*}{} & \multicolumn{3}{c|}{\textbf{LFR-1}:$\eta$=0.5} & \multicolumn{3}{c|}{0.3} & \multicolumn{3}{c|}{0.1} & \multicolumn{3}{c|}{\textbf{LFR-2}:$d$=10} & \multicolumn{3}{c|}{20} & \multicolumn{3}{c}{30} \\ \cline{2-19} 
 & NMI & AC & Cond & NMI & AC & Cond & NMI & AC & Cond & NMI & AC & Cond & NMI & AC & Cond & NMI & AC & Cond \\ 
& ($\uparrow$,\%) & ($\uparrow$,\%) & ($\downarrow$,\%) & ($\uparrow$,\%) & ($\uparrow$,\%) & ($\downarrow$,\%) & ($\uparrow$,\%) & ($\uparrow$,\%) & ($\downarrow$,\%) & ($\uparrow$,\%) & ($\uparrow$,\%) & ($\downarrow$,\%) & ($\uparrow$,\%) & ($\uparrow$,\%) & ($\downarrow$,\%) & ($\uparrow$,\%) & ($\uparrow$,\%) & ($\downarrow$,\%) \\ \hline
NJW & 24.91 & 43.93 & 57.12 & 56.90 & 77.06 & 34.89 & 87.64 & 96.90 & 10.04 & 26.27 & 44.32 & 57.10 & 66.58 & 79.15 & 53.18 & 81.47 & 90.33 & 51.37 \\
SCORE & 30.19 & 48.60 & 66.10 & 59.77 & 80.17 & 39.64 & 77.15 & 86.90 & 21.68 & 30.09 & 47.74 & 66.48 & 65.19 & 77.27 & 57.22 & 83.54 & 89.17 & 53.55 \\
RSC & 39.18 & 60.59 & 52.82 & 54.43 & 78.65 & 32.17 & 80.73 & 94.98 & 10.11 & 39.75 & 61.22 & 53.94 & 61.86 & 77.19 & 52.60 & 76.63 & 87.41 & 51.66 \\
SC+ & 43.66 & 65.38 & 53.94 & 60.19 & 84.03 & 32.32 & 78.99 & 91.88 & 12.75 & 44.72 & 66.30 & 53.49 & 71.81 & 85.38 & 52.22 & 85.64 & 93.61 & 50.99 \\
ISC & \underline{44.16} & \underline{65.73} & \underline{52.31} & 65.27 & \underline{86.59} & \underline{31.79} & 88.01 & 97.23 & 10.00 & \underline{44.77} & \underline{66.59} & \underline{52.87} & \underline{72.62} & \underline{85.59} & 52.21 & \underline{86.88} & \underline{93.91} & 50.97 \\
IBHCD & 37.60 & 53.97 & 58.52 & 55.59 & 78.61 & 32.04 & 83.25 & 95.86 & \underline{9.86} & 38.27 & 55.39 & 58.41 & 69.85 & 79.32 & \underline{51.70} & 81.93 & 88.28 & 50.94 \\
CDBH & 39.06 & 59.40 & 54.36 & 56.34 & 78.79 & 32.84 & 86.56 & 96.47 & 10.13 & 39.23 & 59.28 & 54.68 & 69.97 & 79.64 & 51.77 & 82.24 & 88.95 & \underline{50.85} \\ \hline
GE & 20.60 & 34.98 & 87.18 & 38.94 & 60.69 & 79.02 & 61.70 & 81.73 & 73.49 & 21.37 & 36.21 & 87.59 & 38.89 & 54.09 & 86.13 & 54.24 & 68.11 & 84.22 \\
SDCN & 1.56 & 24.23 & 86.65 & 0.37 & 33.74 & 76.78 & 1.66 & 40.64 & 70.92 & 1.61 & 24.00 & 87.10 & 0.88 & 25.29 & 85.09 & 1.25 & 26.27 & 83.45 \\
MCP & 0.56 & 24.12 & 84.20 & 11.66 & 42.72 & 65.46 & 63.34 & 84.18 & 19.60 & 0.29 & 23.86 & 85.03 & 0.00 & 25.16 & 83.21 & 0.00 & 25.90 & 81.91 \\
DMoN & 30.29 & 47.46 & 57.45 & 55.36 & 76.08 & 36.38 & 80.64 & 93.41 & 11.92 & 29.70 & 46.06 & 59.16 & 49.43 & 61.52 & 60.48 & 56.68 & 67.24 & 59.51 \\
S3GC & 40.69 & 59.12 & 53.40 & \underline{68.65} & 86.47 & 32.97 & \underline{89.14} & \underline{97.36} & \textbf{9.85} & 40.69 & 59.06 & 53.36 & 70.95 & 82.58 & 52.56 & 77.77 & 86.21 & 52.74 \\
MAGI & 40.77 & 57.49 & 59.65 & 65.65 & 84.20 & 33.19 & 88.51 & 97.20 & \textbf{9.85} & 40.99 & 58.47 & 59.43 & 67.14 & 81.04 & 63.93 & 76.31 & 87.14 & 72.46 \\ \hline
\textbf{AST} & \textbf{46.42} & \textbf{67.80} & \textbf{52.01} & \textbf{68.83} & \textbf{88.73} & \textbf{31.74} & \textbf{89.38} & \textbf{97.60} & 9.95 & \textbf{46.45} & \textbf{68.51} & \textbf{51.72} & \textbf{77.53} & \textbf{88.84} & \textbf{51.14} & \textbf{89.81} & \textbf{95.44} & \textbf{50.57} \\
Impv.\% & \textbf{5.12} & \textbf{3.15} & 0.57 & 0.26 & \textbf{2.47} & 0.16 & 0.27 & 0.25 & - & \textbf{3.75} & \textbf{2.88} & \textbf{2.18} & \textbf{6.76} & \textbf{3.80} & \textbf{1.08} & \textbf{3.37} & \textbf{1.63} & 0.55 \\ \hline
\end{tabular}
\end{table}

\textbf{Experiment Environments \& Implementations}.
All the experiments were conducted on a server with one Intel Xeon Gold 6430 CPU, one $24$GB memory GPU, $120$GB main memory, and Ubuntu Linux OS.
We implemented \textit{NJW}, \textit{SCORE}, \textit{RSC}, \textit{SCORE+}, \textit{ISC}, and ASCENT using \texttt{Python}, including the sparse ED for small datasets (e.g., \textbf{PPI}, \textbf{Wiki}, and \textbf{BlogCatalog}) and the Locally Optimal Block Preconditioned Conjugate Gradient (LOBPCG) solver \cite{knyazev2001toward} for large datasets (e.g., \textbf{obgn-Protein}, \textbf{RoadCA}, \textbf{LiveJournal}, \textbf{ogbn-Products}, and \textbf{Orkut}) supported by \texttt{SciPy}. We adopted the official open-source implementations of \textit{IBHCD}, \textit{CDBH}, and all the rest DGC baselines. Each DGC method was implemented via \texttt{PyTorch} or \texttt{TensorFlow} and thus was run on the GPU.
We have anonymously provided our code\footnote{https://anonymous.4open.science/r/ASCENT-714A} and will make it public if accepted.

\begin{table}[]\scriptsize
\caption{Evaluation results on real datasets, where the best and second-best metrics are in \textbf{bold} and \underline{underlined}; OOT and OOM represent out-of-time and out-of-memory exceptions.}\label{Tab:Eva-Res}
\centering
\begin{tabular}{p{0.75cm}|p{0.35cm}p{0.35cm}p{0.55cm}|p{0.35cm}p{0.35cm}p{0.55cm}|p{0.35cm}p{0.35cm}p{0.55cm}|p{0.35cm}p{0.35cm}p{0.55cm}|p{0.35cm}p{0.35cm}p{0.55cm}|p{0.35cm}p{0.35cm}p{0.55cm}}
\hline
\multirow{3}{*}{} & \multicolumn{3}{c|}{\textbf{Caltech}} & \multicolumn{3}{c|}{\textbf{Simmons}} & \multicolumn{3}{c|}{\textbf{PolBlogs}} & \multicolumn{3}{c|}{\textbf{BioGrid}} & \multicolumn{3}{c|}{\textbf{PPI}} & \multicolumn{3}{c}{\textbf{Wiki}} \\ \cline{2-19} 
 & NMI & AC & Cond & NMI & AC & Cond & NMI & AC & Cond & NMI & AC & Cond & \multicolumn{3}{c|}{Cond($\downarrow$,\%)} & \multicolumn{3}{c}{Cond($\downarrow$,\%)} \\ \cline{2-19} 
 & ($\uparrow$,\%) & ($\uparrow$,\%) & ($\downarrow$,\%) & ($\uparrow$,\%) & ($\uparrow$,\%) & ($\downarrow$,\%) & ($\uparrow$,\%) & ($\uparrow$,\%) & ($\downarrow$,\%) & ($\uparrow$,\%) & ($\uparrow$,\%) & ($\downarrow$,\%) & $K$=2 & 8 & 32 & $K$=2 & 8 & 32 \\ \hline
NJW & 62.13 & 75.39 & 50.76 & 67.96 & 73.44 & 33.87 & 0.06 & 51.88 & 26.94 & 41.83 & 12.84 & 66.20 & 31.09 & 61.18 & 72.06 & 40.16 & 74.32 & \underline{85.32} \\
SCORE & 56.39 & 69.05 & 50.12 & 58.53 & 76.39 & 29.92 & 72.50 & 95.25 & 7.67 & 13.93 & 7.37 & 90.90 & 28.82 & 69.07 & 91.81 & 40.29 & 82.73 & 92.08 \\
RSC & 58.58 & 71.05 & 49.86 & 61.52 & 78.61 & 28.88 & 71.33 & 94.76 & 7.34 & \textbf{43.64} & \underline{13.52} & \underline{66.07} & \underline{26.02} & 55.12 & 68.80 & 37.75 & \underline{73.06} & \underline{85.32} \\
SC+ & 69.14 & 82.85 & 48.44 & 72.95 & 88.81 & 27.41 & \underline{73.08} & \underline{95.33} & 7.53 & 24.36 & 9.44 & 82.02 & 26.71 & 55.30 & 79.79 & 39.06 & 75.69 & 86.87 \\
ISC & \underline{70.28} & \underline{83.73} & \underline{48.32} & \underline{73.57} & \underline{89.36} & \underline{27.35} & 72.67 & 95.09 & 7.35 & 43.21 & 13.26 & \underline{66.07} & 26.22 & 54.01 & 68.17 & 37.91 & 73.31 & 85.78 \\
IBHCD & 59.18 & 70.20 & 49.51 & 64.77 & 81.71 & 28.33 & \underline{73.30} & \textbf{95.34} & \underline{7.33} & \multicolumn{3}{c|}{OOT} & 27.62 & 56.18 & \underline{66.33} & \underline{37.58} & 73.47 & 89.43 \\
CDBH & 61.53 & 72.98 & 49.72 & 64.77 & 81.71 & 28.33 & \underline{73.30} & \textbf{95.34} & \underline{7.33} & \multicolumn{3}{c|}{OOT} & 26.72 & 56.18 & 69.61 & \textbf{37.43} & 74.97 & 86.14 \\ \hline
GE & 36.75 & 44.44 & 68.97 & 49.19 & 58.86 & 48.13 & 6.15 & 51.88 & 50.13 & 31.74 & 11.28 & 98.51 & 43.11 & 84.96 & 96.15 & 52.08 & 89.02 & 97.23 \\
SDCN & 28.50 & 34.03 & 74.79 & 38.28 & 53.91 & 51.39 & 14.96 & 62.93 & 28.48 & 20.77 & 8.44 & 95.83 & 50.00 & 86.01 & 94.69 & 50.00 & 87.30 & 96.41 \\
MCP & 50.57 & 63.32 & 58.16 & 64.66 & 82.90 & 29.80 & 58.15 & 86.56 & 16.26 & 0.00 & 4.43 & 98.77 & 50.00 & 87.50 & 96.88 & 50.00 & 87.50 & 96.88 \\
DMoN & 66.29 & 72.47 & 53.97 & 63.64 & 81.20 & 28.00 & 71.16 & 94.91 & 7.47 & 41.73 & 12.91 & 79.74 & 26.31 & 55.43 & 70.13 & 37.09 & 77.25 & 91.59 \\
S3GC & 61.23 & 71.69 & 51.89 & 66.62 & 82.02 & 28.44 & 71.53 & 95.04 & 7.38 & 33.20 & 11.66 & 73.36 & 29.88 & 67.74 & 72.19 & 51.76 & 84.74 & 89.05 \\
MAGI & 66.26 & 78.17 & 49.16 & 71.50 & 86.77 & 27.73 & 68.87 & 94.39 & 7.38 & 10.68 & 6.88 & 84.65 & 28.45 & \textbf{51.75} & 74.96 & 42.99 & 81.02 & 92.78 \\ \hline
\textbf{AST} & \textbf{71.20} & \textbf{84.41} & \textbf{48.28} & \textbf{74.06} & \textbf{89.62} & \textbf{27.34} & \textbf{73.48} & \textbf{95.34} & \textbf{7.31} & \underline{43.28} & \textbf{13.59} & \textbf{64.88} & \textbf{25.77} & \underline{52.73} & \textbf{65.54} & \textbf{37.43} & \textbf{72.00} & \textbf{84.77} \\
Impv.\% & \textbf{1.31} & 0.81 & 0.08 & 0.67 & 0.29 & 0.04 & 0.25 & - & 0.27 & - & 0.52 & \textbf{1.80} & 0.96 & - & \textbf{1.19} & - & \textbf{1.45} & 0.64 \\ \hline
\multirow{3}{*}{} & \multicolumn{3}{c|}{\textbf{BlogCatalog}} & \multicolumn{3}{c|}{\textbf{ogbn-Protein}} & \multicolumn{3}{c|}{\textbf{RoadCA}} & \multicolumn{3}{c|}{\textbf{LiveJournal}} & \multicolumn{3}{c|}{\textbf{ogbn-Products}} & \multicolumn{3}{c}{\textbf{Orkut}} \\ \cline{2-19} 
 & \multicolumn{3}{c|}{Cond($\downarrow$,\%)} & \multicolumn{3}{c|}{Cond($\downarrow$,\%)} & \multicolumn{3}{c|}{Cond($\downarrow$,\%)} & \multicolumn{3}{c|}{Cond($\downarrow$,\%)} & \multicolumn{3}{c|}{Cond($\downarrow$,\%)} & \multicolumn{3}{c}{Cond($\downarrow$,\%)} \\ \cline{2-19} 
 & $K$=2 & 8 & 32 & $K$=2 & 8 & 32 & $K$=2 & 8 & 32 & $K$=2 & 8 & 32 & $K$=2 & 8 & 32 & $K$=2 & 8 & 32 \\ \hline
NJW & 29.35 & 67.83 & 82.79 & \underline{6.34} & \underline{11.92} & 41.36 & \textbf{4.50} & \underline{11.09} & \underline{15.08} & \underline{5.77} & \underline{20.21} & \underline{36.92} & 6.03 & 8.67 & 10.31 & \underline{0.64} & 21.50 & 33.88 \\
SCORE & 29.65 & 77.84 & 93.67 & 22.38 & 44.69 & 82.42 & 50.00 & 87.50 & 96.88 & 50.00 & 87.50 & 96.88 & 50.00 & 87.50 & 96.88 & 50.00 & 87.50 & 96.88 \\
RSC & 29.24 & 66.56 & 81.74 & 12.03 & 15.87 & 36.63 & 12.31 & 23.13 & 29.57 & 23.34 & 50.82 & 60.82 & 17.55 & 34.36 & 26.54 & 22.84 & 30.68 & 42.15 \\
SC+ & 29.33 & 69.95 & 86.19 & 7.09 & 21.92 & 64.33 & 50.00 & 87.50 & 96.85 & 49.96 & 87.48 & 96.87 & 49.32 & 86.30 & 96.50 & 49.98 & 87.49 & 96.87 \\
ISC & \underline{29.26} & \underline{65.02} & \underline{81.43} & 6.93 & 14.88 & \underline{34.64} & 5.84 & 14.24 & 18.10 & \textbf{5.74} & 21.93 & 36.73 & \underline{5.29} & \underline{8.02} & \underline{10.18} & \textbf{0.61} & \underline{16.27} & \underline{31.01} \\
IBHCD & 29.31 & OOT & OOT & \multicolumn{3}{c|}{OOT} & \multicolumn{3}{c|}{OOT} & \multicolumn{3}{c|}{OOT} & \multicolumn{3}{c|}{OOT} & \multicolumn{3}{c}{OOT} \\
CDBH & 29.31 & 68.17 & 82.03 & \multicolumn{3}{c|}{OOT} & \multicolumn{3}{c|}{OOT} & \multicolumn{3}{c|}{OOT} & \multicolumn{3}{c|}{OOT} & \multicolumn{3}{c}{OOT} \\ \hline
GE & 49.59 & 87.49 & 96.78 & \multicolumn{3}{c|}{OOM} & \multicolumn{3}{c|}{OOM} & \multicolumn{3}{c|}{OOM} & \multicolumn{3}{c|}{OOM} & \multicolumn{3}{c}{OOM} \\
SDCN & 49.87 & 85.44 & 95.80 & \multicolumn{3}{c|}{OOM} & \multicolumn{3}{c|}{OOM} & \multicolumn{3}{c|}{OOM} & \multicolumn{3}{c|}{OOM} & \multicolumn{3}{c}{OOM} \\
MCP & 34.15 & 86.97 & 96.82 & \multicolumn{3}{c|}{OOM} & \multicolumn{3}{c|}{OOM} & \multicolumn{3}{c|}{OOM} & \multicolumn{3}{c|}{OOM} & \multicolumn{3}{c}{OOM} \\
DMoN & 42.89 & 75.95 & 90.28 & \multicolumn{3}{c|}{OOM} & \multicolumn{3}{c|}{OOM} & \multicolumn{3}{c|}{OOM} & \multicolumn{3}{c|}{OOM} & \multicolumn{3}{c}{OOM} \\
S3GC & 38.28 & 72.82 & 87.43 & 6.50 & 13.46 & 51.60 & 8.4 & 26.75 & 48.54 & 16.36 & 58.31 & 85.70 & 13.46 & 41.24 & 67.72 & 2.55 & 45.19 & 80.18 \\
MAGI & 30.70 & 75.64 & 92.05 & 15.38 & 33.81 & 63.74 & 14.27 & 33.65 & 48.94 & 23.75 & 64.96 & 85.52 & 19.68 & 48.01 & 67.50 & 15.20 & 66.37 & 88.51 \\ \hline
\textbf{AST} & \textbf{29.23} & \textbf{64.22} & \textbf{80.68} & \textbf{3.47} & \textbf{11.06} & \textbf{33.20} & \underline{4.53} & \textbf{10.89} & \textbf{14.67} & 6.34 & \textbf{19.90} & \textbf{35.61} & \textbf{4.80} & \textbf{7.97} & \textbf{9.94} & \textbf{0.61} & \textbf{15.13} & \textbf{29.96} \\
Impv.\% & 0.10 & \textbf{1.23} & 0.92 & \textbf{45.27} & \textbf{7.21} & \textbf{4.16} & - & \textbf{1.80} & \textbf{2.72} & - & \textbf{1.53} & \textbf{3.55} & \textbf{9.26} & 0.62 & \textbf{2.36} & - & \textbf{7.01} & \textbf{3.39} \\ \hline
\end{tabular}
\end{table}

\subsection{Quality Evaluation}

For each setting of LFR-net, we independently generated $100$ graphs and recorded the mean of all quality metrics over them. On each real dataset, we repeated the evaluation procedure over $5$ random seeds and recorded the mean of each metric.
Average results on synthetic and real graphs are depicted in Tables~\ref{Tab:Eva-Res-LFR} and \ref{Tab:Eva-Res}, where a metric is in \textbf{bold} or \underline{underlined} if it performs the best or second-best; OOM and OOT denote the out-of-memory and out-of-time exceptions.

In most cases, SC methods have much better quality than DGC baselines. It indicates that some GNN-based approaches, with standard strategies to extract auxiliary features from topology, may fail to handle the high degree heterogeneity and weak clustering structures, although some of them are claimed to be effective for attributed graphs. Some DGC approaches also encounter the out-of-memory exception on large-scale graphs (e.g., \textbf{Orkut}), due to the reconstruction of an $N \times N$ matrix (e.g., normalized adjacency matrices in \textit{GraphEncoder}) or space-consuming training procedures. In contrast, SC methods can derive feasible clustering results on all the datasets.

Compared with SC algorithms using the standard procedures about graph Laplacian (i.e., ASCENT and those summarized in Table~\ref{Tab:Alg-Sum}), \textit{IBHCD} and \textit{CDBH} that follow a new SC framework may easily suffer from low efficiency and cannot even derive feasible clustering results within $10^4$ seconds (i.e., out-of-time exception), especially for large $N$s and $K$s.

ASCENT performs the best in most cases and can achieve significantly better quality than other DCSC baselines (e.g., with improvements from about $1\%$ to $40\%$) in some cases.
It validates that ASCENT, a simple yet effective extension of DCSC, is more powerful in handling the high degree heterogeneity and weak clustering structures.

\begin{table}[]\scriptsize
\caption{Evaluation of runtime$\downarrow$ (sec) on real datasets.}\label{Tab:Eva-Time}
\centering
\begin{tabular}{p{0.95cm}|p{0.5cm}p{0.25cm}p{0.5cm}p{0.55cm}|p{0.5cm}p{0.25cm}p{0.5cm}p{0.55cm}|p{0.5cm}p{0.25cm}p{0.5cm}p{0.55cm}|p{0.5cm}p{0.25cm}p{0.5cm}p{0.5cm}}
\hline
\multirow{2}{*}{} & \multicolumn{4}{c|}{\textbf{Caltech}} & \multicolumn{4}{c|}{\textbf{Simmons}} & \multicolumn{4}{c|}{\textbf{PolBlogs}} & \multicolumn{4}{c}{\textbf{BioGrid}} \\ \cline{2-17} 
 & Total & $\tau$ & ED/T & C/I & Total & $\tau$ & ED/T & C/I & Total & $\tau$ & ED/T & C/I & Total & $\tau$ & ED/T & C/I \\ \hline
NJW & 0.09 & N/A & 0.09 & 0.09 & 0.15 & N/A & 0.14 & 0.01 & 0.10 & N/A & 0.10 & 0.01 & 0.90 & N/A & 0.83 & 0.06 \\
SCORE & 0.07 & N/A & 0.06 & 0.01 & 0.11 & N/A & 0.09 & 0.01 & 0.08 & N/A & 0.07 & 0.01 & 0.54 & N/A & 0.42 & 0.12 \\
RSC & 0.10 & N/A & 0.09 & 0.01 & 0.15 & N/A & 0.14 & 0.01 & 0.12 & N/A & 0.11 & 0.01 & 0.75 & N/A & 0.63 & 0.12 \\
SCORE+ & 0.15 & N/A & 0.14 & 0.01 & 0.25 & N/A & 0.24 & 0.01 & 0.22 & N/A & 0.21 & 0.01 & 0.98 & N/A & 0.85 & 0.13 \\
ISC & 0.10 & N/A & 0.09 & 0.01 & 0.16 & N/A & 0.15 & 0.01 & 0.13 & N/A & 0.12 & 0.01 & 0.76 & N/A & 0.64 & 0.12 \\ 
IBHCD & 0.87 & N/A & - & - & 0.91 & N/A & - & - & 0.73 & N/A & - & - & \multicolumn{4}{c}{OOT} \\
CDBH & 3.09 & N/A & - & - & 3.65 & N/A & - & - & 1.11 & N/A & - & - & \multicolumn{4}{c}{OOT} \\ \hline
GE & 1.37 & N/A & 1.36 & 0.01 & 2.19 & N/A & 2.18 & 0.01 & 1.54 & N/A & 1.53 & 0.01 & 12.84 & N/A & 12.38 & 0.46 \\
SDCN & 0.29 & N/A & 0.29 & 0.00 & 0.38 & N/A & 0.36 & 0.02 & 11.32 & N/A & 11.29 & 0.03 & 38.12 & N/A & 38.08 & 0.04 \\
MCP & 25.18 & N/A & 25.17 & 0.02 & 40.61 & N/A & 40.60 & 0.02 & 9.45 & N/A & 9.44 & 0.01 & 11.59 & N/A & 11.57 & 0.02 \\
DMoN & 16.50 & N/A & 16.49 & 0.01 & 19.68 & N/A & 19.66 & 0.02 & 17.76 & N/A & 17.74 & 0.01 & 28.79 & N/A & 28.76 & 0.04 \\
S3GC & 22.24 & N/A & 22.22 & 0.01 & 25.34 & N/A & 25.32 & 0.02 & 27.93 & N/A & 27.91 & 0.01 & 33.78 & N/A & 33.45 & 0.33 \\
MAGI & 4.04 & N/A & 4.02 & 0.02 & 2.72 & N/A & 2.71 & 0.01 & 2.70 & N/A & 2.69 & 0.01 & 22.41 & N/A & 22.26 & 0.15 \\ \hline
\textbf{ASCENT} & 0.10 & \text{2e-4} & 0.08 & 0.01 & 0.16 & \text{3e-4} & 0.15 & 0.01 & 0.12 & \text{2e-4} & 0.10 & 0.01 & 0.72 & \text{9e-4} & 0.59 & 0.13 \\ \hline
\multirow{2}{*}{} & \multicolumn{4}{c|}{\textbf{PPI}} & \multicolumn{4}{c|}{\textbf{Wiki}} & \multicolumn{4}{c|}{\textbf{BlogCatalog}} & \multicolumn{4}{c}{\textbf{ogbn-Protein}} \\ \cline{2-17} 
 & Total & $\tau$ & ED/T & C/I & Total & $\tau$ & ED/T & C/I & Total & $\tau$ & ED/T & C/I & Total & $\tau$ & ED/T & C/I \\ \hline
NJW & 0.37 & N/A & 0.34 & 0.03 & 0.86 & N/A & 0.83 & 0.03 & 1.64 & N/A & 1.60 & 0.04 & 196.10 & N/A & 195.72 & 0.38 \\
SCORE & 0.43 & N/A & 0.39 & 0.04 & 0.46 & N/A & 0.42 & 0.04 & 1.15 & N/A & 1.09 & 0.06 & 96.49 & N/A & 96.24 & 0.25 \\
RSC & 0.46 & N/A & 0.42 & 0.03 & 0.75 & N/A & 0.70 & 0.05 & 1.70 & N/A & 1.62 & 0.09 & 148.49 & N/A & 148.13 & 0.36 \\
SCORE+ & 0.58 & N/A & 0.53 & 0.05 & 0.78 & N/A & 0.71 & 0.08 & 2.04 & N/A & 1.94 & 0.10 & 174.61 & N/A & 174.29 & 0.32 \\
ISC & 0.60 & N/A & 0.55 & 0.05 & 0.82 & N/A & 0.74 & 0.08 & 1.96 & N/A & 1.87 & 0.08 & 174.39 & N/A & 174.04 & 0.35 \\ 
IBHCD & 612.41 & N/A & - & - & 803.66 & N/A & - & - & \multicolumn{4}{c|}{OOT} & \multicolumn{4}{c}{OOT} \\
CDBH & 447.01 & N/A & - & - & 345.07 & N/A & - & - & 4825.14 & N/A & - & - & \multicolumn{4}{c}{OOT} \\ \hline
GE & 0.74 & N/A & 0.70 & 0.04 & 18.98 & N/A & 18.83 & 0.15 & 125.91 & N/A & 125.64 & 0.27 & \multicolumn{4}{c}{OOM} \\
SDCN & 34.37 & N/A & 34.33 & 0.04 & 44.85 & N/A & 44.81 & 0.04 & 58.81 & N/A & 58.61 & 0.20 & \multicolumn{4}{c}{OOM} \\
MCP & 7.94 & N/A & 7.91 & 0.02 & 15.09 & N/A & 15.06 & 0.03 & 25.08 & N/A & 25.01 & 0.07 & \multicolumn{4}{c}{OOM} \\
DMoN & 214.45 & N/A & 214.42 & 0.03 & 65.10 & N/A & 65.05 & 0.05 & 90.03 & N/A & 89.86 & 0.17 & \multicolumn{4}{c}{OOM} \\
S3GC & 26.62 & N/A & 26.41 & 0.21 & 19.09 & N/A & 18.95 & 0.14 & 40.65 & N/A & 40.37 & 0.29 & 228.45 & N/A & 248.42 & 2.87 \\
MAGI & 11.98 & N/A & 11.92 & 0.06 & 19.67 & N/A & 19.57 & 0.10 & 19.42 & N/A & 19.32 & 0.11 & 2496.72 & N/A & 2488.68 & 8.04 \\ \hline
\textbf{ASCENT} & 0.51 & \text{8e-4} & 0.46 & 0.05 & 0.68 & \text{9e-4} & 0.62 & 0.06 & 1.60 & 0.004 & 1.51 & 0.09 & 163.46 & 0.20 & 162.95 & 0.30 \\ \hline
\multirow{2}{*}{} & \multicolumn{4}{c|}{\textbf{RoadCA}} & \multicolumn{4}{c|}{\textbf{LiveJournal}} & \multicolumn{4}{c|}{\textbf{ogbn-Products}} & \multicolumn{4}{c}{\textbf{Orkut}} \\ \cline{2-17} 
 & Total & $\tau$ & ED/T & C/I & Total & $\tau$ & ED/T & C/I & Total & $\tau$ & ED/T & C/I & Total & $\tau$ & ED/T & C/I \\ \hline
NJW & 151.52 & N/A & 139.41 & 12.11 & 386.50 & N/A & 361.58 & 24.92 & 566.75 & N/A & 557.17 & 9.57 & 777.96 & N/A & 768.46 & 9.50 \\
SCORE & 210.59 & N/A & 205.58 & 5.01 & 426.21 & N/A & 414.23 & 11.98 & 481.20 & N/A & 473.76 & 7.44 & 649.19 & N/A & 640.29 & 8.91 \\
RSC & 184.60 & N/A & 171.99 & 12.60 & 374.68 & N/A & 346.00 & 28.68 & 455.25 & N/A & 445.34 & 9.91 & 606.13 & N/A & 595.72 & 10.41 \\
SCORE+ & 177.37 & N/A & 172.69 & 4.68 & 147.57 & N/A & 135.42 & 12.15 & 239.05 & N/A & 231.63 & 7.42 & 421.95 & N/A & 413.77 & 8.18 \\
ISC & 182.61 & N/A & 169.08 & 13.53 & 400.84 & N/A & 369.43 & 31.41 & 520.69 & N/A & 513.68 & 7.01 & 717.78 & N/A & 707.77 & 10.01 \\ 
IBHCD & \multicolumn{4}{c|}{OOT} & \multicolumn{4}{c|}{OOT} & \multicolumn{4}{c|}{OOT} & \multicolumn{4}{c}{OOT} \\
CDBH & \multicolumn{4}{c|}{OOT} & \multicolumn{4}{c|}{OOT} & \multicolumn{4}{c|}{OOT} & \multicolumn{4}{c}{OOT} \\ \hline
GE & \multicolumn{4}{c|}{OOM} & \multicolumn{4}{c|}{OOM} & \multicolumn{4}{c|}{OOM} & \multicolumn{4}{c}{OOM} \\
SDCN & \multicolumn{4}{c|}{OOM} & \multicolumn{4}{c|}{OOM} & \multicolumn{4}{c|}{OOM} & \multicolumn{4}{c}{OOM} \\
MCP & \multicolumn{4}{c|}{OOM} & \multicolumn{4}{c|}{OOM} & \multicolumn{4}{c|}{OOM} & \multicolumn{4}{c}{OOM} \\
DMoN & \multicolumn{4}{c|}{OOM} & \multicolumn{4}{c|}{OOM} & \multicolumn{4}{c|}{OOM} & \multicolumn{4}{c}{OOM} \\
S3GC & 194.39 & N/A & 154.81 & 44.30 & 512.44 & N/A & 482.12 & 45.20 & 595.90 & N/A & 605.01 & 73.33 & 798.82 & N/A & 828.59 & 29.58 \\
MAGI & 7159.19 & N/A & 7139.72 & 19.48 & 7998.09 & N/A & 7872.18 & 125.91 & 7273.09 & N/A & 7142.89 & 130.20 & 7568.24 & N/A & 7463.25 & 104.99 \\ \hline
\textbf{ASCENT} & 133.46 & 0.99 & 119.59 & 12.88 & 387.97 & 12.41 & 348.03 & 27.54 & 567.44 & 39.36 & 520.13 & 7.95 & 740.52 & 44.15 & 684.90 & 11.47 \\ \hline
\end{tabular}
\end{table}

\subsection{Efficiency Analysis}

We further evaluated the efficiency of each method in terms of its overall runtime (sec) to get a feasible clustering result.
In particular, we recorded the time of different steps for each method.
Results of the efficiency analysis on real datasets are depicted in Table~\ref{Tab:Eva-Time}, where (\romannumeral1) `$\tau$', (\romannumeral2) `ED', (\romannumeral3) `C', (\romannumeral4) `T', and (\romannumeral5) `I' denote the time of (\romannumeral1) deriving node-wise corrections $\{ \tau_i \}$ (only for ASCENT), (\romannumeral2) ED of graph Laplacian, (\romannumeral3) $K$Means clustering, (\romannumeral4) training of a DGC method, and (\romannumeral5) inference of a DGC baseline.

Compared with DGC approaches (e.g., \textit{GraphEncoder}, \textit{S3GC}, and \textit{MAGI}), which involve a time-consuming learning procedure (e.g., gradient descent to iteratively update model parameters), all the SC methods except \textit{IBHCD} and \textit{CDBH} can achieve significantly better efficiency in most cases.
In particular, the overall runtime of \textit{IBHCD} and \textit{CDBH} may even be longer than those of DGC baselines, indicting their limitations of low efficiency.
ED is the major bottleneck of most SC algorithms.

For ASCENT, the derivation of node-wise corrections $\{ \tau_i \}$ would not significantly increase the overall runtime compared with other SC baselines. Surprisingly, ASCENT can even achieve better overall efficiency than some DCSC baselines (e.g., \textit{ISC} on \textbf{ogbn-Protein}, \textbf{RoadCA}, and \textbf{LiveJournal}), due to its faster ED. In summary, ASCENT can still ensure high efficiency close to that of conventional SC methods and much better than that of DGC approaches.

\begin{figure}
\centering
 \begin{minipage}{0.24\linewidth}
 \subfigure[\textbf{NMI}$\uparrow$]{
  \includegraphics[width=\textwidth,trim=25 15 22 22,clip]{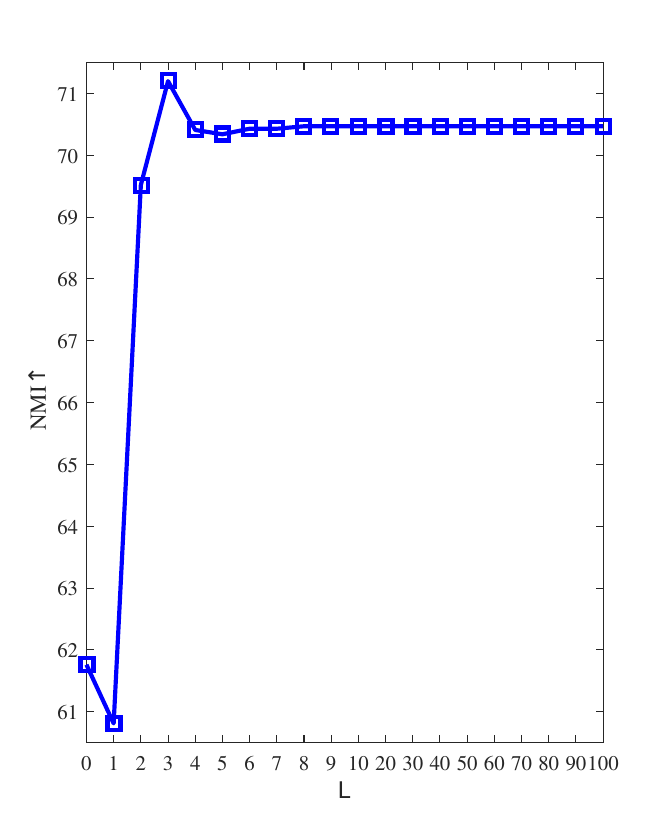}}
 \end{minipage}
 \begin{minipage}{0.24\linewidth}
 \subfigure[\textbf{AC}$\uparrow$]{
  \includegraphics[width=\textwidth,trim=25 15 22 22,clip]{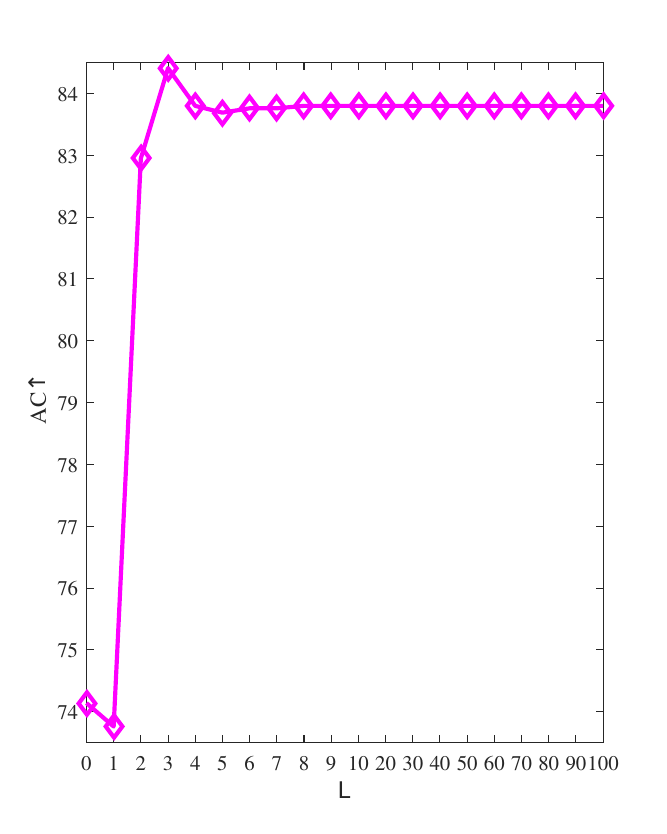}}
 \end{minipage}
 \begin{minipage}{0.24\linewidth}
 \subfigure[\textbf{Conductance}$\downarrow$]{
  \includegraphics[width=\textwidth,trim=23 15 22 22,clip]{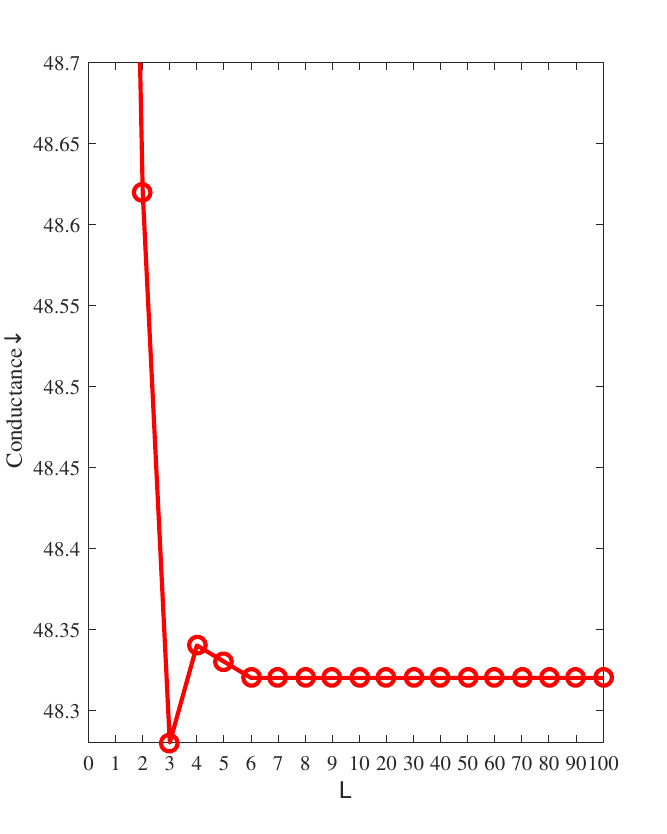}}
 \end{minipage}
\caption{Parameter analysis of $L$ on \textbf{Caltech}.}\label{Fig:Param-L}
\end{figure}

\begin{figure}
\centering
 \begin{minipage}{0.325\linewidth}
 \subfigure[\textbf{NMI}$\uparrow$]{
  \includegraphics[width=\textwidth,trim=58 10 46 5,clip]{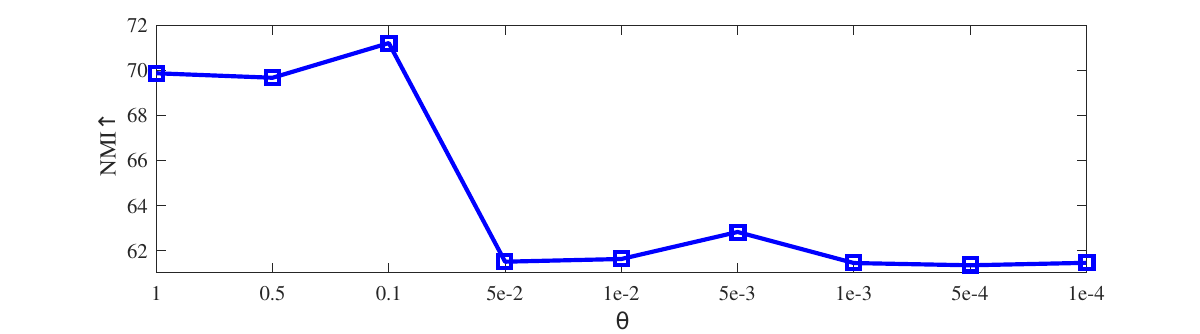}}
 \end{minipage}
 \begin{minipage}{0.325\linewidth}
 \subfigure[\textbf{AC}$\uparrow$]{
  \includegraphics[width=\textwidth,trim=58 10 46 10,clip]{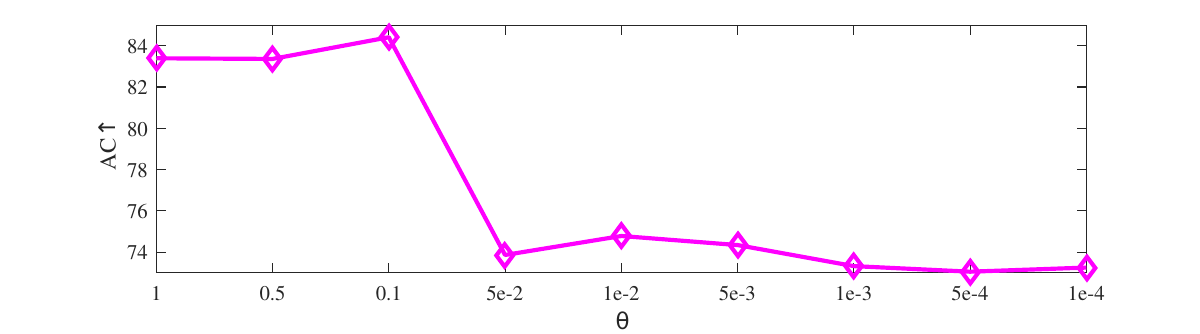}}
 \end{minipage}
 \begin{minipage}{0.325\linewidth}
 \subfigure[\textbf{Conductance}$\downarrow$]{
  \includegraphics[width=\textwidth,trim=58 10 46 10,clip]{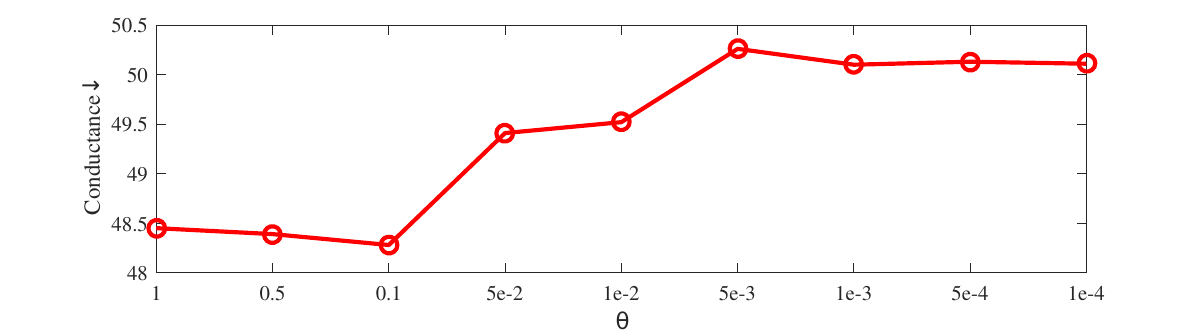}}
 \end{minipage}
\caption{Parameter analysis of $\theta$ on \textbf{Caltech}.}\label{Fig:Param-theta}
\end{figure}

\subsection{Parameter Analysis}
We further tested the effect of $L$ for ASCENT. Example results on \textbf{Caltech} are visualized in Fig.~\ref{Fig:Param-L}, where we adjusted $L \in \{ 0, 1, \cdots, 10, 20, \cdots, 100\}$.
When $L = 0$ or is small (e.g, $L \le 2$ for \textbf{Caltech}), ASCENT suffers from poor clustering quality, which can be improved as $L$ increases. Consistent with our case study in Fig.~\ref{Fig:Case}, the clustering quality gradually converges due to over-smoothing, with the increase of $L$. In particular, ASCENT achieves the best quality with a small $L$ (i.e., $L=3$ for \textbf{Caltech}) before it reduces to conventional DCSC methods with a constant $\tau$.

Parameter analysis results of $\theta$ on \textbf{Caltech} are visualized in Fig.~\ref{Fig:Param-theta}, where we set $\theta \in \{ 1, 0.5, 0.1, 5\times10^{-2}, 10^{-2}, 5\times10^{-3}, 10^{-3}, 5\times10^{-4}, 10^{-4}\}$.
According to Fig.~\ref{Fig:Param-theta}, the quality of ASCENT may be sensitive to the setting of $\theta$.
We determined parameter settings of $\{ \theta, L \}$ for ASCENT on all the datasets based on the unsupervised average conductance metric. Table~\ref{Tab:Eva-Param} summarizes the suggested settings.

\begin{table}[]\scriptsize
\caption{Recommended parameter settings of ASCENT.}\label{Tab:Eva-Param}
\centering
\begin{tabular}{l|llllllll}
\hline
 & \textbf{LFR-1} & \textbf{LFR-2} & \textbf{SBM-1} & \multicolumn{1}{l|}{\textbf{SBM-2}} & \textbf{Caltech} & \textbf{Simmons} & \textbf{PolBlogs} & \textbf{BioGrid} \\ \hline
$\theta$ & 1.0 & 1.0 & 0.01 & \multicolumn{1}{l|}{0.01} & 0.1 & 0.2 & 0.05 & 0.1 \\
$L$ & 4 & 2 & 1 & \multicolumn{1}{l|}{1} & 3 & 3 & 1 & 5 \\ \hline
 & \textbf{PPI} & \textbf{Wiki} & \multicolumn{1}{l|}{\textbf{BlogCatalog}} & \textbf{ogbn-Protein} & \textbf{RoadCA} & \textbf{LiveJournal} & \textbf{ogbn-Products} & \textbf{Orkut} \\ \hline
$\theta$ & 0.1 & 0.1 & \multicolumn{1}{l|}{0.1} & 0.1 & 0.1 & 1e-3 & 1e-3 & 0.01 \\
$L$ & 5 & 4 & \multicolumn{1}{l|}{7} & 2 & 50 & 50 & 60 & 50 \\ \hline
\end{tabular}
\end{table}

\section{Conclusion}\label{Sec:Con}
In this paper, we presented a condition-free analysis of DCSC from a pure spectral view. Different from prior studies based on random graph models (e.g., DCSBM), our analysis gives an upper bound for the number of mis-clustered nodes w.r.t. the optimal solution to average conductance minimization, without relying on additional conditions and random graph models. Our analysis also includes quantities that reveal impacts of (\romannumeral1) degree heterogeneity and (\romannumeral2) weakness of clustering structures to the clustering quality of DCSC.
Inspired by GNNs and their over-smoothing effect, we proposed ASCENT, a simple yet effective extension of DCSC. It follows a node-wise correction scheme that can assign nodes $\{ v_i \}$ with different corrections $\{ \tau_i \}$ via the GNN mean aggregation. In particular, ASCENT reduces to conventional DCSC methods when encountering over-smoothing. Some early stages before over-smoothing can potentially result in better clustering quality for ASCENT.
We conclude this paper by discussing limitations of this paper and future research directions.

\textbf{Clustering on Attributed Graphs}. As described in Section~\ref{Sec:Prob}, we followed the conventional problem statement of SC, where topology is the only available information source (i.e., without attributes), due to the complicated corrections between graph topology and attributes. In our future work, we will analyze DCSC on attributed graphs with the consideration of the possible inconsistency between the two sources \cite{newman2016structure,qin2018adaptive,wang2020gcn,qin2021dual}.

\textbf{Learnable Node-wise Corrections $\{ \tau_i \}$}. In ASCENT, we still set the node-wise corrections $\{ \tau_i \}$ by manually adjusting hyper-parameters $\{ \theta, L\}$. We plan to extend it to a more advanced setting with learnable node-wise corrections $\{ \tau_i \}$ and provide theoretical analysis combined with recent advances in spectral-based GNNs \cite{bo2021beyond,dong2021adagnn,qin2025efficient,qin2025infraredgp,chen2026spectral}.

\textbf{Other Graph Clustering Objectives}. In this study, we only considered the average conductance minimization objective (or equivalently normalized cut minimization) as defined in \textbf{Definition~\ref{Def:Cond-Min}}. SC can be considered as an approximated algorithm for a relaxed version of this objective. Some graph clustering algorithms may consider other objectives (e.g., ratio-cut minimization \cite{von2007tutorial} and modularity maximization \cite{newman2006modularity,qin2024pre,qin2024towards}) that have relations close to average conductance minimization. We also plan to further extend our analysis to these objectives.

\appendix
\section{Proof of Theorem~\ref{Th:Struc}}\label{App:Th-Struc}

Recall that we have ${\bf{ F}} := [\lambda_1 {\bf{u}}_1, \cdots, \lambda_{K+1} {\bf{u}}_{K+1}]$ as the rearranged spectral embedding of \textit{ISC}; ${\bf{G}} \in \mathbb{R}^{N \times K}$ encodes membership of the optimal partition $({\hat S}_1, \cdots, {\hat S}_K)$, where ${\bf{G}}_{ir} = \sqrt{d_i / \mu({\hat S}_r)}$ if $v_i \in {\hat S}_r$ and ${\bf{G}}_{ir} = 0$ otherwise.
Let ${\bf{U}} := [{\bf{u}}_1, \cdots, {\bf{u}}_{K+1}]$ be the arrangement of eigenvectors without reweighting and then ${\bf{F}} = {\bf{U}} {\bf{\Lambda}}$, where ${\bf{\Lambda }}: = {\mathop{\rm diag}\nolimits} ({\lambda _1}, \cdots ,{\lambda _{K + 1}})$ is a diagonal matrix w.r.t the leading $(K+1)$ eigenvalues.

For each $r \in [K]$, it is obvious that $|{{\bf{G}}_{:,r}}{|_2} = 1$. In particular, one can derive ${\bf{G}}_{:,r}$ via the linear combination of eigenvectors $\{ {\bf{u}}_1, \cdots, {\bf{u}}_N \}$ w.r.t. eigenvalues $\{ \lambda_1, \cdots, \lambda_N \}$ of regularized graph Laplacian ${\bf{L}}_{\tau}$. Namely, we have ${{\bf{G}}_{:,r}} = \sum\nolimits_{i = 1}^N {{h_{ir}}{\lambda _i}{{\bf{u}}_i}}$ with $\{ h_{ir} \}$ as corresponding weights for the linear combination. Let ${{{\bf{\hat u}}}_r}: = \sum\nolimits_{i = 1}^{K + 1} {{h_{ir}}{\lambda _i}{{\bf{u}}_i}}$.
Then, we have the following derivation:
\begin{align*}
    {\bf{G}}_{:,r}^T{ ({\bf{I}}_N - {\bf{L}}_\tau) }{{\bf{G}}_{:,r}} &= \sum\limits_{i = 1}^N {{\bf{G}}_{ir}^2}  - 2\sum\limits_{({v_i},{v_j}) \in E} {\frac{{{{\bf{G}}_{ir}}{{\bf{G}}_{jr}}}}{{\sqrt {{d_i} + \tau } \sqrt {{d_j} + \tau } }}} \\
    & = \sum\limits_{({v_i},{v_j}) \in E} {[{{(\frac{1}{{\sqrt {{d_i}} }}{{\bf{G}}_{ir}})}^2} - \frac{{2{{\bf{G}}_{ir}}{{\bf{G}}_{jr}}}}{{ \sqrt {{d_i} + \tau}} {\sqrt {{d_j} + \tau}} } + {{(\frac{1}{{ \sqrt {{d_j}} }}{{\bf{G}}_{jr}})}^2}]} \\
    & = \sum\limits_{({v_i},{v_j}) \in E({\hat S_r},V\backslash {\hat S_r})} {\frac{1}{{\mu ({{\hat S}_r})}}}  + \sum\limits_{({v_i},{v_j}) \in E({\hat S_r})} {\frac{2}{{\mu ({{\hat S}_r})}}(1 - \frac{{\sqrt {{d_i}{d_j}} }}{{\sqrt {{d_i} + \tau } \sqrt {{d_j} + \tau } }})} \\
    & \le \frac{{|E({\hat S_r},V\backslash {\hat S_r})|}}{{\mu ({\hat S_r})}} + \frac{{2|E({\hat S_r})|}}{{\mu ({\hat S_r})}}(1 - \frac{{{d_{\min }}}}{{{d_{\max }} + \tau}}) \\
    & = \phi ({{\hat S}_r}) + \frac{{\mu ({{\hat S}_r}) - |E({{\hat S}_r},V\backslash {{\hat S}_r})|}}{{\mu ({{\hat S}_r})}}(1 - \frac{{{d_{\min }}}}{{{d_{\max }} + \tau }})\\
    & = 1 - (1 - \phi ({\hat S}_r)) \frac{d_{\min}}{d_{\max}+ \tau}.
\end{align*}
Note that we also have
\begin{align*}
    {\bf{G}}_{:,r}^T{ ({\bf{I}}_{N} - {\bf{L}}_\tau) }{{\bf{G}}_{:,r}} & = {(\sum\limits_{i = 1}^N {{h_{ir}}{\lambda _i}{{\bf{u}}_i}} )^T}{({\bf{I}}_{N} - {\bf{L}}_\tau)}(\sum\limits_{i = 1}^N {{h_{ir}}{\lambda _i}{{\bf{u}}_i}} ) \\
    &= \sum\limits_{i = 1}^N {h_{ir}^2\lambda _i^2}  - \sum\limits_{i = 1}^N {h_{ir}^2\lambda _i^3} \\
    & = \sum\limits_{i = 1}^N {h_{ir}^2\lambda _i^2(1 - {\lambda _i})} \\
    & \ge \sum\limits_{i = K + 2}^N {h_{ir}^2\lambda _i^2(1 - {\lambda _i})} \\
    & \ge (1 - {\lambda _{K + 2}})\sum\limits_{i = K + 2}^N {h_{ir}^2\lambda _i^2}.
\end{align*}
By combining the aforementioned two inequalities, one can obtain
\begin{equation*}
    ||{{{\bf{\hat u}}}_r} - {{\bf{G}}_{:,r}}||_2^2 = \sum\limits_{i = K + 2}^N {h_{ir}^2\lambda _i^2}  \le \frac{1}{{1 - {\lambda _{K + 2}}}}[1 - (1 - \phi ({{\hat S}_r}))\frac{{{d_{\min }}}}{{{d_{\max }} + \tau }}],
\end{equation*}
for each $\hat S_r$ ($r \in \{ 1, \cdots, K\}$).
Let ${\bf {\hat F}} := [{\bf{\hat u}}_1, \cdots, {\bf{\hat u}}_K]$. For the whole graph $G$, we have
\begin{equation}\label{Eq:Aux-0}
    ||{\bf{\hat F}} - {\bf{G}}||_F^2 = \sum\limits_{r = 1}^K {||{{{\bf{\hat u}}}_r} - {{\bf{G}}_{:,r}}||_2^2}  \le \frac{K}{{1 - {\lambda _{K + 2}}}}[1 - (1 - {\bar \phi}_K (G))\frac{{{d_{\min }}}}{{{d_{\max }} + \tau }}] = K\Psi_{\rm{ISC}}.
\end{equation}
This inequality can be rewritten as
\begin{equation}\label{Eq:Aux-1}
    ||{\bf{\hat F}} - {\bf{G}}||_F^2 = ||{\bf{U \Lambda H}} - {\bf{G}}||_F^2 \le K\Psi_{\rm{ISC}},
\end{equation}
where ${\bf{H}} \in \mathbb{R}^{{(K+1)} \times K}$ is a matrix of rearranging weights $\{ h_{ir} \}$ in the linear combination.

\textbf{Claim.}
Let ${\bf{Z}} := {\bf{\hat F}} - {\bf{G}} = [{\bf{z}}_1, \cdots, {\bf{z}}_N]$. Then, ${\bf{Z}}^T {\bf{Z}} = {\bf{I}}_K - {\bf{H}}^T {\bf{\Lambda }}^T {\bf{\Lambda }} {\bf{H}}$.

\textbf{Proof}.
First, we have
\begin{equation}\label{Eq:Aux-2}
    {\bf{z}}_r^T{{\bf{z}}_s} = {({{{\bf{\hat u}}}_r} - {{\bf{G}}_{:,r}})^T}({{{\bf{\hat u}}}_s} - {{\bf{G}}_{:,s}}) = {\bf{\hat u}}_r^T{{{\bf{\hat u}}}_s} - {\bf{\hat u}}_r^T{{\bf{G}}_{:,s}} - {\bf{G}}_{:,r}^T{{{\bf{\hat u}}}_s} + {\bf{G}}_{:,r}^T{{\bf{G}}_{:,s}}.
\end{equation}
Consider each term in (\ref{Eq:Aux-2}). We further have
\begin{equation*}
    {\bf{\hat u}}_r^T{{{\bf{\hat u}}}_s} = {(\sum\limits_{t = 1}^{K + 1} {{h_{rt}}{\lambda _t}{{\bf{u}}_t}} )^T}(\sum\limits_{t = 1}^{K + 1} {{h_{st}}{\lambda _t}{{\bf{u}}_t}} ) = \sum\limits_{t = 1}^{K + 1} {{h_{rt}}{h_{st}}\lambda _t^2};
\end{equation*}
\begin{equation*}
    {\bf{\hat u}}_r^T{{\bf{G}}_{:,s}} = {(\sum\limits_{t = 1}^{K + 1} {{h_{rt}}{\lambda _t}{{\bf{u}}_t}} )^T}{(\sum\limits_{t = 1}^N {{h_{st}}{\lambda _t}{{\bf{u}}_t}} )^T} = \sum\limits_{t = 1}^{K + 1} {{h_{rt}}{h_{st}}\lambda _t^2};
\end{equation*}
\begin{equation*}
    {\bf{G}}_{:,r}^T{{{\bf{\hat u}}}_s} = {(\sum\limits_{t = 1}^N {{h_{rt}}{\lambda _t}{{\bf{u}}_t}} )^T}{(\sum\limits_{t = 1}^{K + 1} {{h_{st}}{\lambda _t}{{\bf{u}}_t}} )^T} = \sum\limits_{t = 1}^{K + 1} {{h_{rt}}{h_{st}}\lambda _t^2};
\end{equation*}
\begin{equation*}
{\bf{G}}_{:,r}^T{{\bf{G}}_{:,s}} = \left\{ {\begin{array}{*{20}{l}}
{1,r = s}\\
{0,r \ne s}
\end{array}} \right..
\end{equation*}
Therefore, one can rewrite (\ref{Eq:Aux-2}) as
\begin{equation*}
{\bf{z}}_r^T{{\bf{z}}_s} = \left\{ {\begin{array}{*{20}{l}}
{1 - \sum\nolimits_{t = 1}^{K + 1} {{h_{rt}}{h_{st}}\lambda _t^2} ,{\rm{~}}r = s}\\
{ - \sum\nolimits_{t = 1}^{K + 1} {{h_{rt}}{h_{st}}\lambda _t^2} ,{\rm{~~~~}} r \ne s}
\end{array}} \right.,
\end{equation*}
which corresponds to the following matrix form:
\begin{equation*}
    {{\bf{Z}}^T}{\bf{Z}} = {{\bf{I}}_K} - {{\bf{H}}^T}{{\bf{\Lambda }}^T}{\bf{\Lambda H}}.
\end{equation*}
This completes the proof of \textbf{Claim}.

Consider the singular value decomposition of ${\bf{H}} \in \mathbb{R}^{(K+1) \times K}$ denoted as ${\bf{H}} = {\bf{X}} {\bf{\Sigma}} {\bf{Y}}^T$, where ${\bf{X}} \in \mathbb{R}^{(K+1) \times (K+1)}$ and ${\bf{Y}} \in \mathbb{R}^{K \times K}$ are orthogonal matrices; ${\bf{\Sigma}}$ is a $(K+1) \times K$ matrix, with only the $(i, i)$-th entries as non-zero values $\{ \sigma_i \}$. 
One can rewrite (\ref{Eq:Aux-1}) as
\begin{align*}
    K\Psi_{\rm{ISC}}  & \ge ||{\bf{\hat F}} - {\bf{G}}||_F^2 = ||{\bf{Z}}||_F^2 \ge ||{{\bf{Z}}^T}{\bf{Z}}|{|_F} \\
    & = ||{{\bf{I}}_K} - {{\bf{H}}^T}{{\bf{\Lambda }}^2}{\bf{H}}|{|_F} \\
    & = ||{{\bf{I}}_K} - {\bf{Y\Sigma }}{{\bf{X}}^T}{{\bf{\Lambda }}^2}{\bf{X\Sigma }}{{\bf{Y}}^T}|{|_F} \\
    & = || {\bf{Y}} ({{\bf{I}}_K} - {\bf{\Sigma }}{{\bf{X}}^T}{{\bf{\Lambda }}^2}{\bf{X\Sigma }}) {{\bf{Y}}^T} |{|_F} \\
    & = ||{{\bf{I}}_K} - {\bf{\Sigma }}{{\bf{X}}^T}{{\bf{\Lambda }}^2}{\bf{X\Sigma }}|{|_F}.
\end{align*}
Note that ${({\bf{\Lambda X\Sigma }})_{ij}} = {\lambda _i}{\sigma _j}{{\bf{X}}_{ij}}$, which leads to
\begin{equation*}
    {({\bf{\Sigma }}{{\bf{X}}^T}{{\bf{\Lambda }}^2}{\bf{X\Sigma }})_{ij}} = {\sigma _i}{\sigma _j}\sum\nolimits_{l = 1}^{K + 1} {\lambda _l^2{{\bf{X}}_{li}}{{\bf{X}}_{lj}}}  \le {\sigma _i}{\sigma _j}\lambda _1^2\sum\nolimits_{l = 1}^{K + 1} {{{\bf{X}}_{li}}{{\bf{X}}_{lj}}}  = {\sigma _i}{\sigma _j}\lambda _1^2 = \left\{ {\begin{array}{*{20}{c}}
{\lambda _1^2\sigma _i^2,i = j}\\
{0,{\rm{~~~~~~}}x \ne j}
\end{array}} \right..
\end{equation*}
We further have
\begin{equation}\label{Eq:Aux__0}
    K\Psi_{\rm{ISC}} \ge ||{{\bf{I}}_K} - {\bf{\Sigma }}{{\bf{X}}^T}{{\bf{\Lambda }}^2}{\bf{X\Sigma }}||_F \ge {[\sum\limits_{r = 1}^K {(1 - \lambda _1^2\sigma _r^2)^2} ]^{1/2}} \ge {[\sum\limits_{r = 1}^K {(1 - \sigma _r^2)^2} ]^{1/2}} = || {\bf{I}}_{(K+1) \times K} - {\bf{\Sigma}}^2 ||_F,
\end{equation}
where ${{\bf{I}}_{(K + 1) \times K}}: = {[{{\bf{I}}_K},{{\bf{0}}_{K \times 1}}]^T}$.
Let ${\bf{O}} := {\bf{X}} {{\bf{I}}_{(K + 1) \times K}} {\bf{Y}}^T$, which is an orthogonal matrix. One can obtain
\begin{align*}
    ||{\bf{FO}} - {\bf{G}}|{|_F} & = ||{\bf{U\Lambda O}} - {\bf{G}}|{|_F} \\
    & = ||{\bf{U\Lambda O}} - {\bf{\hat F}} + {\bf{\hat F}} - {\bf{G}}|{|_F} \\
    & \le ||{\bf{U\Lambda O}} - {\bf{\hat F}}|{|_F} + ||{\bf{\hat F}} - {\bf{G}}|{|_F} \\
    & = ||{\bf{U}}({\bf{\Lambda O}} - {\bf{\Lambda H}})|{|_F} + ||{\bf{\hat F}} - {\bf{G}}|{|_F} \\
    & \le ||{\bf{\Lambda O}} - {\bf{\Lambda H}}|{|_F} + ||{\bf{\hat F}} - {\bf{G}}|{|_F}
\end{align*}
For the first term, one can derive the following inequality:
\begin{align*}
    ||{\bf{\Lambda O}} - {\bf{\Lambda H}}|{|_F}  & = || {\bf{\Lambda }} {\bf{X}}({{\bf{I}}_{(K + 1) \times K}} - {\bf{\Sigma }}){{\bf{Y}}^T}|{|_F} \\
    & = ||{\bf{\Lambda }} ({{\bf{I}}_{(K + 1) \times K}} - {\bf{\Sigma }})|{|_F} \\
    & = \sqrt { {\lambda_1}^2 {{(1 - {\sigma _1})}^2} +  \cdots  + {\lambda_K}^2 {{(1 - {\sigma _K})}^2}} \\
    & \le \sqrt { {\lambda_1}^2 [{{(1 - {\sigma _1})}^2} +  \cdots  + {{(1 - {\sigma _K})}^2}] } \\
    & \le {\lambda_1} \sqrt {{{(1 - {\sigma _1})}^2}{{(1 + {\sigma _1})}^2} +  \cdots  + {{(1 - {\sigma _K})}^2}{{(1 + {\sigma _K})}^2}} \\
    & = {\lambda_1} \sqrt {{{(1 - \sigma _1^2)}^2} +  \cdots  + {{(1 - \sigma _K^2)}^2}} \\
    & = {\lambda_1} ||{\bf{I}} - {{\bf{\Sigma }}^2}|{|_F} \le {\lambda _1}K\Psi_{\rm{ISC}} {\rm{~(By~applying~(\ref{Eq:Aux__0}))}}.
\end{align*}
By combining (\ref{Eq:Aux-0}), we finally have
\begin{equation*}
    ||{\bf{FO}} - {\bf{G}}|{|_F} \le ||{\bf{\Lambda O}} - {\bf{\Lambda H}}|{|_F} + ||{\bf{\hat F}} - {\bf{G}}|{|_F} \le {\lambda _1}K\Psi_{\rm{ISC}}  + \sqrt {K\Psi_{\rm{ISC}}}.
\end{equation*}
If $K\Psi_{\rm{ISC}} \le 1$, then
\begin{equation*}
    ||{\bf{F}} {\bf{O}} - {\bf{G}}||_F \le (1 + \lambda_1) \sqrt{K \Psi_{\rm{ISC}}}.
\end{equation*}
If $K\Psi_{\rm{ISC}} > 1$, we also have
\begin{equation*}
    ||{\bf{F}} {\bf{O}} - {\bf{G}}||_F \le (1 + \lambda_1) K \Psi_{\rm{ISC}}.
\end{equation*}
This completes the proof of \textbf{Theorem~\ref{Th:Struc}}.

\section{Proof of Lemma~\ref{le1}}\label{App:Lemma1}
Since ${\bf{o}}_r$ and ${\bf{o}}_t$ ($\forall r \ne t$) are orthogonal, we have
\begin{equation*}
    ||{{\bf{o}}_r} - {{\bf{o}}_t}||_2^2 = {({{\bf{o}}_r} - {{\bf{o}}_t})^T}({{\bf{o}}_r} - {{\bf{o}}_t}) = ||{{\bf{o}}_r}||_2^2 + ||{{\bf{o}}_r}||_2^2 - {\bf{o}}_t^T{{\bf{o}}_r} - {\bf{o}}_r^T{{\bf{o}}_t} = 2.
\end{equation*}
According to Lemma 1 in \cite{Mizutani}, one can obtain
\begin{equation}\label{Eq:Aux-3}
    {\left\| {\frac{{\bf{a}}}{{||{\bf{a}}|{|_2}}} - {\bf{b}}} \right\|_2} \le 2||{\bf{a}} - {\bf{b}}|{|_2},
\end{equation}
for vectors ${\bf{a}}, {\bf{b}} \in \mathbb{R}^{K+1}$.
By (\ref{Eq:Dist}), (\ref{Eq:Struc}), and \textbf{Theorem~\ref{Th:Struc}}, we have the following derivation:
\begin{align*}
    g({{\hat S}_1}, \cdots ,{{\hat S}_K};{o_1}, \cdots ,{o_K}) &= \sum\limits_{r = 1}^K {\sum\limits_{{v_i} \in {{\hat S}_r}} {{d_i}\left\| {{{{\bf{\tilde F}}}_{i,:}} - {{\bf{o}}_r}} \right\|_2^2} }\\
    &= \sum\limits_{r = 1}^K {\sum\limits_{{v_i} \in {{\hat S}_r}} {{d_i}\left\| {\frac{{{{\bf{F}}_{i,:}}}}{{||{{\bf{F}}_{i,:}}||_2}} - {{\bf{o}}_r}} \right\|_2^2} } \\
    &= \sum\limits_{r = 1}^K {\sum\limits_{{v_i} \in {{\hat S}_r}} {{d_i}\left\| {\frac{{{{\bf{F}}_{i,:}}\sqrt {\mu ({{\hat S}_r})/{d_i}} }}{{||{{\bf{F}}_{i,:}}\sqrt {\mu ({{\hat S}_r})/{d_i}} ||_2}} - {{\bf{o}}_r}} \right\|_2^2} }\\
    &\le 4\sum\limits_{r = 1}^K {\sum\limits_{{v_i} \in {{\hat S}_r}} {{d_i}\left\| {{{\bf{F}}_{i,:}}\sqrt {\frac{{\mu ({{\hat S}_r})}}{{{d_i}}}}  - {{\bf{o}}_r}} \right\|_2^2} } {\rm{(By~applying~(\ref{Eq:Aux-3}))}}\\
    &=4\sum\limits_{r = 1}^K {\sum\limits_{{v_i} \in {{\hat S}_r}} {\mu ({{\hat S}_r})} } \left\| {{{\bf{F}}_{i,:}} - \sqrt {\frac{{{d_i}}}{{\mu ({{\hat S}_r})}}} {{\bf{o}}_r}} \right\|_2^2\\
    &\le 4\sum\limits_{r = 1}^K {\sum\limits_{{v_i} \in {{\hat S}_r}} {{\mu _{\max }}} } \left\| {{{\bf{F}}_{i,:}} - \sqrt {\frac{{{d_i}}}{{\mu ({{\hat S}_r})}}} {{\bf{o}}_r}} \right\|_2^2\\
    &\le 4{(1 + {\lambda _1})^2}{\mu _{\max }}K\Psi_{\rm{ISC}}.
\end{align*}
This completes the proof of \textbf{Lemma~\ref{le1}}.

\section{Proof of Lemma~\ref{le2}}\label{App:Lemma2}

Based on the definition of $H_{\pi, r}$, we have
\begin{equation}\label{Eq:Aux-4}
    \sum\limits_{r = 1}^K {\mu ({H_{\pi ,r}}\Delta {{\hat S}_r})}  = 2\sum\limits_{r = 1}^K {\sum\limits_{1 \le t \le K,t \ne \pi (r)} {\mu ({C_r} \cap {{\hat S}_t})} }.
\end{equation}
For any vectors ${\bf{a}}$, ${\bf{b}}$, ${\bf{c}} \in \mathbb{R}^{K+1}$, \cite{peng} have proved that
\begin{equation}\label{Eq:Aux_}
    ||{\bf{a}}||_2^2 \ge \frac{1}{2}||{\bf{a}} - {\bf{b}}||_2^2 - ||{\bf{b}}||_2^2,
\end{equation}
\begin{equation*}
    ||{\bf{a}} - {\bf{c}}|{|_2} \ge ||{\bf{b}} - {\bf{c}}|{|_2} \Rightarrow ||{\bf{a}} - {\bf{c}}|{|_2} \ge ||{\bf{a}} - {\bf{b}}|{|_2}/2.
\end{equation*}
Since we always have $||{{\bf{o}}_t} - {{\bf{c}}_r}||_2^2 \ge ||{{\bf{o}}_{\pi (r)}} - {{\bf{c}}_r}||_2^2$ following the definition of $\pi(r)$, we have
\begin{equation}\label{Eq:Aux-5}
    ||{{\bf{o}}_t} - {{\bf{c}}_r}||_2^2 \ge \frac{1}{4}||{{\bf{o}}_t} - {{\bf{o}}_{\pi (r)}}||_2^2 = \frac{1}{2} - \frac{1}{2}{\bf{o}}_t^T{{\bf{o}}_{\pi (r)}} = \left\{ {\begin{array}{*{20}{c}}
{0,{\rm{~~~~}}t = \pi (r)}\\
{1/2,{\rm{~}}t \ne \pi (r)}
\end{array}} \right..
\end{equation}

\begin{align*}
    & {\mathop{\rm COST}\nolimits} ({C_1}, \cdots ,{C_K}) = \sum\limits_{r = 1}^K {\sum\limits_{{v_i} \in {C_r}} {{d_i}||{{{\bf{\tilde F}}}_{i,:}} - {{\bf{c}}_r}||_2^2} }\\
    & \ge \sum\limits_{r = 1}^K {\sum\limits_{1 \le t \le K,t \ne \pi (r)} {\sum\limits_{{v_i} \in {C_r} \cap {{\hat S}_t}} {{d_i}||{{{\bf{\tilde F}}}_{i,:}} - {{\bf{c}}_r}||_2^2} } } \\
    & \ge \sum\limits_{r = 1}^K {\sum\limits_{1 \le t \le K,t \ne \pi (r)} {\sum\limits_{{v_i} \in {C_r} \cap {{\hat S}_t}} {{d_i}[\frac{{||{{\bf{o}}_t} - {{\bf{c}}_r}||_2^2}}{2} - ||{{{\bf{\tilde F}}}_{i,:}} - {{\bf{o}}_t}||_2^2]} } } {\rm{~(By~applying~(\ref{Eq:Aux_}))}} \\
    & \ge \sum\limits_{r = 1}^K {\sum\limits_{1 \le t \le K,t \ne \pi (r)} {\sum\limits_{{v_i} \in {C_r} \cap {{\hat S}_t}} {\frac{{{d_i}}}{4}} } }  - \sum\limits_{r = 1}^K {\sum\limits_{1 \le t \le K,t \ne \pi (r)} {\sum\limits_{{v_i} \in {C_r} \cap {{\hat S}_t}} {{d_i}||{{{\bf{\tilde F}}}_{i,:}} - {{\bf{o}}_t}||_2^2} } } {\rm{~(By~applying~(\ref{Eq:Aux-5}))}} \\
    & = \frac{1}{4}\sum\limits_{r = 1}^K {\sum\limits_{1 \le t \le K,t \ne \pi (r)} {\mu ({C_r} \cap {{\hat S}_t})} }  - \sum\limits_{r = 1}^K {\sum\limits_{1 \le t \le K,t \ne \sigma (r)} {\sum\limits_{{v_i} \in {C_r} \cap {{\hat S}_t}} {{d_i}||{{{\bf{\tilde F}}}_{i,:}} - {{\bf{o}}_t}||_2^2} } } \\
    & \ge \frac{1}{4}\sum\limits_{r = 1}^K {\sum\limits_{1 \le t \le K,t \ne \pi (r)} {\mu ({C_r} \cap {{\hat S}_t})} }  - \sum\limits_{r = 1}^K {\sum\limits_{{v_i} \in {{\hat S}_r}} {{d_i}||{{{\bf{\tilde F}}}_{i,:}} - {{\bf{o}}_r}||_2^2} } \\
    & \ge \frac{1}{4}\sum\limits_{r = 1}^K {\sum\limits_{1 \le t \le K,t \ne \pi (r)} {\mu ({C_r} \cap {{\hat S}_t})} }  - 4{(1 + {\lambda _1})^2}{\mu _{\max }}K{\Psi _{{\rm{ISC}}}} {\rm{~(By~Lemma~\ref{le1}'s~proof)}}\\
    & \ge \frac{1}{8}\sum\limits_{r = 1}^K {\mu ({H_{\pi ,r}}\Delta {{\hat S}_r})}  - 4{(1 + {\lambda _1})^2}{\mu _{\max }}K{\Psi _{{\rm{ISC}}}} {\rm{~(By~applying~(\ref{Eq:Aux-4}))}}.
\end{align*}
We then have
\begin{align*}
    \sum\limits_{r = 1}^K {\mu ({H_{\pi ,r}}\Delta {{\hat S}_r})}  & \le 8{\mathop{\rm COST}\nolimits} ({C_1}, \cdots ,{C_K}) + 32{(1 + {\lambda _1})^2}{\mu _{\max }}K{\Psi _{{\rm{ISC}}}} \\
    & \le 32{(1 + {\lambda _1})^2}\alpha {\mu _{\max }}K{\Psi _{{\rm{ISC}}}} + 32{(1 + {\lambda _1})^2}{\mu _{\max }}K{\Psi _{{\rm{ISC}}}} {\rm{~(By~Theorem~\ref{th2-1})}}\\
    & = 32(1 + \alpha ){(1 + {\lambda _1})^2}{\mu _{\max }}K{\Psi _{{\rm{ISC}}}}.
\end{align*}
This completes the proof of \textbf{Lemma~\ref{le2}}.

\section{Proof of Theorem~\ref{Th:Main}}\label{App:Th-Main}
Following the same definitions of $\pi (r)$ and $H_{\pi, r}$ in \textbf{Lemma~\ref{le2}}, $\pi$ may be (\romannumeral1) a permutation or (\romannumeral2) not be a permutation, where the latter the case results in higher upper bound for the mis-clustered volume $\sum\nolimits_{r = 1}^K {\mu ({C_r}\Delta {{\hat S}_r})}$ compared with the former one. Hence, the final upper bound of \textbf{Theorem~\ref{Th:Main}} is dominated by the case when $\pi$ is not a permutation regarding $({\bf{o}}_1, \cdots, {\bf{o}}_K)$.

When $\pi$ is not a permutation, there exist (\romannumeral1) an index $r \in [K]$ such that $H_{\pi, r} = \emptyset$ and (\romannumeral2) different indices $a, b \in [K]$ such that $\pi(a) = \pi(b) = t$ and $t \ne r$. We adopt a strategy similar to the proof of Theorem 2 in \cite{macgregor2022tighter} to construct a permutation from $\pi$ by iteratively taking the following steps. In each iteration, we introduce a function $\pi' : [K] \to [K]$ given $\pi$, where we
\begin{itemize}
    \item set $\pi' (c) = r$ if $c = a$;
    \item set $\pi '(c) = \pi (c)$ for any other indices $c \in [K] \backslash \{ a \}$.
\end{itemize}
In particular, the construction of $\pi'$ (i.e., one iteration) can reduce the number of $H_{\pi, r} = \emptyset$ by one. By iteratively taking the aforementioned steps, we can finally remove all these empty sets and get a permutation.
Let $\bar \mu_{\pi} := \sum\nolimits_{r = 1}^K {\mu ({H_{\pi, r}}\Delta {{\hat S}_r})}$. By $\textbf{Lemma~\ref{le2}}$, we have
\begin{equation}\label{Eq:Aux-6}
    \bar \mu_{\pi} \le 32 (1+\alpha)(1+\lambda_1)^2\mu_{\max}K \Psi_{\rm{ISC}}.
\end{equation}
Let $\pi^*$ be the final permutation we obtained. Our goal is to derive the upper bound of $\bar \mu_{\pi^*} = \sum\nolimits_{r = 1}^K {\mu ({C_r}\Delta {{\hat S}_r})}$.
Before that, we first consider the upper bound of the difference between $\bar \mu_{\pi'}$ and $\bar \mu_{\pi}$ in one iteration, i.e.,
\begin{equation*}
    \bar \mu_{\pi'} - \bar \mu_{\pi} = [\underbrace {\mu ({H_{\pi ',r}}\Delta {{\hat S}_r}) - \mu ({H_{\pi ,r}}\Delta {{\hat S}_r})}_{: = {p_1}}] + [\underbrace {\mu ({H_{\pi ',t}}\Delta {{\hat S}_t}) - \mu ({H_{\pi ,t}}\Delta {{\hat S}_t})}_{{:= p_2}}].
\end{equation*}
Fig.~\ref{Fig:Proof-Exp} illustrates the relations among $H_{\pi, r} \Delta \hat S_r$, $H_{\pi, t} \Delta \hat S_t$, $H_{\pi',r} \Delta \hat S_r$, and $H_{\pi',t} \Delta \hat S_t$.
According to the signs of $p_1$ and $p_2$, there exist the following $4$ cases.

\begin{figure}
  \centering
  \includegraphics[width=0.7\linewidth, trim=18 22 20 18,clip]{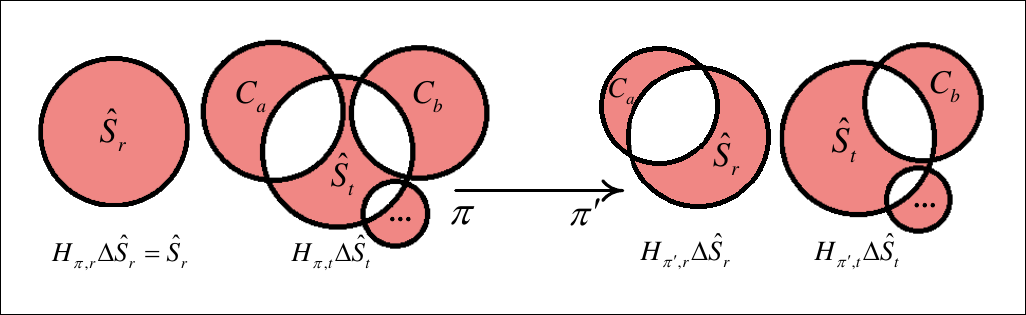}
  \caption{Illustrations of $H_{\pi, r} \Delta \hat S_r$, $H_{\pi, t} \Delta \hat S_t$, $H_{\pi',r} \Delta \hat S_r$, and $H_{\pi',t} \Delta \hat S_t$ denoted by red areas.
  }\label{Fig:Proof-Exp}
\end{figure}

(\textbf{Case 1}) $p_1 < 0$ and $p_2 < 0$. We can directly derive $\bar \mu_{\pi'} - \bar \mu_{\pi} \le 0$, which has no contribution to the upper bound.

(\textbf{Case 2}) $p_1 > 0$ and $p_2 < 0$. We have
\begin{align*}
    {{\bar \mu }_{\pi '}} - {{\bar \mu }_\pi } & \le {p_1} + |{p_2}| = {p_1} - {p_2} \\
    & = \mu ({C_a}\backslash {{\hat S}_r}) - \mu ({C_a} \cap {{\hat S}_r}) + \mu ({C_a}\backslash {{\hat S}_t}) - \mu ({C_a} \cap {{\hat S}_t}) \\
    & \le \mu ({C_a}\backslash {{\hat S}_r}) + \mu ({C_a}\backslash {{\hat S}_t}) \\
    & \le 2\max \{ \mu ({C_a}\backslash {{\hat S}_r}),\mu ({C_a}\backslash {{\hat S}_t})\}.
\end{align*}
Without loss of generality, we assume that $\max \{ \mu ({C_a}\backslash {{\hat S}_r}),\mu ({C_a}\backslash {{\hat S}_t})\}  = \mu ({C_a}\backslash {{\hat S}_t})$. Since each iteration moves a selected set $C_a$ at most once, the total contribution of case 2 to the upper bound is at most
\begin{equation*}
    \sum\limits_{t = 1}^K {\sum\limits_{{C_a} \in {H_{\pi ,t}}} {2\mu ({C_a}\backslash {{\hat S}_t})} }  \le 2\sum\limits_{t = 1}^K {\mu ({H_{\pi ,t}}\Delta {{\hat S}_t})}  \le 64(1 + \alpha ){(1 + {\lambda _1})^2}{\mu _{\max }}K{\Psi _{{\rm{ISC}}}},
\end{equation*}
by applying \textbf{Lemma~\ref{le2}}.

(\textbf{Case 3}) $p_1 > 0$ and $p_2 > 0$. Following the assumption $\max \{ \mu ({C_a}\backslash {{\hat S}_r}),\mu ({C_a}\backslash {{\hat S}_t})\}  = \mu ({C_a}\backslash {{\hat S}_t})$ in case 2, we have
\begin{equation}\label{Eq:Aux-7}
    \mu ({C_a}\backslash {{\hat S}_t}) \ge \mu ({C_a}\backslash {{\hat S}_r}) \Rightarrow \mu ({C_a}) - \mu ({C_a}\backslash {{\hat S}_t}) \le \mu ({C_a}) - \mu ({C_a}\backslash {{\hat S}_r}) \Rightarrow \mu ({C_a} \cap {{\hat S}_r}) \ge \mu ({C_a} \cap {{\hat S}_t}).
\end{equation}
Since $p_2 > 0$, $\mu ({C_a} \cap {{\hat S}_t}) - \mu ({C_a}\backslash {{\hat S}_t}) > 0 \Rightarrow \mu ({C_a} \cap {{\hat S}_t}) > \mu ({C_a}\backslash {{\hat S}_t})$. Combining it with (\ref{Eq:Aux-7}), we then have
\begin{equation}\label{Eq:Aux__}
    \mu ({C_a} \cap {{\hat S}_t}) + \mu ({C_a} \cap {{\hat S}_r}) \ge \mu ({C_a}\backslash {{\hat S}_t}) + \mu ({C_a} \cap {{\hat S}_t}) \Rightarrow 2\mu ({C_a} \cap {{\hat S}_r}) \ge \mu ({C_a}\backslash {{\hat S}_t}) \ge \mu ({C_a}\backslash {{\hat S}_r}).
\end{equation}
For the difference between $\bar \mu_{\pi'}$ and $\bar \mu_{\pi}$, we have the following derivations:
\begin{align*}
    {{\bar \mu }_{\pi '}} - {{\bar \mu }_\pi } & = {p_1} + {p_2} \\
    & = \mu ({C_a}\backslash {{\hat S}_r}) - \mu ({C_a} \cap {{\hat S}_r}) + \mu ({C_a} \cap {{\hat S}_t}) - \mu ({C_a}\backslash {{\hat S}_t}) \\
    & \le 2\mu ({C_a} \cap {{\hat S}_r}) - \mu ({C_a} \cap {{\hat S}_r}) + \mu ({C_a} \cap {{\hat S}_t}) {\rm{~(By~applying~(\ref{Eq:Aux__}))}}\\
    & = \mu ({C_a} \cap {{\hat S}_r}) + \mu ({C_a} \cap {{\hat S}_t})\\
    & \le 2\mu ({C_a} \cap {{\hat S}_r}) {\rm{~(By~applying~(\ref{Eq:Aux-7}))}}\\
    & \le 2\mu (\hat S_r).
\end{align*}

(\textbf{Case 4}) $p_1 < 0$ and $p_2 > 0$. We can also derive (\ref{Eq:Aux-7}) for this case based on the assumption $\max \{ \mu ({C_a}\backslash {{\hat S}_r}),\mu ({C_a}\backslash {{\hat S}_t})\}  = \mu ({C_a}\backslash {{\hat S}_t})$ in case 2. Then, we can obtain
\begin{align*}
    {{\bar \mu }_{\pi '}} - {{\bar \mu }_\pi } & \le |{p_1}| + {p_2} = - {p_1} + {p_2} \\
    & = \mu ({C_a} \cap {{\hat S}_r}) - \mu ({C_a}\backslash {{\hat S}_r}) + \mu ({C_a} \cap {{\hat S}_t}) - \mu ({C_a}\backslash {{\hat S}_t}) \\
    & \le \mu ({C_a} \cap {{\hat S}_r}) + \mu ({C_a} \cap {{\hat S}_t}) \le 2 \mu ({C_a} \cap {{\hat S}_r}) {\rm{~(By~applying~(\ref{Eq:Aux-7}))}} \\
    & \le 2 \mu (\hat S_r).
\end{align*}
Note that for each $r \in \Omega := \{ r: H_{\pi, r} = \emptyset\}$, we have
\begin{equation}\label{Eq:Aux-8}
    \mu ({H_{\pi ,r}}\Delta {{\hat S}_r}) = \mu ({{\hat S}_r}).
\end{equation}
Since each iteration can reduce the number of $H_{\pi,r} = \emptyset$ by one, the number of iterations to derive a feasible permutation is bounded by $|\Omega|$.
Then, we can derive the following total contribution of case 3 and case 4 to the final upper bound:
\begin{equation*}
    \sum\limits_{r \in \Omega } {2\mu ({{\hat S}_r})}  = 2\sum\limits_{r \in \Omega } {\mu ({H_{\pi ,r}}\Delta {{\hat S}_r})}  \le 2\sum\limits_{r = 1}^K {\mu ({H_{\pi ,r}}\Delta {{\hat S}_r})}  \le 64(1 + \alpha ){(1 + {\lambda _1})^2}{\mu _{\max }}K{\Psi _{{\rm{ISC}}}},
\end{equation*}
by applying \textbf{Lemma~\ref{le2}}, with $\mu_{\max} := \max \{ \mu (\hat S)\}$. Considering the four cases, we have
\begin{align*}
    {\bar \mu _{{\pi ^*}}} & \le {\bar \mu _\pi } + 128(1 + \alpha ){(1 + {\lambda _1})^2}{\mu _{\max }}K{\Psi _{{\rm{ISC}}}} \\
    & \le 160(1 + \alpha ){(1 + {\lambda _1})^2}{\mu _{\max }}K{\Psi _{{\rm{ISC}}}} {\rm{~(By~applying (\ref{Eq:Aux-6}))}}.
\end{align*}
Namely, we have proved that
\begin{equation*}
    \sum\nolimits_{r = 1}^K {\mu ({C_r}\Delta {{\hat S}_r})}  \le 160(1 + \alpha ){(1 + {\lambda _1})^2}{\mu _{\max }}K{\Psi _{{\rm{ISC}}}}.
\end{equation*}
For the mis-clustered volume $\sum\nolimits_{r = 1}^K {\mu ({C_r}\Delta {{\hat S}_r})}$, one can derive that ${d_{\min }}\sum\nolimits_{r = 1}^K {|{C_r}\Delta {{\hat S}_r}|}  \le \sum\nolimits_{r = 1}^K {\mu ({C_r}\Delta {{\hat S}_r})}$.
Let $\mathcal{M}$ denote the set of mis-clustered nodes w.r.t. the optimal partition. We finally have
\begin{equation*}
    |\mathcal{M}| \le \sum\nolimits_{r = 1}^K {|{C_r}\Delta {{\hat S}_r}|}  \le \sum\nolimits_{r = 1}^K {\mu ({C_r}\Delta {{\hat S}_r})}  \le 160(1 + \alpha ){(1 + {\lambda _1})^2}{\tilde \mu}K{\Psi _{{\rm{ISC}}}},
\end{equation*}
with $\tilde \mu := \mu_{\max} / d_{\min}$.
This completes the proof of \textbf{Theorem~\ref{Th:Main}}.

\section{Proof of Proposition~\ref{Th:ASCENT}}\label{App:Th-ASCENT}
Similar to the derivation of the first inequality in Appendix~\ref{App:Th-Struc}, we have the following derivations for ASCENT:
\begin{align*}
    {\bf{G}}_{:,r}^T({{\bf{I}}_N} - {{\bf{L}}_\tau}){{\bf{G}}_{:,r}} &= \sum\limits_{i = 1}^N {{\bf{G}}_{ir}^2}  - 2\sum\limits_{({v_i},{v_j}) \in E} {\frac{{{{\bf{G}}_{ir}}{{\bf{G}}_{jr}}}}{{\sqrt {{d_i} + {\tau _i}} \sqrt {{d_j} + {\tau _j}} }}}, \\
    & = \sum\limits_{({v_i},{v_j}) \in E} {[{{(\frac{1}{{\sqrt {{d_i}} }}{{\bf{G}}_{ir}})}^2} - \frac{{2{{\bf{G}}_{ir}}{{\bf{G}}_{jr}}}}{{\sqrt {{d_i} + {\tau _i}} \sqrt {{d_i} + {\tau _i}} }} + [{{(\frac{1}{{\sqrt {{d_j}} }}{{\bf{G}}_{jr}})}^2}]} \\
    & = \sum\limits_{({v_i},{v_j}) \in E({S_r},V\backslash {{\hat S}_r})} {\frac{1}{{\mu ({{\hat S}_r})}}}  + \sum\limits_{({v_i},{v_j}) \in E({{\hat S}_r})} {\frac{2}{{\mu ({{\hat S}_r})}}} (1 - \frac{{\sqrt {{d_i}{d_j}} }}{{\sqrt {{d_i} + {\tau _i}} \sqrt {{d_j} + {\tau _j}} }}).
\end{align*}
Let ${\hat \tau _r}: = \max \{ {\tau _i}|{v_i} \in {\hat S_r}\}$, i.e., the maximum corrections among nodes in cluster $\hat S_r$. Then, we have the following derivations:
\begin{align*}
    {\bf{G}}_{:,r}^T({{\bf{I}}_N} - {{\bf{L}}_\tau }){{\bf{G}}_{:,r}} & \le \sum\limits_{({v_i},{v_j}) \in E({S_r},V\backslash {{\hat S}_r})} {\frac{1}{{\mu ({{\hat S}_r})}}}  + \sum\limits_{({v_i},{v_j}) \in E({{\hat S}_r})} {\frac{2}{{\mu ({{\hat S}_r})}}} (1 - \frac{{\sqrt {{d_i}{d_j}} }}{{\sqrt {{d_i} + {{\hat \tau }_r}} \sqrt {{d_j} + {{\hat \tau }_r}} }}) \\
    & \le \frac{{|E({{\hat S}_r},V\backslash {{\hat S}_r})|}}{{\mu ({{\hat S}_r})}} + \frac{{2|E({{\hat S}_r})|}}{{\mu ({{\hat S}_r})}}(1 - \frac{{{d_{\min }}}}{{{d_{\max }} + {{\hat \tau }_r}}}) \\
    & = \phi ({{\hat S}_r}) + \frac{{\mu ({{\hat S}_r}) - |E({{\hat S}_r},V\backslash {{\hat S}_r})|}}{{\mu ({{\hat S}_r})}}(1 - \frac{{{d_{\min }}}}{{{d_{\max }} + {{\hat \tau }_r}}}) \\
    & = \phi ({{\hat S}_r}) + (1 - \phi ({{\hat S}_r}))(1 - \frac{{{d_{\min }}}}{{{d_{\max }} + {{\hat \tau }_r}}}) \\
    & = 1 - (1 - \phi ({{\hat S}_r}))\frac{{{d_{\min }}}}{{{d_{\max }} + {{\hat \tau }_r}}}.
\end{align*}
Following the same definitions of $\{ h_{ir} \}$, $\{ {\bf{\hat u}}_r \}$, and ${\bf{\hat F}}$ in Appendix~\ref{App:Th-Struc}, we further have
\begin{equation*}
    ||{{\bf{\hat u}}_r} - {{\bf{G}}_{:,r}}||_2^2 = \sum\limits_{i = K + 2}^N {h_{ir}^2\lambda _i^2 \le \frac{1}{{1 - {\lambda _{K + 2}}}}[1 - (1 - \phi ({{\hat S}_r}))\frac{{{d_{\min }}}}{{{d_{\max }} + {{\hat \tau }_r}}}]},
\end{equation*}
for each cluster $\hat S_r$. For the whole graph, we have
\begin{align*}
    ||{\bf{\hat F}} - {\bf{G}}||_F^2 & = \sum\limits_{r = 1}^K {||{{{\bf{\hat u}}}_r} - {{\bf{G}}_{:,r}}||_2^2} \\
    & \le \frac{K}{{1 - {\lambda _{K + 2}}}} - \frac{1}{{1 - {\lambda _{K + 2}}}}\sum\limits_{r = 1}^K {(\frac{{{d_{\min }}}}{{{d_{\max }} + {{\hat \tau }_r}}} - \frac{{{d_{\min }}}}{{{d_{\max }} + {{\hat \tau }_r}}}\phi ({{\hat S}_r}))} \\
    & = K\left[ {\frac{1}{{1 - {\lambda _{K + 2}}}} - \frac{1}{{K(1 - {\lambda _{K + 2}})}}\sum\limits_{r = 1}^K {\frac{{{d_{\min }}(1 - \phi ({{\hat S}_r}))}}{{{d_{\max }} + {{\hat \tau }_r}}}} } \right] \\
    & = K \Psi_{\rm{AST}}.
\end{align*}
By following the same strategy of the proof of \textbf{Theorem~\ref{Th:Struc}}, \textbf{Theorem~\ref{th2-1}}, \textbf{Lemma~\ref{le2}}, and \textbf{Theorem~\ref{Th:Main}}, we can complete the proof of \textbf{Proposition~\ref{Th:ASCENT}}.

\printcredits

\bibliographystyle{elsarticle-num}



\end{document}